\theoremstyle{plain}
\newtheorem{theorem}{Theorem}
\newtheorem{lemma}{Lemma}
\newtheorem{proposition}{Proposition}
\theoremstyle{definition}
\newtheorem{definition}{Definition}
\newtheorem{assumption}{Assumption}
\theoremstyle{remark}
\newtheorem{remark}{\textit{Remark}}
\theoremstyle{plain}
\newenvironment{customthm}[1]
  {\innercustomthm}
  {\endinnercustomthm}
\newif\ifcomment
\newif\iffinalversion
\newcommand{\norm}[1]{\lVert#1\rVert}
\newcommand{\gen}{G}
\newcommand{\E}{\mathbb{E}}
\newcommand{\indi}[1]{\mathds{1}}
\newcommand{\R}{\mathbb{R}}
\newcommand{\Prob}{\mathbb{P}}
\newcommand{\indibrac}[1]{\mathds{1}_{\{#1\}}}
\newcommand{\abs}[1]{\left\lvert#1\right\rvert}
\newcommand{\Thebrac}[1]{\Theta\left(#1\right)}
\newcommand{\obrac}[1]{o\left(#1\right)}
\newcommand{\Obrac}[1]{O\left(#1\right)}
\newcommand{\sysr}[1]{#1^{(r)}}
\newcommand{\ceil}[1]{\left\lceil#1\right\rceil}
\newcommand{\metapolicyfull}{\textsc{Join-Requesting-Server}}
\newcommand{\metapolicy}{JRS}
\newcommand{\ve}[1]{\bm{#1}}
\newcommand{\Kinner}{\mathcal{K}}
\newcommand{\Kouter}{\mathcal{K}}
\newcommand{\Kka}{E}
\newcommand{\jobspace}{\mathcal{I}} 
\newcommand{\Kmax}{K_{\max}} 
\newcommand{\Arr}{\Lambda}
\newcommand{\arr}{\lambda}
\newcommand{\ser}{\mu}
\newcommand{\vzero}{\ve{0}}
\newcommand{\vei}{\ve{e}}
\newcommand{\budget}{\epsilon}
\newcommand{\low}{L}
\newcommand{\high}{H}
\newcommand{\vek}{{\ve{k}}}
\newcommand{\veK}{{\ve{K}}}
\newcommand{\sysbar}[1]{\overline{#1}}
\newcommand{\reqrate}{\sysbar{\lambda}}
\newcommand{\isp}{\mathcal{P}}
\newcommand{\ssp}{\overline{\mathcal{P}}}
\newcommand{\sslp}{\overline{\mathcal{LP}}}
\newcommand{\policy}{\sigma}
\newcommand{\nact}{N}
\newcommand{\costp}{C}
\newcommand{\nactsmall}{n}
\newcommand{\vea}{\ve{a}}
\newcommand{\sumrate}[2]{\sum_{(#1) \in E(#2)}}
\newcommand{\allL}{{1:L}}
\newcommand{\sumL}{\sum_{\ell=1}^L}
\newcommand{\toklim}{\eta_{\max}}
\newcommand{\rate}[2]{\gamma_{#2,(#1)}}
\newcommand{\ratetjm}[2]{\widetilde{\gamma}_{#2,(#1)}} % true job model
\newcommand{\ratetottjm}[1]{\widetilde{\gamma}_{#1}}
\newcommand{\lamest}{\lambda}
\newcommand{\muest}{\mu}
\newcommand{\lamtrue}{\widetilde{\lambda}}
\newcommand{\mutrue}{\widetilde{\mu}}
\newcommand{\syshat}[1]{\widehat{#1}}
\newcommand{\ratetot}[1]{\gamma_{#1}}
\newcommand{\vj}{\zeta}
\newcommand{\vevj}{\ve{\zeta}}
\newcommand{\vetok}{\ve{\eta}}
\newcommand{\toK}{\eta} 
\newcommand{\vetoK}{\ve{\eta}} 
\newcommand{\tmix}{\tau_{\text{mix}}}
\newcommand{\tvj}{v} % total virtual jobs
\newcommand{\tvJ}{V} 
\newcommand{\tof}{y} % total overflow jobs
\newcommand{\toF}{Y}
\newcommand{\ttk}{z} % total number of tokens
\newcommand{\ttK}{Z}
\newcommand{\dtvjg}{dv} 
\newcommand{\dtof}{dy}
\newcommand{\vestt}{\ve{u}}
\newcommand{\vestT}{\ve{U}}
\newcommand{\sumka}{\sum_{\vek',\vea}}
\newcommand{\sumkae}{\sum_{(\vek',\vea)\in E(\vek)}}
\newcommand{\trfr}{u}
\newcommand{\servmax}{t_{\max}}
\keywords{stochastic bin-packing, large service systems, policy conversion, asymptotic optimality}
\title{Near-Optimal Stochastic Bin-Packing in Large Service Systems with Time-Varying Item Sizes}
\date{}
\author{Yige Hong}
\email{yigeh@andrew.cmu.edu}
\affiliation{%
  \institution{Carnegie Mellon University}
  \city{Pittsburgh}
  \state{Pennsylvania}
  \country{USA}
  \postcode{15213}
}
\author{Qiaomin Xie}
\email{qiaomin.xie@wisc.edu}
\affiliation{%
  \institution{University of Wisconsin-Madison}
  \city{Madison}
  \state{Wisconsin}
  \country{USA}
  \postcode{53706}
}
\author{Weina Wang}
\email{weinaw@cs.cmu.edu}
\affiliation{%
  \institution{Carnegie Mellon University}
   \city{Pittsburgh}
  \state{Pennsylvania}
  \country{USA}
  \postcode{15213}
}
\begin{document}

\begin{abstract}
    In modern computing systems, jobs' resource requirements often vary over time.
Accounting for this temporal variability during job scheduling is essential for meeting performance goals.
However, theoretical understanding on how to schedule jobs with time-varying resource requirements is limited.
Motivated by this gap, we propose a \emph{new setting} of the stochastic bin-packing problem in service systems that allows for \emph{time-varying} job resource requirements, also referred to as `item sizes' in traditional bin-packing terms.
In this setting, a job or `item' must be dispatched to a server or `bin' upon arrival.
Its resource requirement may vary over time while in service, following a Markovian assumption.
Once the job's service is complete, it departs from the system.
Our goal is to minimize the expected number of active servers, or `non-empty bins', in steady state.

Under our problem formulation, we develop a job dispatch policy, named \metapolicyfull\ (\metapolicy).
Broadly, \metapolicy\ lets each server independently evaluate its current job configuration and decide whether to accept additional jobs, balancing the competing objectives of maximizing throughput and minimizing the risk of resource capacity overruns.
The \metapolicy\ dispatcher then utilizes these individual evaluations to decide which server to dispatch each arriving job to.
The theoretical performance guarantee of \metapolicy\ is in the asymptotic regime where the job arrival rate scales large linearly with respect to a scaling factor $r$.
We show that \metapolicy\ achieves an additive optimality gap of $O(\sqrt{r})$ in the objective value, where the optimal objective value is $\Theta(r)$.
When specialized to constant job resource requirements, our result improves upon the state-of-the-art $o(r)$ optimality gap.
Our technical approach highlights a novel policy conversion framework that reduces the policy design problem into a single-server problem.

\end{abstract}

\maketitle

\section{Introduction}\label{sec:intro}
\subsection{Background and Motivation}
In modern computing systems, a job often takes the form of a virtual machine (VM) or a container \cite{GoogleCloudVM_23,ApacheMesosContainer_23}.
Such a job comes with a resource requirement, such as a certain number of CPUs and amount of memory, while in service.
Each server in the system offers a limited amount of these resources.
When a job arrives at the system, the job dispatch policy needs to decide which server the job should be assigned to, given the job's resource requirement and servers' current job configurations.
This job scheduling problem can be approached as a \emph{Stochastic Bin-Packing (SBP) problem}, where jobs are viewed as items, job resource requirements as item sizes, and servers as bins.
A traditional SBP setting considers a finite set of jobs that arrive online but do not depart from the system.
The objective is to minimize the number of servers that have jobs on them, or `non-empty bins', subject to the resource capacities of the servers.
SBP, with a rich history in operations research and theoretical computer science \citep{CouWeb_86,CouWeb_90,CsiJohKen_06}, is a field of continuous developments and advancements \cite{GupRad_20,FreBan_19,AyyDabKha_22}.

To incorporate job \emph{departures} into the problem formulation, a setting referred to as \emph{stochastic bin-packing in service systems} has been proposed recently \citep{Sto_13,StoZho_13, StoZho_15, Sto_17, StoZho_21, GhaZhoSri_14}.
In this setting, jobs not only arrive but also depart over time.
More specifically, jobs are assumed to arrive according to Poisson processes, and each job is assumed to stay in the system for an exponentially distributed service time.
The service time of a job remains unknown until the job departs.
Before delving further into SBP in service systems, it is worth mentioning that there is a parallel thread of research on the so-called dynamic bin-packing problem that also handles job departures (see, e.g., \citep{CofGarJoh_83,LiTanCai_14,BucFaiMel_21}, and references therein), but it is primarily from a worst-case analysis perspective.
Additionally, the virtual machine scheduling problem with objectives different from minimizing the number of active servers has also been widely studied (see, e.g., \cite{MagSriYin_12,MagSri_13,MagSriYin_14,XieDonLu_15,PsyGha_18,PsyGha_19, PsyGha_21,PsyGha_22}).

For SBP in service systems, the goal is to design a job dispatch policy $\policy$ that minimizes the expected number of active servers in \emph{steady state}, denoted as $\nact(\policy)$.
The \emph{optimality gap} of a policy~$\policy$ is defined as $\nact(\policy)-\nact(\policy^*)$, where $\policy^*$ is the optimal policy.
Since SBP in service systems aims to model today's large-scale computing systems, the optimality gap of a policy is usually studied in the regime where the total job arrival rate becomes large.
As we scale up the total job arrival rate linearly with a scaling factor, $r$, the optimal value $\nact(\policy^*)$ can be shown to be $\Theta(r)$.
\footnote{We use the standard Bachmann–Landau notation. Consider two functions $a(r)$ and $b(r)$, 
where $b(r)$ is positive for large enough $r$. Then $a = \Obrac{b}$ if $\limsup_{r\to+\infty}\frac{|a(r)|}{b(r)}<\infty$; $a = \obrac{b}$ if $\lim_{r\to+\infty} \frac{a(r)}{b(r)} = 0$; $a = \Thebrac{b}$ if $a = \Obrac{b}$ and $b = \Obrac{a}$.}
Therefore, we say a policy is asymptotically optimal if its optimality gap is $o(r)$.

The optimality gap for SBP in service systems has been characterized in the line of work \cite{Sto_13,StoZho_13,StoZho_15,Sto_17,StoZho_21}.
In particular, \citet{Sto_13}, \citet{StoZho_13} propose greedy policies that are asymptotically optimal, but the scheduler that executes these policies needs to know detailed state information, which is in a high-dimensional space.
\citet{StoZho_15,StoZho_21} later develop policies that use much less state information and achieve $\Theta(r)$ (with an arbitrarily small constant) and $o(r)$ optimality gaps, respectively.

While prior work on SBP in service systems has provided substantial insights into scheduling virtual-machine-type jobs, it primarily focuses on job resource requirements that remain fixed over time.
However, in modern computing systems, jobs' resource requirements often vary over time \cite{ReiTumGan_12,DelKoz_14,LoCheGov_15,TirBarDen_20,RzaFinSwi_20,BasDenRza_21}.
For example, when a job involves providing user-facing services, the instantaneous requirement on CPUs and memory depends on the service demand, which is subject to fluctuation over time \cite{DelKoz_14,LoCheGov_15}.
Time-varying job resource requirements pose significant challenges in optimizing system efficiency, particularly when aiming to minimize the number of active servers, thereby improving server utilization.
It is pertinent to note that low utilization has been recognized as a significant obstacle to the continued scaling of today's computing systems.

Motivated by this gap, in this paper, we propose a \emph{new setting of stochastic bin-packing in service systems} that allows job resource requirements, or `item sizes', to vary over time.

\subsection{Problem Formulation: A Simplified Version}
\begin{figure}
    \centering
    \begin{minipage}{0.47\textwidth}
    \begin{subfigure}[b]{1\textwidth}
        \centering
        \includegraphics[height=3.25cm]{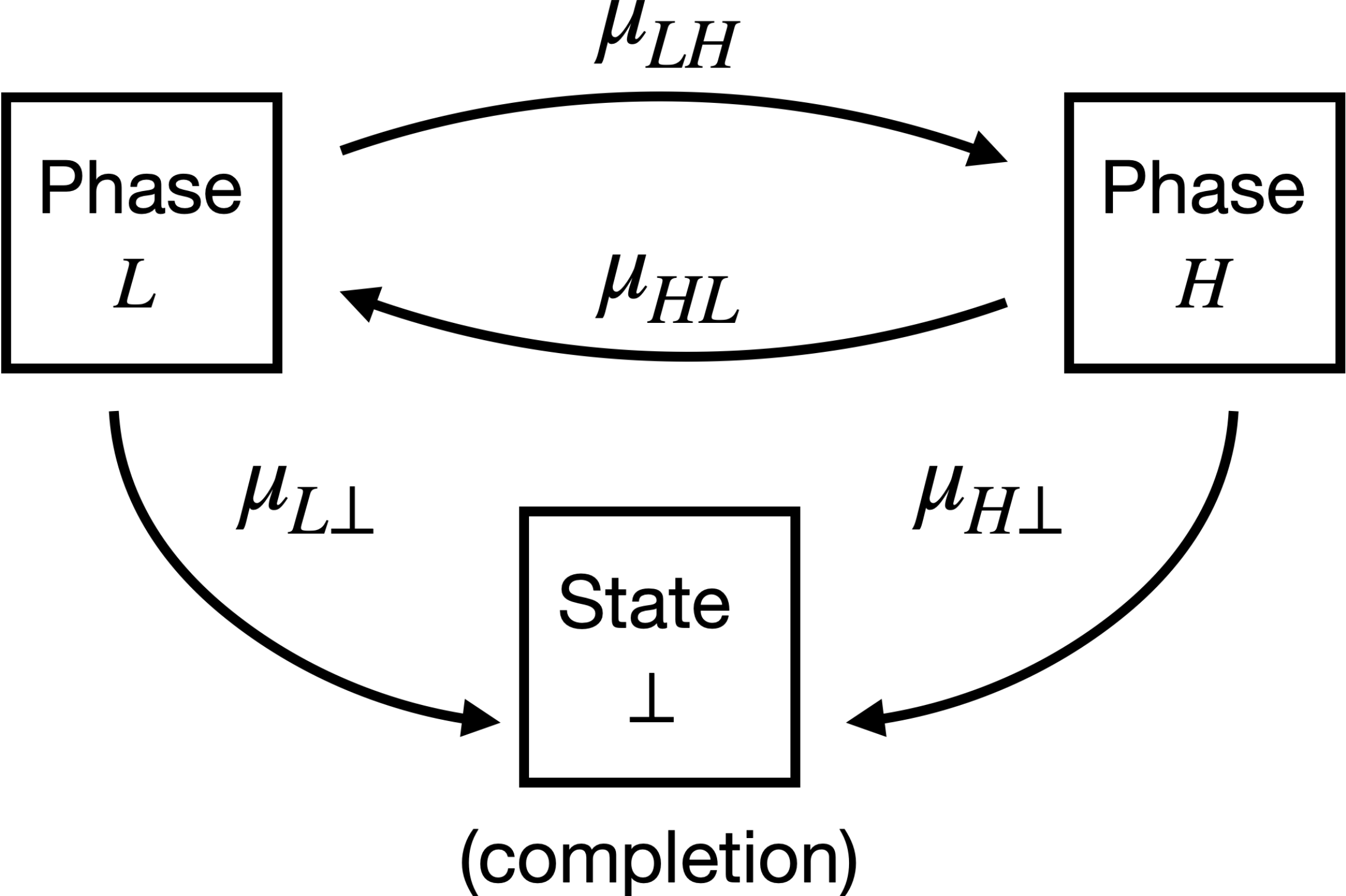}
        \caption{A simplified version of our job model. Each job in service is in either an $\low$ phase or an $\high$ phase, associated with low and high resource requirements, respectively.  When the job is completed, it is said to be in the state $\perp$.  The job transitions between the two phases while in service until it is completed, following a continuous-time Markov chain with rates $\mu_{ii'}$, $i,i'\in\{\low,\high,\perp\}$.
        }
        \label{fig:job-model}
    \end{subfigure}
    \end{minipage}
    \quad
    \begin{minipage}{0.49\textwidth}
    \begin{subfigure}[b]{1\textwidth}
        \centering
        \includegraphics[height=3.55cm, clip, trim = 0 0 0 0]{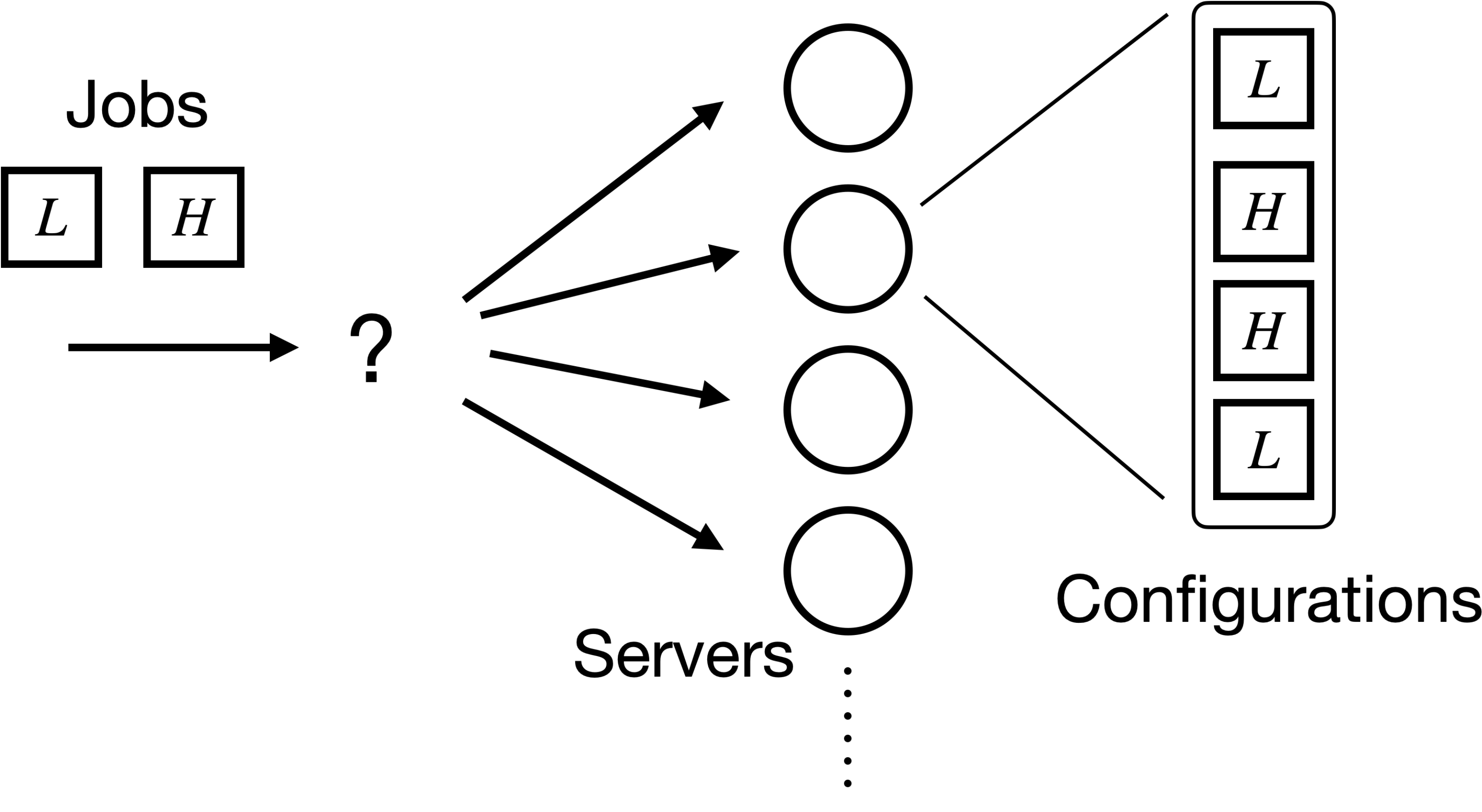}
        \caption{A system model with an infinite number of identical servers. As soon as a job arrives to the system, the job needs to be dispatched to a server to start service immediately.  The configuration of each server is the number of jobs in each phase on the server.
        }
        \label{fig:datacenter-model}
    \end{subfigure}
    \end{minipage}
    \vspace{-1ex}
    \caption{Job model and system model.}
    \vspace{-1ex}
\end{figure}

We first describe our job model that features time-varying resource requirements.
For ease of exposition, here we present a simplified setting where each job in service can be in one of the two \emph{phases}, $\low$ and $\high$, associated with \emph{low} and \emph{high} resource requirements, respectively.
Our full model, presented in Section~\ref{sec:model}, allows \emph{more than two phases and more than one type of resources}.
To model the temporal variation in the resource requirement, we assume that each job transitions between the two phases while in service until it is completed, following a continuous-time Markov chain illustrated in Figure~\ref{fig:job-model}.
We use an absorbing state $\perp$ to denote that the job is completed.
A job can initialize in either phase $\low$ or phase $\high$, and they are referred to as \emph{type} $\low$ and \emph{type} $\high$ jobs, respectively.
Note that the setting where a job's resource requirement does not vary over time is a special case of our job model where the transition rates between phases are~$0$.

We consider a system with an infinite number of identical servers, illustrated in Figure~\ref{fig:datacenter-model}.
We assume jobs arrive according to a Poisson process as existing work on SBP in service systems.
In particular, we assume that the two types of jobs arrive at the system following two independent Poisson processes, with rates $\Arr_{\low}$ and $\Arr_{\high}$, respectively;
i.e., the interarrival times of type $\low$ and type $\high$ jobs are i.i.d.\ following exponential distributions with means $1/\Arr_{\low}$ and $1/\Arr_{\high}$, respectively.
Upon arrival, a job needs to be dispatched to a server according to a \emph{dispatch policy}, and the job enters service immediately.
The goal is to design a policy $\policy$ to minimize the expected number of \emph{active servers} (servers currently serving a positive number of jobs) in steady state, denoted as $\nact(\policy)$.

As job resource requirements vary over time, situations can arise where the total job resource requirement on a server exceeds the server's resource capacity, resulting in resource contention.
Modern computing systems can tolerate temporary overruns of resource capacity, though they often incur performance degradation or other costs \cite{GoogleCloud_23,ApacheMesos_23}.
In our model, we incorporate a rate at which the cost accumulates due to resource contention.
We first represent the state of a server by its \emph{configuration}, a vector $\vek = (k_{\low}, k_{\high})$ where $k_{\low}$ and $k_{\high}$ are the numbers of jobs in phase $\low$ and phase $\high$, respectively.
Then a \emph{cost rate function} $h(\cdot)$ maps a server's configuration to a rate of cost.
For example, the cost rate can be proportional to how much the total resource requirement of the jobs on the server exceeds this server's resource capacity. 
A more general definition of $h(\cdot)$ is given in \Cref{sec:model}. 
We assume that the resource contention does not affect the transition rates in the job model nor prompt jobs to be terminated, suitable for the application scenarios where the contention level is low and manageable.
Let $\costp(\policy)$ denote the average expected cost rate per server.

Now our bin-packing problem can be formulated as follows:
\begin{equation}\label{eq:iss-opt-problem-intro}
\begin{aligned}
& \underset{\policy}{\text{minimize}}
& & \nact(\policy) \\
& \text{subject to}
& & \costp(\policy) \leq \budget,
\end{aligned}
\end{equation}
where $\epsilon > 0$ is a budget for the cost rate of resource contention. 
We are interested in solving this problem in the asymptotic regime where the arrival rates $(\Arr_{\low},\Arr_{\high})$ scale to infinity \citep{StoZho_13, StoZho_15, Sto_17, StoZho_21}, motivated by the ever-increasing computing demand that drives today's computing systems to be large-scale.
Specifically, we assume $(\Arr_{\low}, \Arr_{\high}) = (\arr_{\low}r, \arr_{\high}r)$ for some fixed coefficients $\arr_{\low}$ and $\arr_{\high}$ and a scaling factor $r$, and we study the asymptotic regime where $r$ increases.

\subsection{Main Result} 
Our main result is an \emph{asymptotically optimal} policy, named \metapolicyfull\ (\metapolicy), for this new setting of SBP in service systems with time-varying job resource requirements.
The asymptotic optimality is in the sense that under our proposed policy \metapolicy, the expected number of active servers is at most $\left(1+O\left(r^{-0.5}\right)\right)$ times the optimal objective value of the optimization problem in \eqref{eq:iss-opt-problem-intro}, while the cost rate incurred is at most $\left(1+O\left(r^{-0.5}\right)\right)\cdot \epsilon$ (i.e., exceeding the budget by at most a diminishing fraction).
This asymptotic optimality result translates into an \emph{additive optimality gap} of $O(\sqrt{r})$ in the objective value (expected number of active servers), since the optimal objective value can be shown to be $\Theta(r)$.
This main result is formally presented in Theorem~\ref{theo:asymp-opt}.

Our model can be specialized to the traditional setting of SBP in service systems where jobs' resource requirements remain fixed over time.
For this specialization, we replace the constraint $\costp(\policy) \leq \budget$ in the problem formulation \eqref{eq:iss-opt-problem-intro} with a capacity constraint, which requires the total resource requirement by jobs on a server to be within the server's resource capacity.
Our proposed policy \metapolicy\ can then be adapted into one that has an $O(\sqrt{r})$ optimality gap in the objective value, which improves upon the state-of-the-art $o(r)$ optimality gap.
A discussion on the implementation complexity of \metapolicy\ and how it compares with existing policies for the traditional setting of SBP in service systems is provided in Section~\ref{sec:sat:JRRS-discussion}. 
To be clear, this setting is not a strict special case of the formulation in \eqref{eq:iss-opt-problem-intro} because $\epsilon > 0$ is required there, but our approach and proof carry over.

From a technical approach perspective, our contribution is a novel approach that decomposes the policy design into two steps: defining a single-server sub-problem, and then converting the solution of the sub-problem into a policy in the original problem. This decomposition not only reduces the complexity of policy design but also makes the analysis tractable.
We provide an overview of this approach in \Cref{sec:intro:approach}.

\subsection{Approach Overview}\label{sec:intro:approach}
To motivate our approach, we ask two questions: 
\begin{center}
    \emph{How should we design a good dispatch policy for this system?} \\
    \emph{How can we prove that a dispatch policy is asymptotically optimal? }
\end{center} 
Before presenting our answers to these two questions, we first comment on why they are challenging to answer. 
On the one hand, solving this problem directly via dynamic programming is intractable due to the unbounded state space resulting from the infinitely many servers. 
Even if we restrict ourselves to the servers that are active, the state space is still prohibitively large. 
On the other hand, we can consider designing a heuristic policy. However, unlike traditional SBP problems where we can simply seek to pack the servers as compactly as possible, here the question of \emph{how many jobs should be put on a server} is complicated.
The complexity comes from the time-varying resource requirements of jobs, which affect future resource contention. 

\subsubsection*{\textbf{A policy-conversion framework.}}
We answer the two questions at the same time with a novel \emph{policy-conversion} framework. The framework has two steps:
\begin{enumerate}
    \item Define the \textit{single-server problem}, which is a easy-to-solve low-dimensional subproblem; 
    \item Convert the optimal policy of the single-server problem into a policy in the original problem. 
\end{enumerate}
This framework allows us to break down the complicated policy design problem into two components.
In defining the single-server problem in Step 1, our goal is to quantify the throughput of each server under the resource contention constraint;
in the policy conversion in Step 2, our goal is to dispatch jobs optimally based on each single server's characteristics.
As we will show, a careful construction of the single-server problem and the conversion procedure naturally leads to a policy for the original system and a proof of its asymptotic optimality. 
Below we give a quick overview of how we carry out these two steps and the motivation for the design choices. 

\paragraph{Single-server problem.}
To define the single-server problem, consider the following setting: suppose that our goal is to maximize the throughput of \emph{one specific server} while keeping its expected cost rate of resource contention below $\epsilon$, then how should we send jobs to this server? 
Observe that even though we want to send jobs to the server as frequently as possible, 
the frequency is fundamentally limited by how fast the server is able to serve jobs and how many jobs can be packed on the server. 
This motivates us to consider the single-server system illustrated in Figure~\ref{fig:single-server-system}. The system has one server and an infinite supply of jobs of all types, so the server can start the service of any number of new jobs of any type at any time. We say the server \emph{requests} a job from the infinite supply whenever it starts serving a new job.
We assume the same job model and cost model as in the original infinite-server system. 
The single-server problem aims to find a job-requesting policy that maximizes the throughput (the number of each type of job that can be served) along the direction of the arrival rate vector $(\Arr_{\low}, \Arr_{\high}) = (\arr_{\low}r, \arr_{\high}r)$, while maintaining the steady-state expected cost rate of resource contention below $\epsilon$.

\begin{figure}
    \centering
    \includegraphics[height=2.7cm]{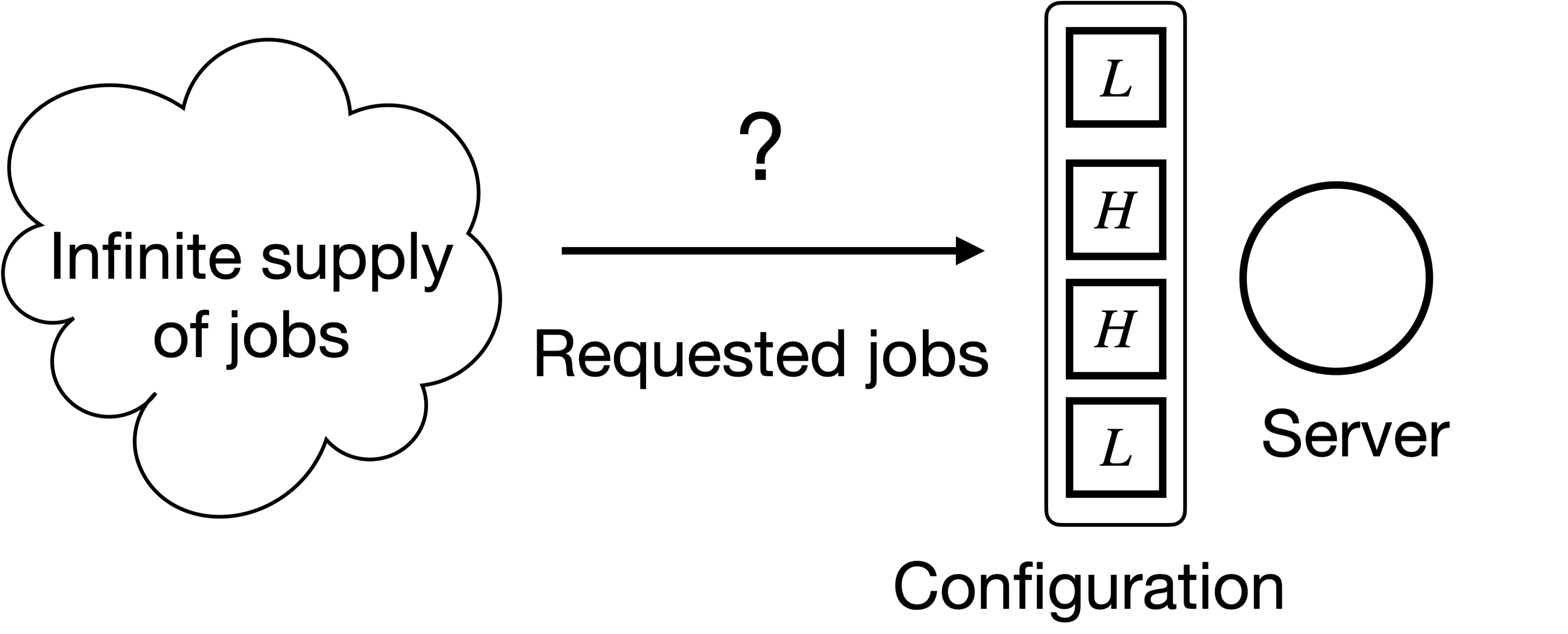}
    \vspace{-1ex}
    \caption{A single-server system with an infinite supply of jobs. A single-server policy decides when to request jobs and how many jobs of each type to request. }
    \label{fig:single-server-system}
    \vspace{-1ex}
\end{figure}

\paragraph{How is the single-server problem related to the original problem?} 
Let $\sysbar{\nact}^*$ be the number such that the total throughput of $\sysbar{\nact}^*$ single-server systems under the optimal job-requesting policy is equal to $(\arr_{\low} r, \arr_{\high}r)$ (assuming $\sysbar{\nact}^*$ is an integer for simplicity). Consider the following policy in the original system: let each of the first $\sysbar{\nact}^*$ servers in the original system adopt the optimal job-requesting policy and send requests to the dispatcher based on its current configuration. 
If the requested jobs were to arrive as soon as the dispatcher received the requests, the dispatcher would be able to fulfill the requests immediately.
In this case, the first $\sysbar{\nact}^*$ servers in the original system would have the same dynamics as $\sysbar{\nact}^*$ i.i.d.\ single-server systems, achieving the largest possible throughput and satisfying the constraint on resources contention. So the original system would have achieved the optimal number of active servers.

However, in the actual model, the dispatcher cannot immediately fulfill a job request because jobs arrive stochastically over time.
Nevertheless, the dispatcher can still find a suitable way to match each job arrival with the requests. 
To see this, note that the time points when the dispatcher receives type $i$ requests result from the superposition of $\sysbar{\nact}^*=\Theta(r)$ independent point processes, each with the average rate $\lambda_i r / \sysbar{\nact}^*$. 
As $r\to\infty$, the instantaneous rates of requesting type $i$ jobs concentrate around the arrival rate $\lambda_i r$ for each $i$. 
As a result, most job requests can be fulfilled within a diminishing delay, so most servers in the original system can closely track the optimal single-server dynamics.

\paragraph{A meta-policy, \metapolicyfull\ (\metapolicy), and its asymptotic optimality.}
Based on the single-server problem and the idea of tracking the optimal single-server dynamics, we propose a meta-policy, \metapolicyfull\ (\metapolicy), which converts a single-server policy $\sysbar{\policy}$ to a dispatch policy in the original infinite-server system. 
We say that \metapolicy\ takes $\sysbar{\policy}$ as a subroutine.
The full definition of \metapolicy\ is given in \Cref{sec:sat:jrs-def}, along with discussions on various practical considerations in its implementation. 

We show that the asymptotic performance of \metapolicy\ is related to its subroutine in the sense described in Theorem~\ref{theo:sat:conversion-general}, which we refer to as the \emph{conversion theorem}. 
In particular, \metapolicy\ with the optimal single-server policy (which we refer to as \textsc{Single-OPT}) as the subroutine is asymptotically optimal for the original infinite-server problem as $r\to\infty$.

In order to track the optimal single-server dynamics, \metapolicy\ uses a more sophisticated mechanism to control the long-term consequences of missing job requests or fulfilling requests with delays. The mechanism involves the auxiliary variables of \emph{tokens} and \emph{virtual jobs}, which regularize the process of generating requests and matching arrivals with requests. These auxiliary variables play a crucial role in the proof of Theorem~\ref{theo:sat:conversion-general} in \Cref{sec:sat:conv}, where a novel Stein's method argument is carried out. 
We discuss the role of tokens and virtual jobs and their necessity at the end of \Cref{sec:sat:jrs-def} and in \Cref{sec:discuss-token-virtual-job}.

Finally, we comment that our policy conversion framework can be applicable to other systems with similar structures. 
Specifically, we can try to define a suitable single-server problem, solve for its optimal policy, and convert the optimal single-server policy to the original problem. 
A similar conversion theorem should hold as long as the servers are weakly coupled in some sense. 
See \Cref{sec:conclusion} for a discussion of such systems.

\paragraph{Relation to the mean-field approach.} 
We remark that the mean-field approach often studies the empirical distribution of configurations on all servers \citep{Sto_13,StoZho_13, StoZho_15, Sto_17, StoZho_21, GhaZhoSri_14}, which can be viewed as a probability distribution of a single server's configuration.
However, the mean-field approach is typically used to \emph{analyze} this empirical distribution under a given policy for the original system.
In contrast, our approach solves the single-server problem to \emph{design} a single-server policy, and then converts it to a policy in the original system with performance guarantee.

\subsection{\textbf{Paper organization}}
In Section~\ref{sec:model}, we present the general problem formulation, which generalizes the simplified version in this section.
In Section~\ref{sec:result-approach}, we give a more detailed overview of our main result and approach, with a short proof of Theorem~\ref{theo:asymp-opt} (main result) based on Theorems~\ref{theo:sat:lower-bound}--\ref{theo:single-opt} at the end of the section.
Section~\ref{sec:sat:jrs-def} provides a detailed description of our meta-policy, \metapolicyfull\ (\metapolicy), along with discussions on practical considerations in its implementation. In Section~\ref{sec:sat:conv}, we prove the performance guarantee of \metapolicy\ (Theorem~\ref{theo:sat:conversion-general}) under an irreducibility assumption. The proof for the general case and other proofs are deferred to the appendices. 
We conclude the paper and discuss some future directions in Section~\ref{sec:conclusion}.

\section{Problem Formulation}\label{sec:model}
\paragraph*{\textbf{Job Model.}}
As described in Section~\ref{sec:intro}, we consider a job model where each job in service can be in one of multiple \emph{phases}, each phase associated with a different resource requirement.
Here the resource requirement can be a multi-dimensional vector, with each coordinate specifying the requirement of one type of resource.
To model the temporal variation in the resource requirement, we assume that each job transitions between phases while in service until it is completed.
The phase transition process is described by a continuous-time Markov chain on the state space $\jobspace\cup\{\perp\}$, where $\jobspace$ is the set of phases and $\perp$ is the absorbing state that denotes the completion of the job. 
We call a transition between two phases in $\jobspace$ an internal transition, and let $\mu_{ii'}$ denote the transition rate from phase $i$ to phase $i'$;
the departure of a job then corresponds to a transition from a phase $i\in\jobspace$ to $\perp$, whose transition rate is denoted as $\mu_{i\perp}$.
The phase transitions of different jobs are assumed to be independent of each other.

We classify a job as a type $i$ job if it starts from phase $i\in\jobspace$ when entering service.
Jobs of each type $i$ arrive to the system according to an independent Poisson process with rate  $\Arr_i$.

\paragraph*{\textbf{Server Model.}} 
We consider an infinite-server system with identical servers.
As soon as a job arrives to the system, the job needs to be dispatched to a server to start service immediately.
Note that this is always feasible because there are an infinite number of servers in the system.
We assume that jobs cannot be preempted or migrated.
To describe the state of a server, we define the \textit{configuration} of a server as an $\abs{\jobspace}$-dimensional vector $\vek = (k_i)_{i\in\jobspace}$, whose $i$-th entry $k_i$ is the number of jobs in phase $i$ on the server.
Each server has a limit on the total number of jobs in service at the same time.
This limit is denoted as $\Kmax$ and referred to as the \textit{service limit}.
Then the set of feasible server configurations is $\Kinner \triangleq \{\vek\colon\sum_{i\in\jobspace} k_i \leq \Kmax\}$.

\paragraph*{\textbf{System Dynamics.}}
The system state can be represented by the concatenation of the configurations of all servers. Specifically, we index the servers by positive integer numbers and denote the configuration of server $\ell$ at time $t$ as $\veK^\ell(t)$.
Then the state of the entire system can be represented by the infinite vector $(\veK^\ell(t))_{\ell\in\mathbb{Z}_+}$.

Suppose that the system is in state $(\vek^\ell)_{\ell\in\mathbb{Z}_+}$.
Let $\vei_i$ be an $\abs{\jobspace}$-dimensional vector whose $i$-th entry is $1$ and all other entries are $0$.
Then the following state transitions can happen:
\begin{itemize}
    \item $\vek^\ell\rightarrow \vek^\ell+\vei_i$,  $\vek^{\ell'}\rightarrow \vek^{\ell'}$ $\forall\ell'\neq\ell$: a type $i$ job arrives and is dispatched to server~$\ell$;
    \item $\vek^\ell\rightarrow \vek^\ell+\vei_{i'}-\vei_i$, $\vek^{\ell'}\rightarrow \vek^{\ell'}$ $\forall\ell'\neq\ell$: a job on server $\ell$ transits from phase $i$ to phase $i'$;
    \item $\vek^\ell\rightarrow \vek^\ell-\vei_i$,  $\vek^{\ell'}\rightarrow \vek^{\ell'}$ $\forall\ell'\neq\ell$: a job on server $\ell$ departs the system from phase $i$.
\end{itemize}
The specifics of the system dynamics depend on the employed \emph{dispatch policy} that decides which server to dispatch to when a job arrives.

\paragraph*{\textbf{Active Servers.}}
We are interested in the number of \emph{active servers}, i.e., servers currently serving a positive number of jobs.
Note that given the arrival rates of jobs, the smaller the number of active servers, the better the system is utilized.
Let $X_{\vek}(t)$ be the number of servers in configuration $\vek$ at time $t$, i.e.,
$X_{\vek}(t)=\sum_{\ell=1}^{\infty}\indibrac{\veK^\ell(t)=\vek}.$
Then the number of active servers can be written as
$\sum_{\vek\neq\vzero} X_{\vek}(t),$
where $\vzero\in\mathbb{R}^{|\jobspace|}$ is the zero vector.

\paragraph*{\textbf{Cost of Resource Contention.}}
Recall that the cost rate function $h(\cdot)$ maps a server's configuration to a rate of cost.
We assume that $h(\cdot)$ is any function that is $\Gamma$-Lipschitz continuous with respect to the $L^1$ distance for some constant $\Gamma > 0$ and satisfies $h(\vzero)=0$.

\paragraph*{\textbf{Performance Goal.}}
Our high-level goal is to design dispatch policies that minimize the number of active users while keeping the cost rate of resource contention within a certain budget. 
Specifically, we consider policies that are allowed to be randomized and non-Markovian (i.e., the policies can make history-dependent decisions). 
We further focus on policies that induce a unique stationary distribution on the configuration process $\{(\veK^\ell(t))_{\ell\in\mathbb{Z}_+}\}$,
assuming that the configuration process is embedded in a Markov chain that has a unique stationary distribution.
We are interested in such policies because the resulting time averages of quantities related to the configurations are equal to the corresponding expectations under the unique stationary distribution regardless of the initial state. 
Let $\policy$ be a policy of interest, $(\veK^\ell)_{\ell\in\mathbb{Z}_+}$ be a random element that follows the stationary distribution of the system state induced by $\policy$, and $X_{\vek}$ be the corresponding number of servers in configuration $\vek$ in steady state under $\policy$.
Then the expected number of active servers is given by
\begin{equation*}
\nact(\policy) \triangleq \sum_{\vek\neq\vzero}\E[X_{\vek}].
\end{equation*}
We define the expected cost rate per expected active server as
\[
    \costp(\policy)\triangleq \frac{\sum_{\vek\neq\vzero}h(\vek)\E[X_{\vek}]}{\sum_{\vek\neq\vzero}\E[X_{\vek}]}.
\]
Note that if $\costp(\policy)\leq\epsilon$, we have $\sum_{\vek\neq\vzero}h(\vek)\E[X_{\vek}]\leq \budget \sum_{\vek\neq\vzero}\E[X_{\vek}]$. 
Now our goal can be formulated as the following optimization problem, referred to as problem $\isp((\Arr_i)_{i\in\jobspace}, \budget)$:
\begin{equation}\label{eq:iss-opt-problem} 
\begin{aligned}
& \underset{\policy}{\text{minimize}}
& & \nact(\policy) \\
& \text{subject to}
& & \costp(\policy) \leq \budget,
\end{aligned}
\end{equation}
where $\epsilon$ is a budget for the cost rate of resource contention.

\paragraph*{\textbf{Asymptotic Optimality.}}
We focus on the asymptotic regime where for all $i\in\jobspace$, the arrival rate is given by $\Arr_i = \arr_i r$ for some constant coefficient $\arr_i$ and a positive \emph{scaling factor} $r\to+\infty$.
To define asymptotic optimality, we first define the following notion of approximation to the optimization problem $\isp((\Arr_i)_{i\in\jobspace}, \budget)$ in \eqref{eq:iss-opt-problem}:
a policy $\policy$ is said to be $(\alpha, \beta)$-optimal if $\nact(\policy) \leq \alpha \cdot \nact^*((\Arr_i)_{i\in\jobspace}, \budget)$ and $\costp(\policy) \leq \beta \cdot \budget$, where $\nact^*((\Arr_i)_{i\in\jobspace}, \budget)$ is the optimal objective value in \eqref{eq:iss-opt-problem}.
Now consider a family of policies $\sysr{\policy}$ indexed by the scaling factor $r$. We say that the policy $\sysr{\policy}$ is \emph{asymptotically optimal} if it is $\left(\sysr{\alpha}, \sysr{\beta}\right)$-optimal to the optimization problem $\isp((\arr_ir)_{i\in\jobspace}, \budget)$ with $\sysr{\alpha}, \sysr{\beta}\to 1$ as $r \to\infty$.
We will suppress the superscript $(r)$ for simplicity when it is clear from the context.

We note that under any policy $\policy$, $\nact(\policy)=\Theta(r)$.
This can be proven using the renowned Little's Law \citep{Kle_75} in the following way. 
The total job arrival rate is $\Theta(r)$ and the expected time that a job spends in the system is $O(1)$.
So by Little's Law, the expected number of jobs in the system in steady state is $\Theta(r)$.
Since each server can accommodate a constant number of jobs, the expected number of active servers is $\Theta(r)$.
Given this, $\left(\sysr{\alpha}, \sysr{\beta}\right)$-optimality implies an optimality gap of $\sysr{\alpha}\cdot r$ in the objective value.

\section{Main Result and Our Approach}\label{sec:result-approach}
\subsection{Main Result}\label{subsec:result}
Our main result, Theorem~\ref{theo:asymp-opt}, is the asymptotic optimality of our proposed policy \metapolicyfull\ (\metapolicy), with a subroutine we call \textsc{Single-OPT}, as briefly discussed in Section~\ref{sec:intro}. 
This asymptotic optimality result implies an $O(\sqrt{r})$ optimality gap in the expected number of active servers.
We defer the detailed descriptions of \metapolicy\ and \textsc{Single-OPT} to Section~\ref{sec:sat:jrs-def} and Appendix~\ref{sec:solve-single-lp}.
Theorem~\ref{theo:asymp-opt} follows immediately from Theorems~\ref{theo:sat:lower-bound}--\ref{theo:single-opt} to be introduced in Section~\ref{subsec:approach}; a short proof is included at the end of Section~\ref{subsec:approach} for clarity.

\begin{theorem}[Asymptotic Optimality]\label{theo:asymp-opt}
    Consider a stochastic bin-packing problem in service systems with time-varying job resource requirements.
    Let the arrival rates be $(\arr_i r)_{i\in\jobspace}$ and the cost rate budget be $\budget > 0$.
    Then the policy \metapolicyfull\ (\metapolicy) with the subroutine \textsc{Single-OPT} is $\left(1+O\left(r^{-0.5}\right), 1+O\left(r^{-0.5}\right)\right)$-optimal.
    That is, the expected number of active servers under \metapolicy\ with \textsc{Single-OPT} is at most $\left(1+O\left(r^{-0.5}\right)\right)$ times the optimal value of the problem $\mathcal{P}((\arr_i r)_{i\in\jobspace},\epsilon)$, while the cost rate incurred is at most $\left(1+O\left(r^{-0.5}\right)\right)\cdot \epsilon$.
\end{theorem}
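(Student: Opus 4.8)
The plan is to obtain Theorem~\ref{theo:asymp-opt} by chaining the three results previewed in Section~\ref{subsec:approach}, using the optimal single-server value $\sysbar{\nact}^*$ as the pivot quantity: it is simultaneously a lower bound on the infinite-server optimum and, after conversion, an asymptotically matching upper bound on the objective achieved by our policy. No new ingredient is needed beyond Theorems~\ref{theo:sat:lower-bound}--\ref{theo:single-opt}; the work is in composing their guarantees and tracking the asymptotic constants.

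First I would invoke the lower bound (Theorem~\ref{theo:sat:lower-bound}) to conclude that for every admissible dispatch policy, and in particular the optimal one, $\nact^*((\arr_i r)_{i\in\jobspace},\budget)\ge\sysbar{\nact}^*$, where $\sysbar{\nact}^*$ is the optimal value of the single-server design problem at the same scaled arrival rates $(\arr_i r)_{i\in\jobspace}$ and the same budget $\budget$. Next I would apply Theorem~\ref{theo:single-opt} to obtain the single-server policy \textsc{Single-OPT}, which attains $\sysbar{\nact}=\sysbar{\nact}^*$, i.e.\ realizes throughput $(\arr_i r/\sysbar{\nact}^*)_{i\in\jobspace}$, while keeping its cost rate within $\budget$. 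I would then feed \textsc{Single-OPT} as the subroutine of \metapolicy\ and apply the conversion theorem (Theorem~\ref{theo:sat:conversion-general}), which guarantees that the resulting infinite-server dispatch policy has expected number of active servers at most $(1+O(r^{-0.5}))\,\sysbar{\nact}^*$ and average cost rate at most $(1+O(r^{-0.5}))\,\budget$. Combining the bound on the number of active servers with the lower bound yields that this quantity is also at most $(1+O(r^{-0.5}))\,\nact^*((\arr_i r)_{i\in\jobspace},\budget)$, which together with the cost bound is exactly $(1+O(r^{-0.5}),1+O(r^{-0.5}))$-optimality. Since $\nact^*=\Theta(r)$ by Little's Law (as argued in Section~\ref{sec:model}), the multiplicative error $O(r^{-0.5})$ translates into the claimed additive optimality gap $O(\sqrt{r})$.

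The final assembly is essentially bookkeeping; all of the genuine difficulty lives inside the three component theorems --- the coupling argument behind the lower bound, the Stein's-method approximation underpinning the conversion theorem, and the state-action-frequency linear program defining \textsc{Single-OPT} --- which I treat as given here. The one point I would still verify carefully at the assembly stage is the composition of asymptotics: that the $\sysbar{\nact}$ in the conversion theorem refers to the single-server value at exactly the scaled rates used in the previous step, so the comparison stays apples-to-apples; that the ceiling convention $\sysbar{\nact}^*=\lceil\sysbar{\nact}^*\rceil$ inflates the pivot by at most an additive constant, which is absorbed into $O(\sqrt{r})$ because $\sysbar{\nact}^*=\Theta(r)$; and that the two $O(r^{-0.5})$ error terms produced by the conversion theorem are uniform in the relevant problem parameters, so they can legitimately be reported as a single $O(r^{-0.5})$ in both the objective and the cost coordinates.
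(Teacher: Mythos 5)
Your proposal is correct and follows essentially the same route as the paper's own proof: use Theorem~\ref{theo:single-opt} to get a Markovian subroutine achieving $\sysbar{\nact}^*$, apply the conversion theorem (Theorem~\ref{theo:sat:conversion-general}) to bound $\nact(\policy)$ and $\costp(\policy)$, and combine with the lower bound $\nact^*\geq\sysbar{\nact}^*$ from Theorem~\ref{theo:sat:lower-bound}. The extra bookkeeping points you flag (ceiling convention, $\nact^*=\Theta(r)$ for the additive gap) are consistent with the paper's treatment.
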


\paragraph*{\textbf{Specialization to Non-Time-Varying Resource Requirements.}}
As mentioned in Section~\ref{sec:intro}, we can specialize this result to the setting where the resource requirement of a job does not vary over time.
To do that, we remove the cost constraint in $\mathcal{P}((\arr_i r)_{i\in\jobspace},\epsilon)$, and redefine the set of feasible server configurations, $\Kinner$, to also incorporate hard capacity constraints for each type of resources.
The rest of the analysis is almost identical to that of the analysis for time-varying resource requirements; we omit the details due to the space limit.
This specialization results in a policy that is $\left(1+O\left(r^{-0.5}\right)\right)$-optimal in the expected number of active servers.

\subsection{Our Approach}\label{subsec:approach}
In a nutshell, our approach is to reduce the original optimization problem in an infinite-server system to an optimization problem in a single-server system, which is defined below.

\paragraph*{\textbf{A Single-Server System.}}
Consider a single-server system serving jobs with time-varying resource requirements. 
The system has \textit{an infinite supply of jobs} of all types. 
As a result, the server can \textit{request} any number of new jobs of any type at any time. Once a job is requested, it immediately enters service. 

We represent the server configuration at time $t$ using a vector $\sysbar{\veK}(t) = (\sysbar{K}_i(t))_{i\in\jobspace}$, whose $i$-th entry denotes the number of jobs in phase $i$. We assume that the single-server system has the same service limit $\Kmax$ and cost rate function $h(\cdot)$ as a server in the original infinite-server system. Therefore, the server configuration $\sysbar{\veK}(t)$ is also in the set $\Kinner = \{\vek\colon\sum_{i\in\jobspace} k_i \leq \Kmax\}$, and the cost rate at time $t$ is $h(\sysbar{\veK}(t))$.

A single-server policy $\sysbar{\policy}$ determines when and how many jobs of each type to request. 
We allow the single-server policy to be randomized and assume it is Markovian, i.e., it makes decisions only based on the current configuration. Note that allowing non-Markovian policies will not change the optimal value of the single-server problem that we will consider (see \Cref{sec:solve-single-lp}). 
Let $\pi\triangleq(\pi(\vek))_{\vek\in\Kinner}$ be a stationary distribution of the server configuration under the policy $\sysbar{\policy}$, and let $\sysbar{\veK}(\infty)$ be a random variable with the distribution $\pi$. 
When we consider a policy $\sysbar{\policy}$ and its stationary distribution $\pi$, we assume that the system is initialized from $\pi$. 
The policy $\sysbar{\policy}$ together with $\pi$ defines the \textit{request rate} of type $i$ jobs $\reqrate_i$, which is the expected number of type $i$ jobs requested per unit time in steady state. Note that $\reqrate_i$ is the throughput of type $i$ jobs since the system has a finite state space. 

We consider the following single-server problem, denoted as $\ssp((\arr_ir)_{i\in\jobspace},\budget)$:
\begin{equation}\label{eq:sat:ssss-opt-problem}
    \begin{aligned}
        & \underset{\sysbar{\nact}, \mspace{10mu}  \sysbar{\policy}, \mspace{10mu} \pi }{\text{minimize}}
        & & \sysbar{\nact} \\
        & \text{subject to}
        &  &\E\left[h\big(\sysbar{\veK}(\infty)\big)\middle| \sysbar{\veK}(\infty) \neq \vzero\right] \leq \budget,\\
        & &  & \sysbar{\nact}\, \reqrate_i = \arr_i r, \quad \forall i\in\jobspace.
    \end{aligned}
\end{equation}
The single-server problem can be interpreted as follows.
We can think of $\sysbar{\nact}$ as the number of copies of the single-server system under $\sysbar{\policy}$ needed to support the arrival rates $(\arr_ir)_{i\in\jobspace}$ in the infinite-server system.
To minimize $\sysbar{\nact}$, it is equivalent to maximizing the throughput $(\reqrate_i)_{i\in\jobspace}$ in each single-server system, while maintaining their proportions as $(\arr_ir)_{i\in\jobspace}$.

We remark that for the problem $\ssp((\arr_ir)_{i\in\jobspace},\budget)$, we only need to consider policies that do not depend on the scaling factor $r$. 
To see this, we can replace the decision variable $\sysbar{\nact}$ with $\sysbar{\nactsmall} \triangleq \sysbar{\nact} / r$ and the optimization problem can be equivalently formulated as follows, which does not involve $r$:
\begin{equation}\label{eq:sat:ssss-opt-problem-normalized}
    \begin{aligned}
        & \underset{\sysbar{\nactsmall}, \mspace{10mu} \sysbar{\policy}, \mspace{10mu} \pi }{\text{minimize}}
        & & \sysbar{\nactsmall} \\
        & \text{subject to}
        &  & \E\left[h\big(\sysbar{\veK}(\infty)\big)\middle| \sysbar{\veK}(\infty) \neq \vzero\right] \leq \budget,\\
        & &  & \sysbar{\nactsmall} \, \reqrate_i = \arr_i, \quad \forall i\in\jobspace.
    \end{aligned}
\end{equation}

\paragraph*{\textbf{Lower Bound.}}
The single-server problem gives a lower bound to the original problem in \eqref{eq:iss-opt-problem} as stated in the following theorem. The proof is given in Appendix~\ref{sec:sat:lower-bound}.
\begin{theorem}[Lower Bound]\label{theo:sat:lower-bound}
    Consider a stochastic bin-packing problem in service systems with time-varying job resource requirements.  Let the arrival rates be $(\arr_i r)_{i\in\jobspace}$ and the cost rate budget be $\budget>0$. Let $\nact^*$ be the optimal value of the original infinite-server problem in \eqref{eq:iss-opt-problem}, and let $\sysbar{\nact}^*$ be the optimal value of the single-server problem $\ssp((\arr_i r)_{i\in\jobspace}, \budget)$, then $\nact^* \geq \sysbar{\nact}^*$.
\end{theorem}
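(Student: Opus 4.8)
The plan is to show that for every feasible policy $\policy$ of the infinite-server problem in~\eqref{eq:iss-opt-problem} --- by which we mean a policy inducing a unique stationary distribution with $\costp(\policy)\le\budget$ --- and every $c>0$, one can construct a feasible triple $(\sysbar{\nact},\sysbar{\policy},\pi)$ for the single-server problem $\ssp((\arr_ir)_{i\in\jobspace},\budget)$ with objective value $\sysbar{\nact}=c+\nact(\policy)$. Letting $c\to0^{+}$ and then taking the infimum over $\policy$ yields $\sysbar{\nact}^*\le\nact^*$. (If~\eqref{eq:iss-opt-problem} has no feasible policy, $\nact^*=+\infty$ and the claim is vacuous.) The single-server policy will be a Markovian policy that mimics the dynamics of a typical active server of the infinite-server system in steady state, so the argument can be phrased as a coupling: the single-server configuration process is built to be stochastically equivalent to the ``aggregate active server'' of the infinite-server system, which is re-injected into the active set exactly at the rate at which active servers empty.

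\textbf{Construction.} Run $\policy$ in steady state and set $q_{\vek}:=\E[X_{\vek}]$; this is finite for $\vek\neq\vzero$ with $\sum_{\vek\neq\vzero}q_{\vek}=\nact(\policy)$, although $q_{\vzero}=+\infty$. For $\vek\neq\vzero$ let $a_i(\vek)$ be the steady-state rate at which type-$i$ jobs are dispatched to servers currently in configuration $\vek$, and let $b_i$ be the steady-state rate at which type-$i$ jobs are dispatched to empty servers; these are finite and satisfy $b_i+\sum_{\vek\neq\vzero}a_i(\vek)=\arr_ir$. Fix $c>0$ and let $\sysbar{\policy}=\sysbar{\policy}_c$ be the single-server Markovian policy that, from configuration $\vek\neq\vzero$, requests a type-$i$ job at rate $r_i(\vek):=a_i(\vek)/q_{\vek}$ (which is $0$ whenever $\sum_{j\in\jobspace}k_j=\Kmax$), and from $\vzero$ requests a type-$i$ job at rate $\rho_i:=b_i/c$. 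I claim the measure $\nu$ on $\Kinner$ with $\nu(\vzero)=c$ and $\nu(\vek)=q_{\vek}$ for $\vek\neq\vzero$ is invariant for this chain. Indeed, for $\vek\neq\vzero$ that is not a unit vector, the balance equation at $\vek$ for the single-server chain coincides term by term with the stationarity equation for $q_{\vek}$ (rate in $=$ rate out) in the infinite-server system; for $\vek=\vei_i$ the two equations agree because the extra inflow $\nu(\vzero)\rho_i=b_i$ in the single-server chain matches the rate at which empty servers receive a type-$i$ job and move to configuration $\vei_i$ in the infinite-server system; and the balance at $\vzero$ reads $c\sum_i\rho_i=\sum_i b_i=\sum_i q_{\vei_i}\mu_{i\perp}$, which is exactly the steady-state identity equating the rate at which servers become active and the rate at which they become empty. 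Normalizing $\nu$ gives a stationary distribution $\pi$ of $\sysbar{\policy}_c$ with $\pi(\vek)\propto q_{\vek}$ on $\{\vek\neq\vzero\}$. (Configurations with $q_{\vek}=0$ lie outside the support of $\nu$ and are excluded; this is the only bookkeeping subtlety here.)

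\textbf{Feasibility of the constructed triple.} Under $(\sysbar{\policy}_c,\pi)$ the request rate of type-$i$ jobs equals $\reqrate_i=\pi(\vzero)\rho_i+\sum_{\vek\neq\vzero}\pi(\vek)r_i(\vek)=\bigl(b_i+\sum_{\vek\neq\vzero}a_i(\vek)\bigr)/(c+\nact(\policy))=\arr_ir/(c+\nact(\policy))$, so with $\sysbar{\nact}:=c+\nact(\policy)$ we get $\sysbar{\nact}\,\reqrate_i=\arr_ir$ for all $i\in\jobspace$. Also $\E\bigl[h(\sysbar{\veK}(\infty))\mid\sysbar{\veK}(\infty)\neq\vzero\bigr]=\bigl(\sum_{\vek\neq\vzero}h(\vek)q_{\vek}\bigr)/\bigl(\sum_{\vek\neq\vzero}q_{\vek}\bigr)=\costp(\policy)\le\budget$. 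Hence $(\sysbar{\nact},\sysbar{\policy}_c,\pi)$ is feasible for $\ssp((\arr_ir)_{i\in\jobspace},\budget)$, so $\sysbar{\nact}^*\le c+\nact(\policy)$; letting $c\to0^{+}$ and taking the infimum over feasible $\policy$ gives $\sysbar{\nact}^*\le\nact^*$.

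\textbf{Main obstacle.} The load-bearing difficulty is the empty configuration: since $q_{\vzero}=+\infty$ one cannot directly normalize $(q_{\vek})_{\vek\in\Kinner}$ into a single-server stationary distribution, and a naive fix would lose an additive constant. Attaching an \emph{arbitrarily small} free mass $c$ at $\vzero$ --- equivalently, letting the single server refill from the empty state arbitrarily fast and passing to the limit --- is precisely what makes the bound tight. The remaining points to handle carefully are the rate-conservation identity $\sum_i b_i=\sum_i q_{\vei_i}\mu_{i\perp}$ that underlies invariance at $\vzero$, the fact that the aggregate rates $a_i(\vek)$ and $b_i$ are well defined as steady-state (long-run average) quantities even when $\policy$ is randomized and history-dependent, and the harmless exclusion of zero-mass configurations; I expect each of these to be routine relative to the $c\to0$ idea.
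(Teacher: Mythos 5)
Your proposal is correct in substance, but it proves the lower bound by a genuinely different route than the paper. The paper's proof is pathwise: for a given infinite-server policy $\policy$ it simulates the infinite-server system, picks a server index $\ell$ with probability $p^\ell$ proportional to $\Prob(\veK^\ell(\infty)\neq\vzero)$, and lets the single server emulate that server's trajectory with its idle periods excised through the time change $s^\ell(t)$; this directly produces a feasible (generally history-dependent) single-server solution whose conditional cost is exactly $\costp(\policy)$ and whose request rates are exactly $\arr_i r/\nact(\policy)$, so no limiting argument is needed. You instead work at the level of stationary flows: you define a \emph{Markovian} single-server policy from the aggregate dispatch rates $a_i(\vek)/\E[X_{\vek}]$, exhibit the explicit invariant measure $\nu(\vek)=\E[X_{\vek}]$ for $\vek\neq\vzero$ with an artificial mass $c$ at $\vzero$, and send $c\to 0^+$. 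Both arguments are valid; yours buys a Markovian witness directly (the paper only obtains Markovian optimal single-server policies later, via the LP in Appendix~C), while the paper's coupling avoids your two nontrivial supporting facts: (i) the rate-conservation identities for the aggregate counts $\E[X_{\vek}]$ and for the active-server count (your identity $\sum_i b_i=\sum_i \mu_{i\perp}\E[X_{\vei_i}]$) under arbitrary randomized, history-dependent stationary policies, which must be established by a transition-counting argument exactly analogous to the paper's stationary-equation lemma (Lemma~6) but on the infinite-server state descriptor, using $\E\big[\sum_{\vek\neq\vzero}X_{\vek}\big]<\infty$ to justify the exchanges of limit and expectation; and (ii) the support bookkeeping, which is fine but deserves one more line than you give it: from flow balance, $\E[X_{\vek}]=0$ forces $a_i(\vek)=0$ and forces all inflow terms into $\vek$ to vanish, so the set $\{\vek\colon \E[X_{\vek}]>0\}\cup\{\vzero\}$ is closed under your constructed dynamics and $\nu$ is indeed invariant on it. One cosmetic point: despite your framing, your argument is a balance-equation/flow argument, not a coupling; the coupling description fits the paper's proof, not yours.
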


\paragraph*{\textbf{Converting From the Single-Server System to the Infinite-Server System.}}
Having established a lower bound on the infinite-server problem $\isp((\arr_i r)_{i\in\jobspace}, \budget)$ in terms of the optimal value of the single-server problem $\ssp((\arr_i r)_{i\in\jobspace}, \budget)$, next we focus on finding an asymptotically optimal policy. 
We will characterize the performance guarantee of a class of policies and then show that the best policy within the class is asymptotically optimal. Specifically, we consider a meta-policy called \metapolicyfull\ (\metapolicy), which converts a Markovian single-server policy $\sysbar{\policy}$ into an infinite-server policy. We call the policy resulting from the conversion  \textit{a \metapolicy\ policy with a subroutine $\sysbar{\policy}$}. Through analyzing the meta-policy \metapolicy, we show that the performance of each \metapolicy\ policy can be characterized by the performance of its subroutine, as stated in Theorem~\ref{theo:sat:conversion-general} below.
The proof of Theorem~\ref{theo:sat:conversion-general} under an irreducibility assumption is given in Section~\ref{sec:sat:conv}, and the proof for the full version is given in Appendix~\ref{sec:proof-conv-general}.

\begin{theorem}[Conversion Theorem]\label{theo:sat:conversion-general}
Consider a stochastic bin-packing problem in service systems with time-varying job resource requirements.
Let the arrival rates be $(\arr_i r)_{i\in\jobspace}$ and the cost rate budget be $\budget>0$. Let $(\sysbar{\nact}, \sysbar{\policy}, \pi(\vek))$ be a solution feasible to the single-server problem $\ssp((\arr_i r)_{i\in\jobspace}, \budget)$. In addition, we assume that the policy $\sysbar{\policy}$ is Markovian. Let the infinite-server policy $\policy$ be \metapolicy\ with a subroutine $\sysbar{\policy}$. Then under $\policy$, we have
\begin{align}
    \abs{\sum_{\vek \neq \vzero} \E\left[X_{\vek}\right] - \ceil{\sysbar{\nact}} \cdot\Prob\left(\sysbar{\veK}\neq \vzero\right)}  &= \Obrac{\sqrt{r}}, \label{eq:sat:active-servers}\\
    \abs{\sum_{\vek\neq\vzero} h(\vek) \E\left[X_{\vek}\right] - \ceil{\sysbar{\nact}} \cdot \E\left[h(\sysbar{\veK})\right]} &= \Obrac{\sqrt{r}}.\label{eq:sat:cost}
\end{align}
As a result, 
\begin{align}
    \nact(\policy) &\leq \left(1 + \Obrac{r^{-0.5}}\right) \cdot \sysbar{\nact}, \label{eq:alpha-optimal}\\
    \costp(\policy) &\leq \left(1 + \Obrac{r^{-0.5}}\right)\cdot \budget \label{eq:beta-optimal}.
\end{align}
\end{theorem}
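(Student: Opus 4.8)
The plan is a generator-comparison (Stein's method) argument: I would compare the stationary law of the infinite-server system under \metapolicy\ with subroutine $\sysbar{\policy}$ against the product measure of $\ceil{\sysbar{\nact}}$ independent single-server chains in the stationary distribution $\pi$, and show that the gap is controlled by the expected number of outstanding tokens and of active overflow servers. First I would describe the coupled process as a Markov chain whose state records, for each server $\ell\le\ceil{\sysbar{\nact}}$, its actual configuration $\veK^\ell$ and its vector $\ve{p}^\ell$ of requested-but-not-yet-received jobs, together with the number $X_>$ of active servers of index exceeding $\ceil{\sysbar{\nact}}$. Writing $\widehat{\veK}^\ell=\veK^\ell+\ve{p}^\ell$ for the ``virtual'' configuration that \metapolicy\ feeds to $\sysbar{\policy}$, a request by $\sysbar{\policy}$ moves $\widehat{\veK}^\ell$ exactly as in the single-server system, while phase transitions and departures on server $\ell$ fire at the rates dictated by the \emph{actual} configuration $\widehat{\veK}^\ell-\ve{p}^\ell$. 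The idealized chain is the product of $\ceil{\sysbar{\nact}}$ independent single-server chains under $\sysbar{\policy}$; it has stationary distribution $\pi^{\otimes\ceil{\sysbar{\nact}}}$, under which (by independence and $h(\vzero)=0$) $\sum_{\vek\neq\vzero}\E[X_{\vek}]=\ceil{\sysbar{\nact}}\,\Prob(\sysbar{\veK}\neq\vzero)$ and $\sum_{\vek\neq\vzero}h(\vek)\E[X_{\vek}]=\ceil{\sysbar{\nact}}\,\E[h(\sysbar{\veK})]$, which are precisely the right-hand targets in \eqref{eq:sat:active-servers}--\eqref{eq:sat:cost}.

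Next I would carry out the generator comparison. Let $\mathcal{G}$ and $\mathcal{G}^{*}$ be the generators of the coupled chain and of the idealized product chain. For a function $g$ of the virtual configurations $(\widehat{\veK}^\ell)_\ell$ only, the two generators agree on the request transitions, and the token-fulfilment and overflow transitions do not move the virtual configurations, so the only mismatch is that a phase/departure transition of server $\ell$ out of phase $i$ carries rate $\widehat{K}^\ell_i\mu_{ii'}$ under $\mathcal{G}^{*}$ but $\widehat{K}^\ell_i\mu_{ii'}-p^\ell_i\mu_{ii'}$ under $\mathcal{G}$. Hence $(\mathcal{G}-\mathcal{G}^{*})g$ is a sum over $\ell$, $i$, and $i'$ of terms bounded by $p^\ell_i\mu_{ii'}$ times a single-coordinate finite difference of $g$. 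Taking $g=\sum_{\ell\le\ceil{\sysbar{\nact}}}g_1(\widehat{\veK}^\ell)$ with $g_1(\vek)=\indibrac{\vek\neq\vzero}$ (and, separately, $g_1=h$), the solution of the Poisson equation $\mathcal{G}^{*}f=g-\E_{\pi^{\otimes}}[g]$ is $f=\sum_\ell f_1(\widehat{\veK}^\ell)$ where $f_1$ solves the single-server Poisson equation; since $\sysbar{\policy}$ and the fixed finite space $\Kinner$ do not depend on $r$, the single-server chain has $\Obrac{1}$ mixing time, so $f_1$ is bounded with $\Obrac{1}$ finite differences. Combining $\E_{\Prob}[\mathcal{G}f]=0$ (valid since $f$ is bounded, the virtual configurations lying in the finite set $\Kinner$, and the chain is positive recurrent) with the above yields $|\E_{\Prob}[g]-\ceil{\sysbar{\nact}}\,\E_{\pi}[g_1]|=|\E_{\Prob}[(\mathcal{G}^{*}-\mathcal{G})f]|\le C\,\E_{\Prob}\!\big[\sum_{i\in\jobspace}\tok_i\big]$ for a constant $C$ depending only on $\Kmax$, $|\jobspace|$, $\max_{i,i'}\mu_{ii'}$, and $\Gamma$. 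Finally, replacing the virtual configurations by the actual ones and re-including the overflow servers changes $\sum_{\vek\neq\vzero}\E[X_{\vek}]$ (resp.\ $\sum_{\vek\neq\vzero}h(\vek)\E[X_{\vek}]$) by at most $\Obrac{\E_{\Prob}[\sum_i\tok_i]}+\Obrac{\E_{\Prob}[X_>]}$, using $h(\vzero)=0$ and $\Gamma$-Lipschitzness for the cost functional. Thus \eqref{eq:sat:active-servers}--\eqref{eq:sat:cost} reduce to showing $\E_{\Prob}[\sum_i\tok_i]+\E_{\Prob}[X_>]=\Obrac{r^{0.5}}$.

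I expect this last bound to be the main obstacle. The count $X_>$ is governed by an infinite-server-type queue fed by the overflow stream --- type $i$ arrivals that find no type $i$ token, at rate $\arr_i r\,\Prob(\tok_i=0)$ --- with $\Obrac{1}$ mean time-to-drain, so it suffices to bound the token counts and to show $\Prob(\tok_i=0)=\Obrac{r^{-0.5}}$. This is delicate because the token process is critically balanced at the fluid scale: the $\ceil{\sysbar{\nact}}$ servers generate type $i$ tokens at rate $\approx\ceil{\sysbar{\nact}}\reqrate_i$, which exceeds the depletion rate $\arr_i r=\sysbar{\nact}\reqrate_i$ only by the $\Obrac{1}$ amount from $\ceil{\sysbar{\nact}}-\sysbar{\nact}<1$, so a crude Foster--Lyapunov estimate does not even give stability. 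The resolution is a negative-feedback effect: each outstanding type $i$ token sits on a server whose virtual configuration is correspondingly ``fuller'' (pending jobs count toward $\widehat{\veK}^\ell$ but cannot depart), which throttles that server's request rate. I would make this quantitative with a Lyapunov function on $(\tok_i)_{i\in\jobspace}$ and $X_>$ whose drift becomes negative of order $\sqrt r$ once $\sum_i\tok_i$ grows past $\Thebrac{\sqrt r}$, which then yields the $\Obrac{r^{0.5}}$ steady-state bound. (The irreducibility / positive-recurrence hypothesis is what guarantees the stationary distribution exists and these averages are finite.)

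Finally, \eqref{eq:alpha-optimal}--\eqref{eq:beta-optimal} follow routinely from \eqref{eq:sat:active-servers}--\eqref{eq:sat:cost}. Since $\reqrate_i$ is an $\Obrac{1}$ throughput of a fixed finite-state chain, the constraint $\sysbar{\nact}\reqrate_i=\arr_i r$ forces $\sysbar{\nact}=\Wbrac{r}$, so \eqref{eq:sat:active-servers} gives $\nact(\policy)\le\ceil{\sysbar{\nact}}\Prob(\sysbar{\veK}\neq\vzero)+\Obrac{r^{0.5}}\le\ceil{\sysbar{\nact}}+\Obrac{r^{0.5}}=(1+\Obrac{r^{-0.5}})\,\sysbar{\nact}$, which is \eqref{eq:alpha-optimal}. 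For \eqref{eq:beta-optimal}, feasibility of $(\sysbar{\nact},\sysbar{\policy},\pi)$ together with $h(\vzero)=0$ gives $\E[h(\sysbar{\veK})]=\Prob(\sysbar{\veK}\neq\vzero)\,\E[h(\sysbar{\veK})\mid\sysbar{\veK}\neq\vzero]\le\budget\,\Prob(\sysbar{\veK}\neq\vzero)$, so by \eqref{eq:sat:cost} the numerator of $\costp(\policy)$ is at most $\ceil{\sysbar{\nact}}\budget\,\Prob(\sysbar{\veK}\neq\vzero)+\Obrac{r^{0.5}}$, while by \eqref{eq:sat:active-servers} its denominator $\nact(\policy)\ge\ceil{\sysbar{\nact}}\Prob(\sysbar{\veK}\neq\vzero)-\Obrac{r^{0.5}}=\Thebrac{r}$ (the latter by Little's law applied to the single-server system); dividing gives $\costp(\policy)\le(1+\Obrac{r^{-0.5}})\,\budget$.
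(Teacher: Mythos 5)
Your overall architecture matches the paper's: a Stein/generator-comparison step showing the normal servers look like $\ceil{\sysbar{\nact}}$ i.i.d.\ copies of the single-server chain, with the comparison error controlled by the number of outstanding tokens, plus Little's-law arguments for the overflow servers, and the same endgame for \eqref{eq:alpha-optimal}--\eqref{eq:beta-optimal}. However, there is a genuine gap exactly where you flag ``the main obstacle.'' First, the theorem is about \metapolicy\ as actually defined, which contains two ingredients your state description omits: a token limit $\toklim=\lceil L^{0.5}\rceil$ together with \emph{virtual arrivals/virtual jobs}, and the rule that a server pauses requesting while it holds tokens; the observed configuration fed to $\sysbar{\policy}$ is real plus virtual jobs, not real jobs plus pending tokens. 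These are not cosmetic: the token cap makes $\sum_{i}\sum_{\ell}\toK_i^\ell\leq\abs{\jobspace}\toklim=\Obrac{r^{0.5}}$ hold \emph{deterministically}, which is what lets the generator-comparison (your bound $C\,\E[\sum_i\tok_i]$) be closed without any a priori stochastic estimate on the tokens. The price is that one must then separately show the steady-state number of virtual jobs is $\Obrac{r^{0.5}}$ (the paper's Lemma~\ref{lem:sat:virtual-jobs}), a step your plan never addresses because your model has no virtual jobs.

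Second, your proposed substitute for the cap --- a Lyapunov function on $(\tok_i)_i$ and $X_>$ whose drift becomes negative of order $\sqrt{r}$ once $\sum_i\tok_i$ exceeds $\Thebrac{\sqrt r}$ --- is circular as sketched. The negative-feedback drift estimate requires that the token-free normal servers generate type-$i$ tokens at aggregate rate $\arr_i r+\Obrac{\sqrt r}$, i.e.\ that their empirical configuration distribution is already close to $\pi$; but that is precisely the propagation-of-chaos estimate your Stein step is meant to deliver, and in your architecture that step itself needs the token bound as input. (A crude bound on the generation rate, e.g.\ $L\cdot\max_{\vek}$ request rate, can exceed $\arr_i r$ by $\Thebrac{r}$, so no unconditional drift argument of this kind is available; likewise your claim $\Prob(\tok_i=0)=\Obrac{r^{-0.5}}$ is obtained in the paper only \emph{after} the Wasserstein bound, by drift identities for $Z_i$ and $Z_i^2$ combined with an $\Obrac{r^{0.5}}$ concentration of the aggregate request rate proved via Lemma~\ref{lem:sat:w1-distance} and a variance bound over $L$ i.i.d.\ copies.) The paper breaks this circularity by policy design rather than by a sharper Lyapunov analysis, and then recovers the real-configuration statements by bounding virtual jobs and backup-server jobs; to make your route work you would either have to adopt that mechanism (and then prove the virtual-job bound) or supply a genuinely new, non-circular argument for the token bound, which the current sketch does not contain.
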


\paragraph*{\textbf{Optimal Single-Server Policy.}}
Theorem~\ref{theo:sat:conversion-general} together with the lower bound in Theorem~\ref{theo:sat:lower-bound} reduces the infinite-server problem $\isp((\arr_i r)_{i\in\jobspace}, \budget)$ in \eqref{eq:iss-opt-problem} to the single-server problem $\ssp((\arr_i r)_{i\in\jobspace}, \budget)$ in \eqref{eq:sat:ssss-opt-problem}. 
We can obtain the optimal single-server policy, \textsc{Single-OPT}, by solving a linear program, as stated in the theorem below. 
\begin{theorem}[Optimality of Single-OPT, Informal]\label{theo:single-opt}
    There exists a linear program $\sslp((\arr_i)_{i\in\jobspace}, \budget)$ that is equivalent to the single-server problem $\ssp((\arr_i r)_{i\in\jobspace}, \budget)$. In particular, we can construct an optimal Markovian policy for $\ssp((\arr_i r)_{i\in\jobspace}, \budget)$ from the optimal solution of  $\sslp((\arr_i)_{i\in\jobspace}, \budget)$.
\end{theorem}
Proof of \Cref{theo:single-opt} and details on the construction of the optimal policy are given in Appendix~\ref{sec:solve-single-lp}.

\begin{proof}[\textbf{Proof of the Theorem~\ref{theo:asymp-opt} Based on Theorems~\ref{theo:sat:lower-bound}--\ref{theo:single-opt}.}]
    Because \textsc{Single-OPT} (along with an optimal stationary distribution) optimally solves $\ssp((\arr_i r)_{i\in\jobspace}, \budget)$, it achieves the optimal value $\sysbar{\nact}^*$. Let $\policy$ be \metapolicy\ with a subroutine \textsc{Single-OPT}, then according to Theorem~\ref{theo:sat:conversion-general}, we have $\nact(\policy) \leq \left(1 + \Obrac{r^{-0.5}}\right) \cdot \sysbar{\nact}^*$ and $\costp(\policy) \leq \left(1 + \Obrac{r^{-0.5}}\right)\cdot \budget$.
    By Theorem~\ref{theo:sat:lower-bound}, we also have 
    $
        \nact^* \geq \sysbar{\nact}^*.
    $
    So we conclude that \metapolicy\ with a subroutine \textsc{Single-OPT} is $\left(1+O(r^{-0.5}), 1+O(r^{-0.5})\right)$-optimal.
\end{proof}

\section{Proposed meta-policy: \metapolicyfull\ (\metapolicy)}\label{sec:sat:jrs-def}

In this section, we describe our meta-policy, \metapolicyfull\ (\metapolicy), in full detail. For ease of presentation, we focus on the case where the subroutine policy $\sysbar{\policy}$ for \metapolicy\ is \textit{$\vek^0$-irreducible}, i.e., under $\sysbar{\policy}$, there exists a configuration $\vek^0$ such that the single-server system can return to $\vek^0$ from any other configurations (which is equivalent to assuming that the configuration of the single-server system under policy $\sysbar{\policy}$ forms a \emph{unichain}). The algorithm for the general case is given in \Cref{sec:proof-conv-general}.

\subsection{How the Single-Server Policy Requests Jobs} \label{sec:how-single-server-request-jobs}
Before going into the definition of \metapolicy, we first take a closer look at how the Markovian \emph{single-server} policy $\sysbar{\policy}$ requests jobs, to avoid potential ambiguity caused by the fact that a single-server policy can request jobs at \textit{any time}. 
Let $a_i$ denote the number of type $i$ jobs requested, and let $\vea \triangleq (a_i)_{i\in\jobspace}$.
We say $\vea$ is \textit{feasible} if the total number of jobs on the server does not exceed $\Kmax$ after adding the jobs. 
The policy $\sysbar{\policy}$ performs one of the following two types of requests based on the current configuration. 
\begin{itemize}[leftmargin=1em]
    \item \textbf{Reactive requests.} A \emph{reactive request }is triggered by \emph{either an internal transition or a departure}. 
    The changes in the configuration when a reactive request is made can be represented by the diagram 
    \[
        \vek \to \vek' \to \vek' + \vea,
    \]
    where $\vek\to\vek'$ is due to the internal transition or departure, and $\vek' \to \vek' + \vea$ happens since the policy immediately requests $\vea$ jobs. 
    The policy $\sysbar{\policy}$ specifies a probability distribution over all feasible $\vea$ when it decides to perform reactive requests for the configuration $\vek'$. 
    \item\textbf{Proactive requests.} A \emph{proactive request} happens on its own, and it happens at a \emph{finite rate} depending on the current configuration of the server. 
    The change of the configuration when a proactive request happens can be represented by the diagram 
    \[
        \vek \to \vek + \vea. 
    \]
    More specifically, suppose the policy $\sysbar{\policy}$ decides to perform proactive requests for a configuration~$\vek$.  Then for each feasible $\vea$, the policy $\sysbar{\policy}$ specifies a rate and runs a timer with an exponentially distributed duration with the specified rate. 
    When the timer ticks, $\vea$ jobs are requested. When the configuration changes, all the timers are canceled and restarted with new rates based on the new configuration. 
\end{itemize}

\subsection{Description of \metapolicyfull\ (\metapolicy)}\label{sec:sat:JRRS-irreducible}
The inputs of \metapolicy\ include: (i) the single-server policy $\sysbar{\policy}$, (ii) the objective value of $\sysbar{\policy}$ in the single-server problem \eqref{eq:sat:ssss-opt-problem}, denoted as $\sysbar{\nact}$, and (iii) the transition rates in the job model.

We first divide the infinite server pool into two sets based on the server index $\ell$. Let $L = \lceil\sysbar{\nact} \rceil$. We call servers with index $\ell \leq L$ \textit{normal servers}; we call servers with index $\ell > L$ \textit{backup servers}. The normal servers are responsible for serving most of the jobs, while the backup servers are activated only to handle overflow jobs (jobs that are not dispatched to normal servers).

The \metapolicy\ is specified in two steps. 
\begin{itemize}[leftmargin=1em]
    \item \textbf{Step 1 (Job Requesting on a Normal Server):} We let each normal server request jobs using its subroutine, the single-server policy  $\sysbar{\policy}$.
    The input to the policy $\sysbar{\policy}$ is what we refer to as the \emph{observed configuration} of the server, which will be further explained below.
    When $\sysbar{\policy}$ requests $\vea = (a_i)_{i\in\jobspace}$ jobs, $a_i$ type $i$ \textit{tokens} are generated for each $i\in\jobspace$ to store the job requests. 
    The server \emph{pauses} the job requesting process if it already has any type of tokens, and resumes when all the tokens that it generated are removed. 
    \item \textbf{Step 2 (Arrival Dispatching):} 
    \begin{itemize}[leftmargin=1.5em]
        \item \textbf{Real jobs.} When a type $i$ job arrives, the dispatcher chooses a type $i$ token uniformly at random, removes the token, and assigns the job to the corresponding server. When there are no type $i$ tokens, the dispatcher sends the job to an idle backup server. 
        \item \textbf{Virtual jobs.} When the total number of type $i$ tokens throughout the system exceeds the limit $\toklim = \lceil \sqrt{L} \rceil$ (called the \textit{token limit}), a type $i$ \textit{virtual arrival} is triggered, which causes the dispatcher to choose a type $i$ token uniformly at random, remove the token, and assign a \textit{virtual job} to the corresponding server. A virtual job has the same transition dynamics as a real job but does not consume physical resources. 
    \end{itemize} 
\end{itemize}
The \emph{observed configuration} of a normal server in Step~1 is the configuration resulting from real jobs and virtual jobs combined.
That is, it is a vector whose $i$-th entry represents the total number of real and virtual jobs in phase $i$ on this server.
The observed configuration changes when there is a new real or virtual job arrival assigned to the server, or when a real or virtual job on the server has a phase transition or departs.
We update the input to the policy $\sysbar{\policy}$ when the observed configuration changes.
Whenever the observed configuration changes, the policy $\sysbar{\policy}$ cancels the exponential timers in progress;
but a reactive request from the policy $\sysbar{\policy}$ can only be triggered when a real or virtual job on the server has a phase transition or departs.

\subsubsection*{\textbf{Intuition behind \metapolicy}}
To provide a better understanding of the main design ideas of \metapolicy, here we give an intuitive description of how it works. 
Broadly, servers generate job requests and store unfulfilled requests as tokens; the dispatcher assigns jobs to servers according to the tokens to fulfill job requests.
This is the mechanism for matching job arrivals with requests, which is referenced at the end of \Cref{sec:intro:approach}.  
However, rather than matching all tokens with job arrivals, \metapolicy\ opts to convert some of the tokens into virtual jobs to keep the total number of tokens within an upper limit $\toklim$. 
By capping the number of tokens, \metapolicy\ ensures that the job requests generated by each server get fulfilled quickly (either by a real job or a virtual job), and thus the observed configurations of servers maintain proximity to i.i.d.\ copies of the single-server systems. 

The choice of the token limit $\toklim=\Theta(\sqrt{r})$ balances two key considerations.
On the one hand, a smaller $\toklim$ brings the observed configurations closer to i.i.d.\ copies of single-server systems.
On the other hand, if $\toklim$ is overly small, the rate of generating virtual jobs becomes high and the probability for a job arrival to see no tokens is also high.
As a result, the observed configurations, which include both real and virtual jobs, deviate from the real-job configurations.
A more in-depth discussion on the role of tokens and virtual jobs and whether they are fundamental is in \Cref{sec:discuss-token-virtual-job}.

\subsection{Practical considerations in implementing \metapolicyfull\ (\metapolicy)}\label{sec:sat:JRRS-discussion}
\subsubsection*{\textbf{Computational complexity of \metapolicy}}
The computational complexity of \metapolicy\ consists of two components: the \emph{offline} component that computes a single-server policy $\sysbar{\policy}$ and its objective value $\sysbar{\nact}$, and the \emph{online} component that carries out the two steps of \metapolicy.

The offline component reduces to solving the linear program given in \eqref{eq:sat:ssss-lp} in Appendix~\ref{sec:sat:lp-relaxation}, whose number of optimization variables is linear in the number of feasible configurations times the number of phases, i.e., $|\Kinner|\times|\jobspace|$, on a single server.
Admittedly, $|\Kinner|$ can be large when a single server has a large quantity of resources and there are many job phases.
However, we opt for the view that a single server is not excessively large and the system's scale is primarily captured by the scaling factor $r$.
Therefore, it is advantageous that the computational complexity of this offline component is independent of $r$.

In the online component, the bulk of the computation is in job requesting and virtual job simulation, which can be executed distributedly on the normal servers.
Specifically, each normal server monitors its observed configuration and generates tokens according to the single-server policy $\sysbar{\policy}$; additionally, when a virtual job is assigned to the server, the server simulates the dynamics of the virtual job, i.e., generates random variables corresponding to phase transitions and job departure.
Backup servers do not need to perform any computation beyond serving jobs.

The scheduler, which stores all the tokens, has two responsibilities in the online component: (i)~the scheduler matches each newly arrived job to a token of the same type, chosen uniformly at random, or sends the job to a backup server when there are no tokens of the same type; (ii)~the scheduler monitors the number of tokens of each type and assigns virtual jobs when the number of tokens exceeds the limit $\toklim$.

It is informative to compare the computational complexity of \metapolicy\ with existing algorithms designed for the traditional setting of stochastic bin-packing in service systems, where the resource requirements are non-time-varying \cite{Sto_13,StoZho_13,GhaZhoSri_14,StoZho_15,Sto_17,StoZho_21}.
At a high level, these existing algorithms function as follows: upon the arrival of a job, the scheduler checks the current configurations of all servers and assigns the job to a server whose configuration optimizes certain predefined criteria.
Among these, the GRAND algorithm \cite{StoZho_15,Sto_17,StoZho_21} stands out for its simplicity and asymptotic optimality.
Under GRAND, the scheduler only needs to identify configurations that can accommodate the incoming job and then randomly assigns the job to one of these feasible servers, along with some idle servers.
Compared to \metapolicy, GRAND does not have an offline planning component, and individual servers do not perform computation beyond serving jobs.
The scheduler's role in GRAND is slightly more complex than in \metapolicy.
Consequently, when considering using \metapolicy\ in settings where job resource requirements are non-time-varying, practitioners should weigh whether the additional computational complexity is warranted.

\subsubsection*{\textbf{Model parameter estimation.}}
A limitation of \metapolicy\ is its dependency on known model parameters, including job arrival rates and phrase transition rates.
Such dependency is not present in existing algorithms designed for the setting with non-time-varying resource requirements.
In real-world applications, the model parameters can be estimated from workload traces such as \cite{TirBarDen_20,Wil_19}.
Estimation errors can impact the system's performance, an issue that merits further in-depth investigation in future work.
Here, we provide a preliminary result on the performance degradation due to parameter estimation errors.
Roughly speaking, suppose that the estimation error in the job arrival rate coefficients $\lambda_i$'s and the phase transition rates $\mu_{ii'}$'s and $\mu_{i\perp}$'s are bounded by $\delta\geq 0$ (along with an insensitivity assumption on the single-server problem).
Then if we use \metapolicy\ where the single-server policy is obtained by solving for the optimal single-server policy under the estimated parameters, the resulting \metapolicy\ is $\left(1+ B \delta + O\left(r^{-0.5}\right), 1+ B \delta+O\left(r^{-0.5}\right)\right)$-optimal for any $\delta \leq \delta_{\max}$, where $B$ and $\delta_{\max}$ are positive constants independent of $r$. 
The exact statement is given in Proposition~\ref{theo:misspec} in Appendix~\ref{sec:proof-prop-estimation}, along with a proof.

\subsubsection*{\textbf{Connection to practical algorithms.}} 
Recent progress has been made in addressing the issue of low utilization due to time-varying job resource requirements, notably within Google's datacenters, as discussed in \cite{BasDenRza_21}.
The approach in \cite{BasDenRza_21} makes predictions on the future resource requirements of jobs, which lead to a further prediction on the future peak resource requirement on a server if a newly arrived job were to be sent to that server (assuming no future job arrivals).
This prediction categorizes each server as either feasible or infeasible for the new job, and this binary outcome is subsequently used by a separate scheduler for job assignment.

Our proposed \metapolicy\ policy can be viewed as giving more detailed predictions on whether it is suitable for a server to take on new jobs, represented by the tokens.
The predictions are optimized by taking into account future job arrivals and the stochastic dynamics of jobs.

\section{Proof of Theorem~\ref{theo:sat:conversion-general} (Conversion Theorem) Assuming Irreducibility} \label{sec:sat:conv} 
In this section, we prove Theorem~\ref{theo:sat:conversion-general} to establish the performance guarantee of \metapolicy. 
For ease of presentation, we focus on the case where the subroutine policy $\sysbar{\policy}$ is $\vek^0$-irreducible. 
The proof for the general case is in \Cref{sec:proof-conv-general}. 

This section is organized as follows.
We first provide some preliminaries in \Cref{sec:conv-thm:preliminary}. Then we outline the steps and lemmas needed for the proof in \Cref{sec:conv-thm:steps-lemmas}. In \Cref{sec:conv-thm:proof-thm}, we prove \Cref{theo:sat:conversion-general} based on the lemmas. In \Cref{appx:proof-sat-w1-distance}, we prove one of the lemmas, \Cref{lem:sat:w1-distance}, where we devise a novel approach to employ Stein's method. Finally, in \Cref{sec:discuss-token-virtual-job}, we discuss the role of tokens and virtual jobs and their necessity from a proof perspective. 
The proofs of the rest of the lemmas presented in this section are given in \Cref{appx:conversion}. 

\subsection{\textbf{Preliminaries}}\label{sec:conv-thm:preliminary}
Consider an infinite-server system under the \metapolicy\ policy. For each normal server $\ell$, we describe its status at time $t$ using the following variables: \textit{configuration of real jobs} $\veK^\ell(t)$ (referred to simply as configuration in previous sections), \textit{tokens} $\vetoK^\ell(t)$, \textit{configuration of virtual jobs} $\vevj^\ell(t)$, and \textit{observed configuration} $\syshat{\veK}^\ell(t) \triangleq \veK^\ell(t) + \vevj^\ell(t)$. We use the superscript ``$\allL$'' to refer to a certain descriptor of all normal servers, for example, $\syshat{\veK}^\allL(t) \triangleq \big(\syshat{\veK}^\ell(t)\big)_ {\ell=1,2,\dots L}$. The system under \metapolicy\ is a Markov chain with a unique Markovian representation $((\veK^\ell(t))_{\ell=1, 2, \dots}, \vevj^\allL(t), \vetoK^\allL(t))$. The following lemma shows that the system has a unique stationary distribution (the proof is provided in \Cref{appx:proof-sat-pos-recur}). 

\begin{lemma}[Unique Stationary Distribution]\label{lem:sat:pos-recur}
    Consider an infinite-server system under the \metapolicy\ policy with $\sysbar{\policy}$ as its subroutine, where $\sysbar{\policy}$ is a single-server policy that is Markovian and $\vek^0$-irreducible. Then the state of the system $((\veK^\ell(t))_{\ell=1, 2, \dots}, \vevj^\allL(t), \vetoK^\allL(t))$ has a unique stationary distribution. 
\end{lemma}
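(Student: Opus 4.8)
The plan is to treat the system under \metapolicy\ as a continuous-time Markov chain on a countable state space and to run the Foster--Lyapunov machinery: check non-explosion, exhibit a Lyapunov function with negative drift outside a finite region, and show that a single reference state is accessible from every state. The reference state $s^{\star}$ is the one in which every normal server has observed configuration $\vek^0$ with no tokens and no virtual jobs on it, and every backup server is empty. Non-explosion, the drift condition, and accessibility of $s^{\star}$ from every state together imply, by standard Markov chain theory, that the chain is positive recurrent with a unique recurrent communicating class (the one containing $s^{\star}$; any recurrent class is closed, and since $s^{\star}$ is reachable from it, it must contain $s^{\star}$), hence that the state process has a unique stationary distribution.

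The non-explosion and drift conditions both exploit that all but one coordinate of the state lives in a finite set. Each normal server carries at most $\Kmax$ real-plus-virtual jobs (its observed configuration is in $\Kinner$), there are $L=\ceil{\sysbar{\nact}}$ normal servers, and for every type the number of tokens in the whole system never exceeds $\toklim+\Kmax$, because a virtual arrival is triggered as soon as the count would exceed $\toklim$ and a single request changes a configuration by at most $\Kmax$. Thus the only unbounded part of the state is the collection of real jobs residing on backup servers. Such jobs are placed on a backup server only when an arrival finds no matching token, so their combined arrival rate is at most $\sum_{i\in\jobspace}\Arr_i$; once there they are never re-requested or migrated, and each completes in finite expected time because the phase process is absorbing. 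Hence their total number is stochastically dominated by an $M/M/\infty$-type queue with bounded arrival rate, which yields non-explosion, and taking as Lyapunov function that number weighted by the mean time to absorption from each job's current phase gives a generator drift bounded above by a constant minus the number of backup-server real jobs. Describing the interchangeable backup servers by the multiset of their nonempty configurations, boundedly many backup-server jobs leave only finitely many states, so this drift is at most $-1$ outside a finite set.

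The main obstacle is accessibility of $s^{\star}$ from an arbitrary state, i.e., exhibiting a finite sequence of positive-rate transitions that ends at $s^{\star}$. The difficulty is the feedback among $\sysbar{\policy}$'s requests, the tokens they generate, the virtual arrivals triggered when tokens accumulate, and the departures that themselves trigger reactive requests, so that naively ``draining everything'' need not terminate. I would instead handle the normal servers one at a time --- a server not currently being acted on is simply left frozen --- and, for the server being acted on, simulate a stand-alone single-server system under $\sysbar{\policy}$ by routing, each time $\sysbar{\policy}$ issues a request, a real arrival of the matching type to fulfill it immediately; since fulfilling a token is an arrival, it triggers no new reactive request, so the server's observed configuration executes exactly the transitions of the single-server chain, and by routing these fulfilling arrivals appropriately the token and virtual-job counts stay within their finite ranges throughout. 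Because $\sysbar{\policy}$ is $\vek^0$-irreducible, each normal server can thereby be steered to observed configuration $\vek^0$ with no outstanding request and no virtual jobs, while in parallel all pre-existing virtual jobs and all real jobs on backup servers are drained through their direct paths to the absorbing state. Every step has positive rate, so $s^{\star}$ is accessible; combining this with non-explosion and the drift condition completes the proof.
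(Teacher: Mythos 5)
Your proposal is correct and takes essentially the same route as the paper: accessibility of a common reference state (clear tokens, virtual jobs, and backup-server jobs, then use $\vek^0$-irreducibility of $\sysbar{\policy}$ while letting each request be fulfilled by an arrival immediately after it is issued) combined with a Foster--Lyapunov drift argument built from mean-remaining-time weights. The only cosmetic differences are that the paper's Lyapunov function weights all real and virtual jobs rather than just backup-server jobs, and it does not separately discuss non-explosion; neither changes the substance.
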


Let $\sysbar{\veK}^\allL(t) \triangleq \big(\sysbar{\veK}^\ell(t)\big)_{\ell=1, 2, \dots, L}$ be the configuration vector of $L$ i.i.d.\ copies of the single-server system under $\sysbar{\policy}$. As discussed in Section~\ref{sec:sat:JRRS-irreducible}, we will show that $\veK^\allL(\infty)$ can be approximated by
$\sysbar{\veK}^\allL(\infty)$. 
In the remainder of this section, we omit the steady-state symbol $(\infty)$ for simplicity.

To rigorously discuss the approximation of the steady-state random variables, we define some metrics. Recall that $\Kinner \triangleq \{\vek\colon\sum_{i\in\jobspace} k_i \leq \Kmax\}$ is the set of feasible single-server configurations. Let $\Kinner^L\triangleq \{\vek^{1:L}\colon \vek^\ell \in \Kinner,\forall \ell \}$ be the set of feasible configurations for all normal servers. 
We use $\norm{\cdot}$ to denote the $L^1$ norm in both space $\Kinner$ and space $\Kinner^L$:
\begin{align*}
   \norm{\vek - \vek'} &= \textstyle \sum_{i\in\jobspace} \abs{k_i - k_i'}, \quad \text{for } \vek, \vek'\in\Kinner,\\
    \norm{\vek^\allL - \vek'^\allL} &= \textstyle \sumL \norm{\vek^\ell - \vek'^\ell}, \quad \text{for } \vek^\allL, \vek'^\allL \in \Kinner^L.
\end{align*}
For any two random variables $\ve{U}^{a}, \ve{U}^{b} \in \Kinner^L$, 
their closeness will be measured in terms of Wasserstein distance as follows:
\begin{equation*}
    d(\ve{U}^{a}, \ve{U}^{b}) \triangleq \sup_{f\in \text{Lip}(1)} \left\{ \E\left[f\left(\ve{U}^{a}\right)\right] - \E[f(\ve{U}^{b})] \right\},
\end{equation*}
where the supremum is taken over the all Lipschitz-$1$ functions from $\Kinner^L$ to $\mathbb{R}$.

\subsection{\textbf{Steps and Lemmas Needed for the Proof of Theorem~\ref{theo:sat:conversion-general} Assuming Irreducibility}}\label{sec:conv-thm:steps-lemmas} 
Our goal is to show that the steady-state distribution of the normal servers' real-job configurations $\veK^\allL$ is close to the steady-state distribution of i.i.d.\ copies of the single-server systems $\sysbar{\veK}^\allL$ in Wasserstein distance, and that the backup servers are almost empty as the arrival rate gets large.
More formally, we aim to show that $d(\veK^\allL, \sysbar{\veK}^\allL) = O(\sqrt{r})$ and $\sum_{\ell=L+1}^\infty \sum_{i\in\jobspace} K_i^\ell=O(\sqrt{r})$ as $r\to\infty$. 
These two bounds provide the performance guarantee claimed in \Cref{theo:sat:conversion-general}. 

Instead of directly looking into the distribution of real-job configuration $\veK^\allL$, we show that the distribution of each of the three sums, $\veK^\allL + \vevj^\allL + \vetoK^\allL$, $\veK^\allL + \vevj^\allL$, and $\veK^\allL$, can be approximated by the distribution of $\sysbar{\veK}^\allL$ in Wasserstein distance. The approximation result for each sum helps us derive the approximation result for the sum with one fewer term. The result that the backup servers are almost empty also follows from these approximations. 
This sequence of approximations is illustrated in the figure below, where recall that $\syshat{\veK}^\ell(t) \triangleq \veK^\ell(t) + \vevj^\ell(t)$ is the observed configuration.

\begin{figure}[h]
    \vspace{-2ex}
    \centering
    \includegraphics[height=1.2cm]{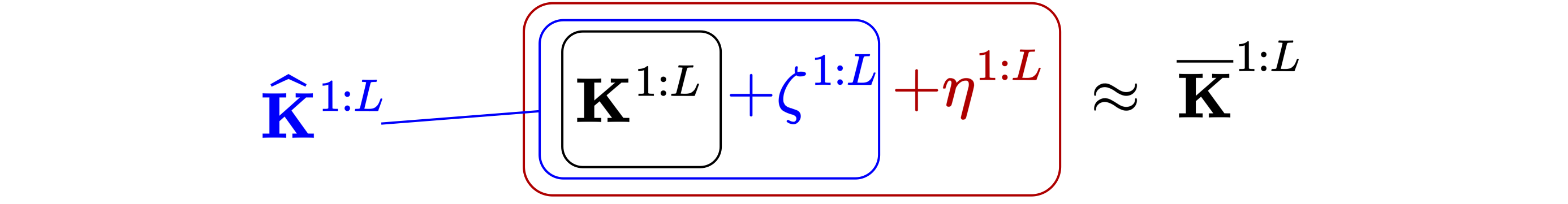}
    \caption*{}
    \label{fig:proof-illustration}
    \vspace{-8ex}
\end{figure}

A crucial observation that leads to this stepwise proof is that the process $(\syshat{\veK}^\allL(t), \vetoK^\allL(t))$ forms a Markov chain on its own. This is because real jobs and virtual jobs have the same transition dynamics and are indistinguishable by the subroutine when requesting jobs. 
Moreover, by the construction of \metapolicy, the Markov chain $(\syshat{\veK}^\allL(t), \vetoK^\allL(t))$ governs the dynamics of the virtual-job configurations $\vevj^\allL(t)$ and the configurations on backup servers. 

Our proof consists of two steps. In {\bf Step 1}, we focus on the process $(\syshat{\veK}^\allL(t), \vetoK^\allL(t))$. We show that $d(\syshat{\veK}^\allL + \vetoK^\allL, \sysbar{\veK}^\allL) = O(\sqrt{r})$, which immediately implies $d(\syshat{\veK}^\allL, \sysbar{\veK}^\allL) = O(\sqrt{r})$ because we have limited the total number of tokens to $O(\sqrt{r})$. 
In {\bf Step 2}, we use the approximation result for $\syshat{\veK}^\allL$ in {\bf Step 1} to show that the total number of virtual jobs, $\sum_{i\in\jobspace} \sumL \vj_i^\ell$, and the total number of jobs on backup servers are both $O(\sqrt{r})$. Recall that $\veK^\allL = \syshat{\veK}^\allL - \vevj^\allL$, so we get $d(\veK^\allL, \sysbar{\veK}^\allL)=O(\sqrt{r})$.

Next, we state the specific lemmas. 

\subparagraph*{\textbf{Step 1.}} Lemma~\ref{lem:sat:w1-distance} below bounds the Wasserstein distance between $\syshat{\veK}^\allL$ and $\sysbar{\veK}^\allL$.

\begin{lemma}\label{lem:sat:w1-distance} Under the conditions of Theorem \ref{theo:sat:conversion-general} and $\sysbar{\policy}$ being $\vek^0$-irreducible, we have
    \begin{equation*}
        d\left(\syshat{\veK}^\allL, \sysbar{\veK}^\allL \right) = \Obrac{\sqrt{r}}.
    \end{equation*}
\end{lemma}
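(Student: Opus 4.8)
The plan is to prove Lemma~\ref{lem:sat:w1-distance} via Stein's method (the generator comparison / Poisson equation approach), comparing the Markov chain $(\syshat{\veK}^\allL(t), \vetoK^\allL(t))$ of the infinite-server system against the product chain $\sysbar{\veK}^\allL(t)$ of $L$ i.i.d.\ single-server systems under $\sysbar{\policy}$. First I would set up the Stein machinery: for any Lipschitz-$1$ test function $f\colon\Kinner^L\to\R$, let $g$ solve the Poisson equation $\sysbar{\mathcal{A}} g = f - \E[f(\sysbar{\veK}^\allL)]$, where $\sysbar{\mathcal{A}}$ is the generator of the product single-server chain. Because this chain is a product of $L$ independent finite-state irreducible (unichain, by the $\vek^0$-irreducibility assumption) CTMCs, $g$ exists and, crucially, inherits Lipschitz-type regularity: I expect $g$ to be Lipschitz with a constant that does \emph{not} grow with $L$ (each coordinate's contribution is controlled by the mixing of a single fixed single-server chain), and similarly its discrete second differences in each coordinate are $O(1)$. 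Establishing these gradient bounds on $g$ uniformly in $L$ — this is really the heart of a Stein argument and where the "fixed single-server system" structure pays off — is the first technical milestone.

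Next I would take expectations in steady state: since $\syshat{\veK}^\allL$ (embedded in the full chain with tokens) is stationary, $\E[\mathcal{A}_{\text{inf}} g(\syshat{\veK}^\allL, \vetoK^\allL)] = 0$ where $\mathcal{A}_{\text{inf}}$ is the true generator acting on the infinite-server state. Therefore
\[
\E[f(\syshat{\veK}^\allL)] - \E[f(\sysbar{\veK}^\allL)] = \E\big[\big(\sysbar{\mathcal{A}} - \mathcal{A}_{\text{inf}}\big) g(\syshat{\veK}^\allL, \vetoK^\allL)\big],
\]
with the convention that $\sysbar{\mathcal{A}}g$ is evaluated at $\syshat{\veK}^\allL$ ignoring tokens. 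The generator difference has a clean interpretation: the phase-transition and departure terms for real-plus-virtual jobs are \emph{identical} in the two systems (virtual jobs move like real ones), so those cancel exactly. What remains is the mismatch in how requested jobs enter: in the product chain, when $\sysbar{\policy}$ requests a type-$i$ job it arrives instantaneously, whereas in the infinite-server system the request becomes a token that is only fulfilled (by a real arrival or a virtual arrival) after some delay. So the generator difference is supported on server/token configurations with at least one outstanding token, and each such term is bounded by the (uniform) first-difference bound on $g$.

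The remaining step is to bound $\E[(\sysbar{\mathcal{A}} - \mathcal{A}_{\text{inf}}) g]$ by $O(r^{0.5})$. The key input is the token-lifetime estimate already flagged in Section~\ref{sec:sat:JRRS-irreducible}: tokens are generated at rate $\Theta(L)=\Theta(r)$, at most $\toklim=\lceil L^{0.5}\rceil$ coexist, so by Little's Law the expected number of outstanding tokens is $O(\toklim)=O(r^{0.5})$, and more refined per-server bounds give that the steady-state probability a given normal server has an outstanding token is $O(r^{-0.5})$. Summing the $O(1)$-per-term generator-difference contributions over the $L$ servers, weighted by these probabilities, yields $L \cdot O(1) \cdot O(r^{-0.5}) = O(r^{0.5})$; one must also separately handle the (rare) virtual-arrival events, whose total rate is bounded because virtual arrivals only fire when the token count hits $\toklim$, and these again contribute $O(r^{0.5})$. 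I expect the main obstacle to be the uniform-in-$L$ gradient bound on the Stein solution $g$: one needs that solving the Poisson equation for a product of $L$ chains does not amplify Lipschitz constants with $L$, which requires either an explicit contraction/coupling estimate for the single-server chain (a fixed finite CTMC, so geometric ergodicity with $L$-independent rate) and an additivity argument across coordinates, or a synchronous-coupling representation of $g$'s differences. The token-lifetime bookkeeping and the verification that outstanding-token probability is genuinely $O(r^{-0.5})$ per server (not just $O(r^{-0.5})$ on average, which would already suffice here but is worth stating cleanly) is the second place where care is needed, but it is largely a Little's-Law computation combined with the structure of \metapolicy.
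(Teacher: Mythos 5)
Your overall strategy coincides with the paper's: Stein's method with the Poisson equation for the product of $L$ single-server chains, a coupling-based Stein factor bound (using $\vek^0$-irreducibility to get a finite meeting time, uniformly in $L$ since only one coordinate differs), and token-count bookkeeping with $\toklim = \Thebrac{r^{0.5}}$. However, there is a genuine gap in the generator-comparison step as you set it up. You evaluate the Stein solution $g$ at the observed configuration $\syshat{\veK}^\allL$ ``ignoring tokens,'' and then claim the generator difference is supported on states with outstanding tokens and is bounded by summing $O(1)$ contributions weighted by per-server token probabilities, giving $L\cdot O(1)\cdot \Obrac{r^{-0.5}}$. Under your convention this accounting is wrong: the infinite-server generator acting on $(\vek^\allL,\vetok^\allL)\mapsto g(\vek^\allL)$ contains real- and virtual-arrival terms firing at total rate $\Thebrac{r}$, each changing $g$ by $O(1)$, and it also differs from $\sysbar{\gen}$ on \emph{tokenless} servers, because there a request moves the observed configuration only to $\vek'$ (a token is created) while the single-server generator jumps to $\vek'+\vea$. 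Each of these two families of terms is of order $\Thebrac{r}$ in absolute value; they only cancel against each other up to a ``delay'' error (the configuration of the server at the moment its token is fulfilled differs from its configuration at request time), and your proposal supplies no argument controlling that mismatch. The phase-transition/departure cancellation you invoke does not touch these dominant terms.

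The paper's proof closes exactly this gap with a token-augmentation device: it compares $\sysbar{\gen} g_f$ evaluated at $\syshat{\veK}^\allL+\vetoK^\allL$ against $\syshat{\gen}$ applied to $\psi_f(\vek^\allL,\vetok^\allL) = g_f(\vek^\allL+\vetok^\allL)$, which is well defined because the per-server sum of real jobs, virtual jobs, and tokens never exceeds $\Kmax$, so it stays in $\Kinner$. With this choice, an arrival (real or virtual) leaves the argument $\syshat{\veK}^\ell+\vetoK^\ell$ unchanged, and a request on a tokenless server changes it by exactly $\vea$, i.e., exactly as in the single-server system; hence the generator difference is genuinely supported on the servers currently holding tokens, of which there are at most $\abs{\jobspace}\toklim = \Obrac{r^{0.5}}$ \emph{deterministically} (no Little's-Law or per-server probability estimate is needed at this point). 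Combined with the $O(1)$ single-coordinate Stein factor bound (your second-difference bounds are not needed), and a final transfer from $\syshat{\veK}^\allL+\vetoK^\allL$ back to $\syshat{\veK}^\allL$ at an extra $\Obrac{r^{0.5}}$ cost, this yields the lemma. To repair your proposal you would either adopt this augmented test function or prove a quantitative bound on the request-to-fulfillment configuration drift; as written, the step bounding $\E[(\sysbar{\gen}-\syshat{\gen})g]$ by $\Obrac{r^{0.5}}$ does not go through.
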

The key challenge for proving \Cref{lem:sat:w1-distance} is that the job dispatching decisions are based on the configurations of \emph{all normal servers}, which creates dependencies among the transitions of different servers. 
The key idea that helps us overcome this challenge is to consider the sum $\syshat{\veK}^\allL+ \vetoK^\allL$, which remains unchanged under job arrivals regardless of dispatching decisions. Observe that $\syshat{\veK}^\allL+ \vetoK^\allL$ has  \emph{decoupled} dynamics across servers because it is only changed by internal phase transitions, departures, and requests of new jobs, which happen independently on each server. This helps us prove $d(\syshat{\veK}^\allL+ \vetoK^\allL, \sysbar{\veK}^\allL ) = \Obrac{\sqrt{r}}$, which implies \Cref{lem:sat:w1-distance}, as argued earlier in the section.

Formally, the proof of Lemma~\ref{lem:sat:w1-distance} makes use of Stein's method (see, e.g., \citep{BraDai_17, BraDaiFen_17, Bra_22}) to compare  $\syshat{\veK}^\allL+ \vetoK^\allL$ with $\sysbar{\veK}^\allL$. 
Stein's method usually consists of three steps: generator comparison, Stein factor bound, and moment bound. In our case, due to the finiteness of the state space $\Kinner$, we only need to do the generator comparison and the Stein factor bound. In the generator comparison step, we show that the instantaneous transition rates of $\syshat{\veK}^\allL+ \vetoK^\allL$ match with those of $\sysbar{\veK}^\allL$; in the Stein factor bound step, we show that small difference in the transition rates of $\syshat{\veK}^\allL+ \vetoK^\allL$ and $\sysbar{\veK}^\allL$ does not cause much increase in the overall distance of the distributions. The detailed proof is in \Cref{appx:proof-sat-w1-distance}.

\subparagraph{\textbf{Step 2.}} We establish Lemma~\ref{lem:sat:virtual-jobs} and Lemma~\ref{lem:sat:overflow-jobs} below, which bound the steady-state expected number of virtual jobs and the jobs on backup servers. 
\begin{lemma}\label{lem:sat:virtual-jobs} Under the conditions of Theorem \ref{theo:sat:conversion-general} and $\sysbar{\policy}$ being $\vek^0$-irreducible, for each $i\in\jobspace$, the steady-state expected number of virtual jobs of type $i$ is of the order $O(\sqrt{r})$, i.e.,
    \begin{equation*}
        \E\left[\textstyle \sumL \vj_i^\ell\right] = \Obrac{\sqrt{r}}.
    \end{equation*}
\end{lemma}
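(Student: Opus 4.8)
The plan is to use Little's Law applied to the virtual-job population of each type, exactly as foreshadowed in the ``Intuition for why \metapolicy\ works'' paragraph of Section~\ref{sec:sat:JRRS-irreducible}. Fix a type $i\in\jobspace$. First I would argue that virtual jobs of type $i$ are created at a bounded rate: a type $i$ virtual arrival occurs only when a type $i$ job arrives (rate $\arr_i r = \Thebrac{r}$) \emph{and} the total number of type $i$ tokens currently equals the token limit $\toklim = \ceil{L^{0.5}}$. In particular the creation rate of type $i$ virtual jobs is at most $\arr_i r = \Obrac{r}$. Second, each virtual job behaves as an independent copy of the phase-transition CTMC on $\jobspace\cup\{\perp\}$, so it stays in the system for an expected time bounded by some constant $\bar{c}$ depending only on the rates $\ser_{ii'}$ and the service limit $\Kmax$ (the expected absorption time of a finite irreducible-to-$\perp$ CTMC, uniformly bounded over starting phases). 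By Little's Law applied to the (time-stationary, by Lemma~\ref{lem:sat:pos-recur}) population of type $i$ virtual jobs, the steady-state expected number of such virtual jobs is at most (creation rate)$\times$(mean sojourn time) $= \Obrac{r}\cdot\Obrac{1} = \Obrac{r}$ — but this is too weak; I need $\Obrac{r^{0.5}}$.

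The improvement comes from bounding the \emph{effective} creation rate of virtual jobs rather than the worst-case rate. The key observation is that type $i$ virtual arrivals only fire while the type $i$ token count sits at its ceiling $\toklim$, and I claim the stationary probability of that event is $\Obrac{r^{-0.5}}$. To see this, recall that the total number of type $i$ tokens across all normal servers is a nonnegative integer bounded by $\toklim = \Thebrac{r^{0.5}}$, and its long-run average can be controlled by a Little's-Law argument on tokens: type $i$ tokens are generated at rate $\Obrac{L} = \Obrac{r}$ (each normal server generates tokens at a bounded rate via $\sysbar{\policy}$, by the finite-state-space and bounded-rate structure of the single-server policy), and each token is removed within $\Obrac{1}$ expected time once the arrival rate is $\Thebrac{r}$ — this last fact is itself what Step~1 / Lemma~\ref{lem:sat:w1-distance} and the surrounding analysis deliver, since a token is removed by the next type $i$ arrival and there are $\Thebrac{r}$ of them per unit time while at most $\toklim$ tokens compete. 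Hence the expected total number of type $i$ tokens is $\Obrac{L}\cdot\Obrac{r^{-1}} = \Obrac{1}$. By Markov's inequality, $\Prob(\text{type }i\text{ token count} \ge \toklim) \le \Obrac{1}/\toklim = \Obrac{r^{-0.5}}$. Therefore the effective rate at which type $i$ virtual jobs are created is at most $\arr_i r \cdot \Obrac{r^{-0.5}} = \Obrac{r^{0.5}}$, and feeding this into Little's Law with the $\Obrac{1}$ mean sojourn time of a virtual job yields $\E[\sumL \vj_i^\ell] = \Obrac{r^{0.5}}$, as claimed.

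The main obstacle is making the token-removal-time bound rigorous and circular-reference-free: I need that a type $i$ token is removed in $\Obrac{1/r}$ expected time (equivalently, that the expected token count is $\Obrac{1}$) \emph{without} already invoking the conclusion of the lemma. The clean way is to note that whenever there is at least one type $i$ token present, the next type $i$ real arrival — occurring at rate $\arr_i r$ — removes one type $i$ token (the dispatcher picks a type $i$ token uniformly at random and removes it); virtual arrivals can only remove additional tokens. So the type $i$ token count is stochastically dominated by an $M/M/\infty$-like birth-death process with birth rate $\Obrac{r}$ and per-token death rate $\ge \arr_i r / \toklim$ when near the ceiling — more simply, one shows directly via a drift/Lyapunov argument on the token count (using $V(\text{tokens}) = (\text{type }i\text{ token count})$, whose drift is negative at rate $\ge \arr_i r - \Obrac{r}\cdot\Obrac{1}$... ) that the stationary expectation is $\Obrac{1}$. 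I would carry out this one-dimensional drift computation carefully, since it is the linchpin; everything else is the two applications of Little's Law and the elementary bound on the absorption time of the job CTMC. The remaining routine check is that Lemma~\ref{lem:sat:pos-recur} legitimizes applying Little's Law (finite mean rates, well-defined stationary regime), which I would cite rather than reprove.
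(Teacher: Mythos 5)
There is a genuine gap, and it sits exactly at the step you yourself flag as the linchpin. Your route needs $\E[\ttK_i]=\Obrac{1}$ (so that Markov's inequality gives $\Prob(\ttK_i\geq\toklim)=\Obrac{r^{-0.5}}$), and you propose to get it from a one-dimensional drift argument in which the token count has removal rate $\arr_i r$ and generation rate ``$\Obrac{r}$'', with a net negative drift of order $r$. But by construction the system is critically balanced: $L=\ceil{\sysbar{\nact}}$ is chosen precisely so that the aggregate token generation rate is $L\cdot\reqrate_i=\arr_i r+\Obrac{1}$. The drift of $\ttK_i$ is therefore $\Obrac{1}$ (essentially zero), not $-\Thebrac{r}$, and no Lyapunov argument on $V(\ttk_i)=\ttk_i$ can yield a constant-order stationary mean. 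Indeed, the paper's own intuition is that $\ttK_i$ behaves like a \emph{reflected symmetric random walk} on $[0,\toklim]$, approximately uniform, so $\E[\ttK_i]=\Thebrac{\toklim}=\Thebrac{r^{0.5}}$; your Markov bound then gives only $\Prob(\ttK_i\geq\toklim)=\Obrac{1}$, which is vacuous. Relatedly, your claim that an individual token is removed in $\Obrac{1/r}$ expected time is off by the competition factor: an arrival removes one token chosen among up to $\toklim$ of them, so the per-token removal rate is only $\arr_i r/\toklim=\Thebrac{r^{0.5}}$, giving $\Obrac{r^{-0.5}}$ lifetimes and again $\E[\ttK_i]=\Obrac{r^{0.5}}$, not $\Obrac{1}$. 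A secondary but real inaccuracy: virtual arrivals are triggered when a \emph{request} pushes the token count above $\toklim$ (the overflow $(\ttk_i+a_i-\toklim)^+$), not when a real type~$i$ job arrives while the count equals $\toklim$; real arrivals only remove tokens.

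What the paper does instead, and what your proposal is missing, is a way to bound the frequency of boundary (overflow) events for a zero-drift reflected process: it applies the stationary drift identity to the quadratic test function $g(\ttk_i)=\ttk_i^2$, which expresses $\toklim$ times the overflow rate as $\E\bigl[(\text{generation rate}-\arr_i r)\cdot\ttK_i\bigr]$ plus second-moment terms of order $r$; dividing by $\toklim=\Thebrac{r^{0.5}}$ leaves $\Obrac{r^{0.5}}$ provided one can show $\E\bigl[\,\abs{\sumL\sumka\rate{\vek',\vea}{\syshat{\veK}^\ell}a_i-\arr_i r}\,\bigr]=\Obrac{r^{0.5}}$. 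That concentration step is where Lemma~\ref{lem:sat:w1-distance} (Stein's method) genuinely enters --- replacing $\syshat{\veK}^\ell$ by i.i.d.\ copies $\sysbar{\veK}^\ell$ at a cost of $\Obrac{r^{0.5}}$ --- followed by a variance bound for the i.i.d.\ sum. Your proposal cites Lemma~\ref{lem:sat:w1-distance} only in passing and for a statement it does not provide (an $\Obrac{1}$ token-removal time), so the quantitative core of the lemma is not reached; the two Little's-Law applications you describe (bounded sojourn time of virtual jobs, tokens' lifetime) match the paper's Step~1 but cannot by themselves produce the $r^{0.5}$ scaling.
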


\begin{lemma}\label{lem:sat:overflow-jobs} Under the conditions of Theorem \ref{theo:sat:conversion-general} and $\sysbar{\policy}$ being $\vek^0$-irreducible, for each $i\in\jobspace$, the steady-state expected number of type $i$ jobs on backup servers is of the order $O(\sqrt{r})$, i.e.,
    \begin{equation*}
     \E\left[\textstyle \sum_{\ell=L+1}^\infty K_i^\ell\right]= \Obrac{\sqrt{r}}. 
    \end{equation*}
\end{lemma}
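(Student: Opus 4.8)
The plan is to obtain the bound by writing $\E\left[\sum_{\ell=L+1}^\infty K_i^\ell\right]$ as the difference between the total expected number of type $i$ real jobs in the whole infinite-server system and the expected number of type $i$ real jobs on the $L=\lceil\sysbar{\nact}\rceil$ normal servers, and then pinning down the two terms separately: the first exactly, via Little's law, and the second up to an $\Obrac{r^{0.5}}$ error, via Lemmas~\ref{lem:sat:w1-distance} and~\ref{lem:sat:virtual-jobs}.

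First I would set up the job-dynamics constants. For $j,i\in\jobspace$, let $c_{ji}$ denote the expected total time that a job starting in phase $j$ spends in phase $i$ before departing; this is a finite constant determined solely by the fixed, $r$-independent job-model CTMC, and $\sum_{i}c_{ji}$ is the expected sojourn time of a type $j$ job. Since the job dynamics and the arrival processes do not depend on which servers jobs are dispatched to, Little's law applied to the infinite-server system (which has a unique stationary distribution by Lemma~\ref{lem:sat:pos-recur}) gives $\sum_{\ell=1}^\infty \E[K_i^\ell] = \sum_{j\in\jobspace}\Arr_j c_{ji}$; in particular this quantity is finite, so the difference above is well defined. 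The same argument in one copy of the single-server system under $\sysbar{\policy}$ — which uses the identical job-model CTMC, hence the same $c_{ji}$ — gives $\E[\sysbar{K}_i] = \sum_j \reqrate_j c_{ji}$, and multiplying by $\sysbar{\nact}$ and invoking the feasibility constraint $\sysbar{\nact}\,\reqrate_j = \Arr_j$ yields $\sysbar{\nact}\,\E[\sysbar{K}_i] = \sum_j\Arr_j c_{ji}$. Since the normalized single-server problem~\eqref{eq:sat:ssss-opt-problem-normalized} has no $r$ in it, $\sysbar{\nact} = \Theta(r)$, $\reqrate_j = \Theta(1)$, and $\E[\sysbar{K}_i]\le\Kmax = O(1)$, so $L\,\E[\sysbar{K}_i] = \sysbar{\nact}\,\E[\sysbar{K}_i] + (\lceil\sysbar{\nact}\rceil - \sysbar{\nact})\E[\sysbar{K}_i] = \sum_j\Arr_j c_{ji} + O(1)$.

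Next I would control the normal servers. Using $K_i^\ell = \syshat{K}_i^\ell - \vj_i^\ell$, we have $\sum_{\ell=1}^L\E[K_i^\ell] = \sum_{\ell=1}^L\E[\syshat{K}_i^\ell] - \sum_{\ell=1}^L\E[\vj_i^\ell]$. The map $\vek^\allL\mapsto\sum_{\ell=1}^L k_i^\ell$ is Lipschitz-$1$ with respect to the $L^1$ norm on $\Kinner^L$, so by the definition of $d(\cdot,\cdot)$ and Lemma~\ref{lem:sat:w1-distance}, $\abs{\sum_{\ell=1}^L\E[\syshat{K}_i^\ell] - \sum_{\ell=1}^L\E[\sysbar{K}_i^\ell]} \le d(\sysbar{\veK}^\allL, \syshat{\veK}^\allL) = \Obrac{r^{0.5}}$; and $\sum_{\ell=1}^L\E[\sysbar{K}_i^\ell] = L\,\E[\sysbar{K}_i]$ because $\sysbar{\veK}^\allL$ consists of $L$ i.i.d.\ copies. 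By Lemma~\ref{lem:sat:virtual-jobs}, $0 \le \sum_{\ell=1}^L\E[\vj_i^\ell] = \Obrac{r^{0.5}}$. Combining these with the previous paragraph gives $\sum_{\ell=1}^L\E[K_i^\ell] = \sum_j\Arr_j c_{ji} + \Obrac{r^{0.5}}$. Subtracting from $\sum_{\ell=1}^\infty\E[K_i^\ell] = \sum_j\Arr_j c_{ji}$ yields $\E\left[\sum_{\ell=L+1}^\infty K_i^\ell\right] = \Obrac{r^{0.5}}$, which is the claim.

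This lemma is essentially a corollary of Lemmas~\ref{lem:sat:w1-distance} and~\ref{lem:sat:virtual-jobs}, so the work is bookkeeping rather than a deep obstacle; the part that needs care is the two Little's law applications. One must confirm that the total expected number of type $i$ real jobs in the infinite-server system is finite — this follows from the $H = \lambda G$ identity itself, given the stationary distribution from Lemma~\ref{lem:sat:pos-recur} — so that the subtraction of the two terms is legitimate, and one must confirm that the phase-occupancy constants $c_{ji}$ coincide in the infinite-server and single-server systems, which holds because both are governed by the identical job-model CTMC. A secondary point is to track that each $\Obrac{\cdot}$ constant depends only on the fixed job model and single-server policy (through $\abs{\jobspace}$, $\Kmax$, the $c_{ji}$, and $\sysbar{\nactsmall}$), not on $r$.
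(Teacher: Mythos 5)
Your proof is correct, but it takes a genuinely different route from the paper's. The paper proves Lemma~\ref{lem:sat:overflow-jobs} jointly with Lemma~\ref{lem:sat:virtual-jobs} by analyzing the token-count process $\ttK_i(t)$: it applies Little's Law to bound $\E[\toF_i]$ by the rate $\arr_i r\,\E[\dtof_i]$ at which arrivals find no token, and then controls that rate through a drift analysis of the test functions $\ttk_i$ and $\ttk_i^2$ (the reflected-random-walk picture on $[0,\toklim]$), with Lemma~\ref{lem:sat:w1-distance} invoked to show the token generation rate concentrates around $\arr_i r$. You instead use a job-conservation argument: the total number of phase-$i$ real jobs in the whole system is an autonomous, policy-independent quantity whose stationary mean is exactly $\sum_j \Arr_j c_{ji}$ by Little's Law; the same constants applied to the single-server system plus the feasibility constraint $\sysbar{\nact}\cdot\reqrate_j=\arr_j r$ give $L\,\E[\sysbar{K}_i]=\sum_j\Arr_j c_{ji}+O(1)$; and the normal-server count matches $L\,\E[\sysbar{K}_i]$ up to $\Obrac{r^{0.5}}$ by Lemma~\ref{lem:sat:w1-distance} (with the Lipschitz-$1$ test function $\sum_\ell k_i^\ell$) and Lemma~\ref{lem:sat:virtual-jobs}, so the backup-server count, being the nonnegative difference, is $\Obrac{r^{0.5}}$. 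Your route is more elementary for this particular lemma — it avoids the token-dynamics drift calculation and the concentration claim \eqref{eq:sat:rate-abs-diff} entirely — but it is not a substitute for the paper's machinery as a whole, since it consumes Lemma~\ref{lem:sat:virtual-jobs} as a black box (there is no circularity: the paper's bound on virtual jobs does not use the backup-server bound), whereas the paper's drift analysis delivers both lemmas at once; your bookkeeping points (finiteness of $c_{ji}$, which matches the paper's implicit use of $\ser_{\min}>0$, finiteness of the total count so the subtraction is legitimate, and $r$-independence of the constants) are the right ones and are handled adequately.
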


The key idea for proving \Cref{lem:sat:virtual-jobs} and \Cref{lem:sat:overflow-jobs} is that by the characterization of $\syshat{\veK}^\allL$ in \Cref{lem:sat:w1-distance} and the fact that the job requests are made based on $\syshat{\veK}^\allL$, we can show that the rate of requesting jobs is approximately equal to the arrival rate for each job type. Therefore, the number of tokens rarely reaches $0$ or $\toklim$. This implies the rarity of virtual jobs and jobs on backup servers. 
The proofs are provided in \Cref{appx:proof-sat-virtual-overflow-jobs}. 

\subsection{\textbf{Proof of Theorem~\ref{theo:sat:conversion-general} Assuming Irreducibility Based on Lemmas~\ref{lem:sat:pos-recur}--\ref{lem:sat:overflow-jobs}.}}\label{sec:conv-thm:proof-thm}
\begin{proof}
First we show that Lemmas~\ref{lem:sat:w1-distance} and \ref{lem:sat:virtual-jobs} imply the closeness between $\veK^\allL$ and $\sysbar{\veK}^\allL$. By Lemma~\ref{lem:sat:w1-distance}, for any $f\in\text{Lip}(1)$, we have
$
\E\big[f\big(\sysbar{\veK}^\allL\big)\big] - \E\big[f\big(\syshat{\veK}^\allL\big)\big] = \Obrac{\sqrt{r}}.
$ 
By Lemma~\ref{lem:sat:virtual-jobs}, $\E\left[\sumL \vj_i^\ell\right] = \Obrac{\sqrt{r}}$. Recall that $\syshat{\veK}^\ell = \veK^\ell + \vevj^\ell$, so $\E\big[f(\syshat{\veK}^\allL)\big] - \E\big[f(\veK^\allL)\big]=\Obrac{\sqrt{r}}$. Therefore, 
\begin{equation}\label{eq:sat:w1-distance-k-and-kbar}
    \E\left[f(\sysbar{\veK}^\allL)\right] - \E\left[f(\veK^\allL)\right] = \Obrac{\sqrt{r}}.
\end{equation}

Now we prove {$\eqref{eq:sat:active-servers}$}, the bound on the expected number of the active servers, by taking a suitable $f$ in \eqref{eq:sat:w1-distance-k-and-kbar}. Observe that 
\begin{align}
    \sum_{\vek\neq \vzero}\E\Big[X_{\vek}\Big] - L\cdot \Prob(\sysbar{\veK} \neq 0)
    &= \E\Big[\sumL \indibrac{\veK^\ell \neq \vzero}\Big] - \E\Big[\sumL \indibrac{\sysbar{\veK}^\ell \neq \vzero}\Big]  + \E\Big[\sum_{\ell=L+1}^\infty \indibrac{\veK^\ell\neq\vzero}\Big], \label{eq:sat:active-server-intermediate} 
\end{align}
where the last term on RHS is $\Obrac{\sqrt{r}}$ by \Cref{lem:sat:overflow-jobs}. To show that the difference between the first two terms on the RHS of \eqref{eq:sat:active-server-intermediate} are also $\Obrac{\sqrt{r}}$, consider $f_1(\vek^\allL) \triangleq \sumL \indibrac{\vek^\ell \neq \vzero}$. Because 
\begin{align*}
   \abs{f_1(\vek^\allL) - f_1(\vek'^\allL) }= \abs{\sumL \left(\indibrac{\vek^\ell \neq \vzero} - \indibrac{\vek'^\ell \neq \vzero} \right)}
    \leq \sumL \indibrac{\vek^\ell \neq \vek'^\ell}
    \leq \norm{\vek^\allL - \vek'^\allL},
\end{align*} 
for any $\vek^\allL, \vek'^\allL \in \Kinner^L$, we have $f_1\in\text{Lip}(1)$. 
By \eqref{eq:sat:w1-distance-k-and-kbar}, $\E\big[f_1(\veK^\ell)\big] - \E\big[\sumL f_1(\sysbar{\veK}^\ell)\big] = \Obrac{\sqrt{r}}$. Therefore, 
$\sum_{\vek\neq \vzero}\E\big[ X_{\vek}\big] - L\cdot \Prob\big(\sysbar{\veK} \neq 0\big) = \Obrac{\sqrt{r}}.$ Recall that $L = \ceil{\sysbar{\nact}}$, so we get $\eqref{eq:sat:active-servers}$. 

Similarly, to prove {\eqref{eq:sat:cost}}, we observe that
\begin{align}
    \sum_{\vek\neq\vzero} h(\vek) \E[X_{\vek}] - L \cdot \E\Big[h(\sysbar{\veK})\Big]
    &= \E\Big[\sumL h(\veK^\ell)\Big] - \E\Big[\sumL h(\sysbar{\veK}^\ell)\Big] + \E\Big[\sum_{\ell=L+1}^\infty h(\veK^\ell)\Big].   \label{eq:sat:resource-contention-intermediate} 
\end{align}
The last term of \eqref{eq:sat:resource-contention-intermediate} can be bounded as 
$
    \E\Big[\sum_{\ell=L+1}^\infty h(\veK^\ell)\Big] \leq \E\Big[\sum_{\ell=L+1}^\infty \indibrac{K_i^\ell\neq\vzero}\Big]\cdot \max_{\vek\in\Kouter} h(\vek)
$, which is $\Obrac{\sqrt{r}}$ by \Cref{lem:sat:overflow-jobs} and the fact that $\Kinner$ is a finite set. 
To show that the difference between the first two terms on the RHS of \eqref{eq:sat:resource-contention-intermediate} is also $\Obrac{\sqrt{r}}$, consider $f_2(\vek^\allL) = \frac{1}{\Gamma}\sumL h(\vek^\ell)$,
where $\Gamma$ is the Lipschitz constant of $h(\cdot)$. 
Because
\[
    \abs{f_2(\vek^\allL) - f_2(\vek'^\allL)} = \frac{1}{\Gamma} \abs{ \sumL (h(\vek^\ell) - h(\vek'^\ell)) }\leq \sumL \norm{\vek^\ell - \vek'^\ell} = \norm{\vek^\allL - \vek'^\allL},
\]
for any $\vek^\allL, \vek'^\allL \in \Kinner^L$, we have $f_2\in\text{Lip(1)}$. By \eqref{eq:sat:w1-distance-k-and-kbar},  $\E\big[f_2(\veK^\ell)\big] - \E\big[\sumL f_2(\sysbar{\veK}^\ell)\big] = \Obrac{\sqrt{r}}$. Therefore, $\sum_{\vek\neq\vzero} h(\vek) \E[X_{\vek}] - L \cdot \E\left[h(\sysbar{\veK})\right] = \Obrac{\sqrt{r}}$. Recall that $L = \left\lceil\sysbar{\nact}\right\rceil$, so we get $\eqref{eq:sat:cost}$.

To show \eqref{eq:alpha-optimal} and \eqref{eq:beta-optimal}, noting that $\ceil{\sysbar{\nact}} = \Thebrac{r}$, we have
    \begin{gather*}
        \nact(\policy) = \textstyle\sum_{\vek\neq\vzero} \E[X_{\vek}] \leq \ceil{\sysbar{\nact}} + \Obrac{\sqrt{r}} = \left(1+\Obrac{r^{-0.5}}\right) \cdot \sysbar{\nact}, \\
        \costp(\policy) = \frac{\sum_{\vek\neq\vzero} h(\vek)  \E[X_{\vek}]}{\sum_{\vek\neq\vzero} \E[X_{\vek}]}
        = \frac{\sum_{\vek\neq\vzero} h(\vek) \pi(\vek) + \Obrac{r^{-0.5}}}{1- \pi(\vzero) + \Obrac{r^{-0.5}}} 
        \leq \left(1+\Obrac{r^{-0.5}}\right) \cdot \budget,
    \end{gather*}
where in the last inequality we have used the fact that $\epsilon > 0$. 
This completes the proof.

\end{proof}

\subsection{More Details on the System and Proof of Lemma~\ref{lem:sat:w1-distance}}\label{appx:proof-sat-w1-distance}
To bound the distance between $\sysbar{\veK}^\allL$ and $\syshat{\veK}^\allL$, observe that because $f\in\text{Lip}(1)$ and $\sumL \sum_{i\in\jobspace} \toK_i^\ell = \Obrac{\sqrt{r}}$, it suffices to bound  bound the Wasserstein distance between $\sysbar{\veK}^\allL$ and $\syshat{\veK}^\allL+\vetoK^\allL$, i.e.,
\begin{equation}\label{eq:sat:lem1-subgoal}
    \sup_{f\in \text{Lip}(1)}\left\{ \E\left[f\big(\sysbar{\veK}^\allL\big)\right] - \E\left[f\big(\syshat{\veK}^\allL+\vetoK^\allL\big)\right]\right\} = \Obrac{\sqrt{r}},
\end{equation}
where $f\big(\syshat{\veK}^\allL+\vetoK^\allL\big)$ is a valid expression because $\syshat{\veK}^\allL+\vetoK^\allL \in \Kinner$ as discussed in \Cref{remark:domain-of-g} below.

\subsubsection{\textbf{More Details on System Dynamics and Generator}}
To prepare for the proof, we first look into the dynamics of the two systems under study. In particular, we  write out the \textit{generators} of $\sysbar{\veK}^\allL(t)$ and $(\syshat{\veK}^\allL(t), \vetoK^\allL(t))$, which are used in the Stein's method arguments. 

We first examine the dynamics of the single-server system under Markovian policy $\sysbar{\policy}$. 
Four types of events change a single-server system's configuration: internal transitions, departures, reactive requests, and proactive requests (see \Cref{sec:how-single-server-request-jobs}). 
The change of configuration due to any event can be represented by the diagram
\[
\vek \overset{}{\rightarrow} \vek' \overset{}{\rightarrow} \vek' + \vea, 
\]
where the arrow $\vek\to\vek'$ denotes an internal transition or a departure from configuration $\vek$ to $\vek'$ if $\vek\neq \vek'$; the arrow $\vek'\to\vek'+\vea$ denotes a reactive request that adds $\vea$ jobs to the system if $\vek\neq \vek'$, and denotes a proactive request if $\vek=\vek'$. 
We call the above change of configuration a \textit{transition}, and denote its rate as $\rate{\vek', \vea}{\vek}$. Let  $E(\vek)$ denote the set of possible $(\vek', \vea)$ pairs in a transition.

We define the total transition rate at configuration $\vek$ as
$
        \ratetot{\vek} \triangleq \sumrate{\vek',\vea}{\vek} \rate{\vek',\vea}{\vek},
$
and define the maximal transition rate $\ratetot{\max} = \max_{\vek\in\Kinner} \ratetot{\vek}$. Since $\Kinner$ is a finite set, we have $\ratetot{\max}<\infty$. Also, observe that the request rate of type $i$ jobs is given by
\begin{equation}
    \reqrate_{i} \triangleq \sum_{\vek} \sumrate{\vek', \vea}{\vek} \rate{\vek', \vea}{\vek} a_i  \cdot \pi(\vek),
\end{equation}
where $\pi$ denotes the stationary distribution of single-server configuration under policy $\sysbar{\policy}.$

Next, we focus on the dynamics of $L$ i.i.d. copies of single-server systems. Consider the generator $\sysbar{\gen}$ of the corresponding Markov chain $\{\sysbar{\veK}^\allL(t)\}$, which is a linear operator on functions $g\colon \Kouter^L \to \R$  defined as:
\begin{equation}
    \sysbar{\gen}g(\vek^\allL) \triangleq  \frac{d}{dt} \E\left[g\left( \sysbar{\veK}^\allL(t)\right) \middle| \sysbar{\veK}^\allL(0)=\vek^\allL \right] \Big|_{t=0},
\end{equation}
and we call the resulting function $\sysbar{\gen}g(\cdot)$ the \textit{drift} of $g(\cdot)$. Based on the transition rates defined above, we have
\begin{equation}
    \sysbar{\gen}g(\vek^\allL) = \sumL \sumrate{\vek', \vea}{\vek^\ell} \rate{\vek', \vea}{\vek^\ell} \left(g(\cdot, \vek'+\vea, \cdot) - g(\cdot, \vek^\ell, \cdot) \right)\label{eq:sat:gen-bar},
\end{equation}
where $g(\cdot, \vek'+\vea, \cdot) - g(\cdot, \vek^\ell, \cdot)$ is a shorthand for $g(\vek^1, \dots, \vek^{\ell-1}, \vek'+\vea, \vek^{\ell+1}, \dots, \vek^L) - g(\vek^\allL)$, i.e., we use $\cdot$ to omit the entries that agree with $\vek^\allL$.

Similarly, for the infinite-server system, consider the generator $\syshat{\gen}$ of $(\syshat{\veK}^\allL(t), \vetoK^\allL(t))$ defined as
\begin{equation}
    \syshat{\gen} \psi(\vek^\allL, \vetok^\allL)  \triangleq \frac{d}{dt} \E\left[\psi\left(\syshat{\veK}^\allL(t),\vetoK^\allL(t)\right) \middle|  \syshat{\veK}^\allL(0)=\vek^\allL, \veK^\allL(0)=\vetok^\allL \right] \Big|_{t=0},
\end{equation}
for any function $\psi\colon (\Kouter \times \Kouter)^L \to \R$. 
The drift of $\psi$ under $\syshat{\gen}$ turns out to have a similar decoupled form as $\sysbar{\gen}g$: observe that for each $\ell$, the transition of $(\syshat{\veK}^\ell(t), \vetoK^\ell(t))$ from $(\vek, \vetok)$ to $(\vek', \vetok+ \vea \indibrac{\vetok=\vzero})$ occurs at the rate $\rate{\vek',\vea}{\vek}$ for each $(\vek', \vea)\in E(\vek)$, and any real or virtual job arrivals do not change the sum $\syshat{\veK}^\ell(t) + \vetoK^\ell(t)$.
Consider any function $g\colon \Kinner^L \to \R$ and the function $\psi(\vek^\allL, \vetok^\allL) = g(\vek^\allL+\vetok^\allL)$. 
\begin{align}
    \syshat{\gen} \psi(\vek^\allL, \vetok^\allL)
    &= \sumL \sumrate{\vek', \vea}{\vek^\ell} \rate{\vek', \vea}{\vek^\ell} \left(g(\cdot, \vek'+\vea, \cdot) - g(\cdot, \vek^\ell, \cdot) \right) \indibrac{\vetok^\ell = \vzero} \nonumber\\
    &\mspace{23mu}+\sumL \sumrate{\vek', \vea}{\vek^\ell} \rate{\vek', \vea}{\vek^\ell} \left(g(\cdot, \vek'+\vetok^\ell, \cdot) - g(\cdot, \vek^\ell+\vetok^\ell, \cdot) \right)\indibrac{\vetok^\ell\neq \vzero}
    \label{eq:sat:gen-hat}.
\end{align}
In this context $g(\cdot, \vek'+\vetok^\ell, \cdot) - g(\cdot, \vek^\ell+\vetok^\ell, \cdot)$ is a shorthand for $g(\vek^1+\vetok^1, \dots, \vek^{\ell-1}+\vetok^{\ell-1}, \vek'+\vetok^\ell, \vek^{\ell+1}+\vetok^{\ell+1}, \dots, \vek^L+\vetok^L) - g(\vek^\allL+\vetok^\allL)$. In other words, we use $\cdot$ to omit the entries of $g$'s input that agree with the corresponding entries of $\vek^\allL+\vetok^\allL$.

\begin{remark}\label{remark:domain-of-g}
    In \eqref{eq:sat:gen-hat}, although $g$ is only defined on the domain $\Kinner^L$, it is valid to write $\vek^\allL + \vetok^\allL$ as its input because we always have $\syshat{\veK}^\ell + \vetoK^\ell \in \Kinner$, i.e., the total number of real jobs, virtual jobs, and tokens on a normal server never exceeds $\Kmax$. To see why this is true, the single-server policy $\sysbar{\policy}$ requests jobs only when there are no tokens on the server, and it will not request more than $\Kmax - n$ jobs if there are already $n$ real and virtual jobs on the server. 
\end{remark}

\subsubsection{\textbf{Proof of Lemma~\ref{lem:sat:w1-distance}.}}
\begin{proof}
\textbf{Generator Comparison.}
    For any $f\in\text{Lip}(1)$, consider the Poisson equation (see, e.g., \citep{Bra_22}) that solves for $g_f \colon \Kouter^L \to \R$:
    \begin{equation}\label{eq:sat:lem1-poisson-eq}
        \E\left[f\big(\sysbar{\veK}^\allL\big)\right] - f\big(\vek^\allL\big) = \sysbar{\gen} g_f\big(\vek^\allL\big).
    \end{equation}
    We let $\vek^\allL = \syshat{\veK}^\allL + \vetoK^\allL$ in \eqref{eq:sat:lem1-poisson-eq} and take the expectation. This results in
    \begin{equation}\label{eq:sat:lem1-gencomp-1}
        \E\left[f\big(\sysbar{\veK}^\allL\big)\right] - \E\left[f\big(\syshat{\veK}^\allL + \vetoK^\allL\big)\right] = \E\left[\sysbar{\gen} g_f\big(\syshat{\veK}^\allL + \vetoK^\allL\big)\right].
    \end{equation}
    On the other hand, because $(\syshat{\veK}^\allL(t), \vetoK^\allL(t))$ is a finite-state Markov chain, 
    we have 
    \begin{equation}\label{eq:sat:lem1-gencomp-2}
        \E\left[\syshat{\gen}\psi_f\big(\syshat{\veK}^\allL, \vetoK^\allL\big)\right] = 0,
    \end{equation}
    where $\psi_f$ is given by 
    $
        \psi_f\big(\syshat{\veK}^\allL, \vetoK^\allL\big) = g_f(\syshat{\veK}^\allL + \vetoK^\allL).
    $
    Subtracting \eqref{eq:sat:lem1-gencomp-2} from \eqref{eq:sat:lem1-gencomp-1}, we get
    \begin{equation}\label{eq:sat:lem1-gencomp}
        \E\left[f\big(\sysbar{\veK}^\allL\big)\right] - \E\left[f\big(\syshat{\veK}^\allL + \vetoK^\allL\big)\right] = \E\left[\sysbar{\gen} g_f\big(\syshat{\veK}^\allL + \vetoK^\allL \big) -\syshat{\gen} \psi_f\big(\syshat{\veK}^\allL, \vetoK^\allL\big)\right].
    \end{equation}
    We want to show that $\sysbar{\gen}$ and $\syshat{\gen}$ are close so that we can bound the RHS of \eqref{eq:sat:lem1-gencomp}.
    
    Now we plug the formula of the generators in \eqref{eq:sat:gen-bar} and \eqref{eq:sat:gen-hat} into the RHS of \eqref{eq:sat:lem1-gencomp} and get 
    \begin{align}
         &\mspace{18mu} \abs{\sysbar{\gen} g_f\big(\syshat{\veK}^\allL + \vetoK^\allL \big) -\syshat{\gen} \psi_f\big(\syshat{\veK}^\allL, \vetoK^\allL\big) } \nonumber \\
         &\stackrel{\text{(i)}}{=} \Bigg\lvert \sumL \sumrate{\vek', \vea}{\syshat{\veK}^\ell+\vetoK^\ell} \rate{\vek', \vea}{\syshat{\veK}^\ell+\vetoK^\ell} \left(g_f\big(\cdot, \vek'+\vea, \cdot\big) - g_f\big(\cdot, \syshat{\veK}^\ell+\vetoK^\ell, \cdot\big) \right)\cdot \indibrac{\vetoK^\ell\neq\vzero} \nonumber \\
         &\mspace{23mu}-\sumL \sumrate{\vek', \vea}{\syshat{\veK}^\ell} \rate{\vek', \vea}{\syshat{\veK}^\ell} \left(g_f\big(\cdot, \vek'+\vetoK^\ell, \cdot\big) - g_f\big(\cdot, \syshat{\veK}^\ell+\vetoK^\ell, \cdot\big) \right) \cdot \indibrac{\vetoK^\ell\neq\vzero} \Bigg\rvert \nonumber\\
         &\stackrel{\text{(ii)}}{\leq} \sumL \ratetot{\syshat{\veK}^\ell+\vetoK^\ell} \cdot \sup_{(\vek',\vea)\in E(\syshat{\veK}^\ell+\vetoK^\ell)} \abs{g_f\big(\cdot, \vek'+\vea, \cdot\big) - g_f\big(\cdot, \syshat{\veK}^\ell+\vetoK^\ell, \cdot\big) } \cdot \indibrac{\vetoK^\ell\neq\vzero} \nonumber \\
         &\mspace{23mu}+ \sumL \ratetot{\syshat{\veK}^\ell} \cdot \sup_{(\vek',\vea)\in E(\syshat{\veK}^\ell)} \abs{g_f\big(\cdot, \vek'+\vetoK^\ell, \cdot\big) - g_f\big(\cdot, \syshat{\veK}^\ell+\vetoK^\ell, \cdot\big) } \cdot \indibrac{\vetoK^\ell\neq\vzero} \nonumber \\
         &\stackrel{\text{(iii)}}{\leq} 2 \ratetot{\max} \cdot \sumL \sup_{\vek'\in\Kouter} \abs{g_f\big(\cdot, \vek', \cdot\big) - g_f\big(\cdot, \syshat{\veK}^\ell+\vetoK^\ell, \cdot \big)}  \cdot \indibrac{\vetoK^\ell\neq\vzero}  \nonumber \\
         &\leq 2 \ratetot{\max} \cdot \sup_{\vek, \vek'\in\Kouter} \abs{g_f\big(\cdot, \vek', \cdot\big) - g_f\big(\cdot, \vek, \cdot \big)}  \cdot \sumL \indibrac{\vetoK^\ell\neq\vzero}, \label{eq:sat:gencomp-long-calculation-intermediate}
    \end{align}
    where in $g_f\big(\cdot, \vek', \cdot\big) - g_f\big(\cdot, \vek, \cdot \big)$ we have omitted the entries that agree with $\vek^\allL$. The equality (i) is true because each of the $\ell$-th terms in $\sysbar{\gen}$ and $\syshat{\gen}$ is equal if $\vetoK^\ell=\vzero$. 
    For the inequalities (ii) and (iii), recall that $\ratetot{\vek}$ is the total transition rate given by $\ratetot{\vek} \triangleq \sumrate{\vek',\vea}{\vek} \rate{\vek',\vea}{\vek}$,
    and $\ratetot{\max} = \max_{\vek\in\Kinner} \ratetot{\vek}$. Observe that    
    \begin{equation*}
        \sumL \indibrac{\vetoK^\ell\neq\vzero} \leq \sumL \sum_{i\in\jobspace} \toK_i^\ell \leq \abs{\jobspace} \cdot \toklim = \Obrac{\sqrt{r}}.
    \end{equation*}
    Therefore \eqref{eq:sat:gencomp-long-calculation-intermediate} can be further bounded by
    \begin{align*}
        \abs{\sysbar{\gen} g_f\big(\syshat{\veK}^\allL + \vetoK^\allL \big) -\syshat{\gen} \psi_f\big(\syshat{\veK}^\allL, \vetoK^\allL\big) }
        &\leq 2 \ratetot{\max} \cdot \sup_{\vek, \vek'\in\Kouter} \abs{g_f\big(\cdot, \vek', \cdot\big) - g_f\big(\cdot, \vek, \cdot \big)}  \cdot \sumL \sum_{i\in\jobspace} \toK_i^\ell \\
        &\leq 2 \ratetot{\max} \cdot \sup_{\vek, \vek'\in\Kouter} \abs{g_f\big(\cdot, \vek', \cdot\big) - g_f\big(\cdot, \vek, \cdot \big)} \cdot \Obrac{\sqrt{r}}.
    \end{align*}
    To prove \eqref{eq:sat:lem1-subgoal}, it remains to show that
    \begin{equation}\label{eq:sat:grad-moment-bound}
       \sup_{\vek, \vek'\in\Kouter} \abs{g_f(\cdot, \vek', \cdot) - g_f(\cdot, \vek, \cdot )} = \Obrac{1}.
    \end{equation}

\textbf{Stein Factor Bound.}
    To prove \eqref{eq:sat:grad-moment-bound}, observe that the following $g_f(\cdot)$ is a solution to the Poisson equation \eqref{eq:sat:lem1-poisson-eq}:
    \begin{equation}\label{eq:sat:lem1-poisson-sol}
        g_f\big(\vek^\allL\big) = \int_0^\infty \E\left[ \left(f\big(\sysbar{\veK}^\allL(t)\big)- \E\left[f\big(\sysbar{\veK}^\allL\big)\right]\right)\middle| \sysbar{\veK}^\allL(0) = \vek^\allL \right] dt.
    \end{equation}
    This allows us to bound the difference of $g_f$ using coupling. Specifically, we define the coupling of two systems, each consisting of $L$ i.i.d. copies of the single-server system under $\sysbar{\policy}$. The two systems are initialized with configurations $(\cdot,  \vek', \cdot)$ and $(\cdot, \vek, \cdot)$ that only differ at the $\ell$-th server, where we omit the entries that agree with $\vek^\allL$. Let $\big( \sysbar{\veK}^{\allL,1}(t), \sysbar{\veK}^{\allL,2}(t)\big)$ be the joint configuration of the two systems, which is actually $2L$ i.i.d. copies of the single-server system. As a result, we can specify the couplings $(\sysbar{\veK}^{\ell', 1}(t), \sysbar{\veK}^{\ell', 2}(t))$ for different $\ell'$ separately. For $\ell' \neq \ell$, the corresponding server in the two systems have the same initial configurations, so we can always keep their configurations identical. For the $\ell$-th servers, we let them evolve independently following their own dynamics until a stopping time $\tmix$ when their configurations become the same. After that, we can use coupling to keep their configurations identical. Under this coupling, it is not hard to see that
    \begin{align}
        \abs{g_f(\cdot, \vek', \cdot) - g_f(\cdot, \vek, \cdot)} &=\abs{\int_0^\infty \E\left[ f\big(\sysbar{\veK}^{\allL,1}(t)\big)- f\big(\sysbar{\veK}^{\allL,2}(t)\big) \right]dt}\nonumber \\
        &\leq  \E\left[ \int_0^\infty \abs{f\big(\sysbar{\veK}^{\allL,1}(t)\big)- f\big(\sysbar{\veK}^{\allL,2}(t)\big)} dt \right]\nonumber \\
        &\leq \E\left[ \int_0^\infty  
        \sum_{\ell'=1}^L \big\lVert \sysbar{\veK}^{\ell',1}(t)-  \sysbar{\veK}^{\ell',2}(t) \big\rVert dt \right] \nonumber\\
        &= \E\left[\int_0^{\tmix}  \big\lVert \sysbar{\veK}^{\ell,1}(t)- \sysbar{\veK}^{\ell,2}(t) \big\rVert dt\right] \label{eq:sat:gradient-bound-intermediate},
    \end{align}
    where in the second inequality we have used the fact that $f$ is $1$-Lipschitz continuous under the $L^1$ norm of the space $\Kinner^L$. 
    For each pair of $\vek, \vek'$, observe that because $\sysbar{\policy}$ is a $\vek^0$-irreducible policy, $\E[\tmix]$ is finite; 
    and because $\Kinner$ is a finite set, $\big\lVert \sysbar{\veK}^{\ell,1}(t)- \sysbar{\veK}^{\ell,2}(t) \big\rVert$ is uniformly bounded. {All these finite quantities depend on a single-server system under a policy $\sysbar{\policy}$ that is independent of $r$.} As a result, the last expression in \eqref{eq:sat:gradient-bound-intermediate} is of constant order. Moreover, because there are finite pairs of $(\vek, \vek')$, the supremum
    $\sup_{\vek, \vek'} \E\left[\int_0^{\tmix}  \norm{ \sysbar{\veK}^{\ell,1}(t)- \sysbar{\veK}^{\ell,2}(t)} dt\right]$ is also of constant order, independent of $r$. This proves the Stein factor bound in  \eqref{eq:sat:grad-moment-bound}. Together with the generator comparison, we have proved 
    \begin{equation}\tag{\ref{eq:sat:lem1-subgoal}}
         \sup_{f\in \text{Lip}(1)} \E\left[f\big(\sysbar{\veK}^\allL\big)\right] - \E\left[f\big(\syshat{\veK}^\allL+\vetoK^\allL\big)\right] = \Obrac{\sqrt{r}}.
    \end{equation}
    
    Because $\sumL \sum_{i\in\jobspace} \toK_i^\ell \leq \abs{\jobspace} \toklim = \Obrac{\sqrt{r}}$, for any $f\in \text{Lip}(1)$, we have
    \begin{equation}
        \left\lvert\E\left[f\big(\syshat{\veK}^\allL\big)\right] - \E\left[f\big(\syshat{\veK}^\allL+\vetoK^\allL\big)\right]\right\rvert \leq \E\left[\sumL \sum_{i\in\jobspace} \toK_i^\ell(t)\right] = \Obrac{\sqrt{r}}.
    \end{equation}
    Plugging the above equation to \eqref{eq:sat:lem1-subgoal}, we get 
    $
        \sup_{f\in \text{Lip}(1)} \E\left[f\big(\sysbar{\veK}^\allL\big)\right] - \E\left[f\big(\syshat{\veK}^\allL\big)\right] = \Obrac{\sqrt{r}}.
    $
    This proves Lemma~\ref{lem:sat:w1-distance}. 
\end{proof}

\subsection{Role of Tokens and Virtual Jobs}\label{sec:discuss-token-virtual-job}
This section aims to shed light on the role of tokens and virtual jobs in the proposed policy, \metapolicy.
We first outline how we devise the token-and-virtual-job mechanism from the perspective of generators. 
To begin with, consider the scenario where dispatch decisions are solely based on real-job configurations $(\veK^\ell)_{\ell=1, 2, \dots}$.
In this case, the transitions of servers' configurations would be correlated in general due to job arrivals, which are dispatched based on the joint configuration of all servers. 
To break this correlation, let us introduce tokens but not set an upper limit yet on the number of tokens (thus no virtual jobs). 
Observe that $\veK^\allL(t)+\vetoK^\allL(t)$ remains unchanged by job arrivals, so tokens remove the correlation brought about by job arrivals. 
However, because tokens lack internal phase transitions or departures, the transition dynamics of $\veK^\allL(t)+\vetoK^\allL(t)$ will diverge from $\sysbar{\veK}^\allL(t)$ when $\vetoK(t)$ is large. In other words, although tokens help decouple the transitions on servers, they cannot keep the transitions of $\veK^\allL(t)+\vetoK^\allL(t)$ close to $\sysbar{\veK}^\allL(t)$. 
To solve this issue, we finally introduce the mechanism of converting tokens into virtual jobs when the number of tokens is high, where virtual jobs can make internal phase transitions or departures just like real jobs.
Now, the sum $\veK^\allL(t) + \vevj^\allL(t)+\vetok^\allL(t)$ remains unchanged by job arrivals nor the creation of virtual jobs, and the internal phase transitions and job departures are similar to those of $\sysbar{\veK}^\allL(t)$. 
More formally, the generators of $\veK^\allL(t) + \vevj^\allL(t)+\vetok^\allL(t)$ and $\sysbar{\veK}^\allL(t)$ are close to each other -- their additive difference can be upper bounded by a quantity proportional to the expected number of tokens, as shown in \eqref{eq:sat:gencomp-long-calculation-intermediate}. Therefore, by regulating the number of tokens, we can control the difference between the generators of $\veK^\allL(t) + \vevj^\allL(t)+\vetok^\allL(t)$ and $\sysbar{\veK}^\allL(t)$. 

Another key design component of \metapolicy\ is that the subroutine requests jobs based on the observed configurations.
This has been used in the proof of \Cref{lem:sat:w1-distance} to show that the observed configurations $\veK^\allL + \vevj^\allL$ are close to $\sysbar{\veK}^\allL$, which consist of $L$ i.i.d.\ single-server systems.
Recall that each single-server system in $\sysbar{\veK}^\allL$ is designed to have a throughput of $\lambda_ir/L$ for each job type $i\in\jobspace$.
Therefore, the proximity between $\veK^\allL + \vevj^\allL$ and $\sysbar{\veK}^\allL$ ensures that the job request rate mirrors the arrival rate for each job type, regardless of the real-job configurations. The fact that these two rates are approximately equal is important for proving \Cref{lem:sat:virtual-jobs} and \Cref{lem:sat:overflow-jobs}.
It guarantees that both the rate of generating virtual jobs (when there are too many tokens) and the rate of dispatching jobs to backup servers (when there are no tokens) are appropriate.

A natural follow-up question is whether the usage of tokens and virtual jobs is fundamental or an artifact of our analysis technique.
For example, it is unclear whether removing the upper limit on the number of tokens would still yield an asymptotically optimal policy. 
This is an interesting question that we do not have a complete answer to.
The token-and-virtual-job mechanism emerges as a natural choice under our analysis framework.
Nevertheless, it is worth noting that our analysis primarily treats each server's configuration as a generic Markov chain, without utilizing many properties specific to the stochastic bin-packing setting. An exception to this is the proofs in \Cref{appx:proof-sat-virtual-overflow-jobs}, where we use the model that each job leaves the system within a constant expected time.
It would be interesting to explore more properties of the problem to better understand policy designs without auxiliary state variables like tokens and virtual jobs.

\section{Conclusion}\label{sec:conclusion}
In this paper, we study a new setting of stochastic bin-packing in service systems that features time-varying item sizes.
Since our formulation is motivated by the problem of virtual-machine scheduling in computing systems, we use the terminology of jobs and servers, where jobs are viewed as items, whose sizes are their resource requirements, and servers as bins.
The time-varying item sizes capture the emerging trend in practice that jobs' resource requirements vary over time.
Our goal is to design a job dispatch policy to minimize the expected number of active servers in steady state, subject to a constraint on resource contentions. 
Our main result is the design of a policy that achieves an optimality gap of $O(\sqrt{r})$, where $r$ is the scaling factor of the arrival rate. When specialized to the setting where jobs' resource requirements remain fixed over time, this result improves upon the state-of-the-art $o(r)$ optimality gap. 
Our technical approach highlights a novel policy conversion framework, \metapolicyfull, that reduces the policy design problem to that in a single-server system. 

There are several potential directions that may be worth further exploration. One direction is to strengthen the optimality result within the current setting. Specifically, it is interesting to investigate: (i)~whether it is possible to achieve an optimality gap smaller than $\Theta(\sqrt{r})$; and (ii)~whether there exist asymptotically optimal policies whose average cost rate of resource contention satisfies the budget strictly instead of asymptotically. 

We are also interested in extending our technique to the optimal control of other systems with similar structures. Intuitively, this technique could be applied to systems with many components that evolve mostly independently but are weakly coupled by certain constraints. Viewing each component as a server, we can define a suitable single-server problem and then design a policy for the original system to track the dynamics of the optimal single-server solution. Below we list several variations of our model that can potentially be analyzed using the proposed technique.
\begin{itemize}[leftmargin=1.1em]
    \item A model where jobs running on each server will be put into a local queue when there are resource contentions. The goal thus becomes finding the optimal trade-offs between the number of active servers and the waiting time of the jobs. 
    \item A model that allows each server to have a Markovian state that affects the dynamics of the jobs running on the server. 
    \item A model that allows jobs to migrate to different servers at the cost of migration delays.
    \item A closed-system model where jobs re-enter the system after completion. 
\end{itemize}

A third possible direction is to tackle the problem when the arrival rates and the parameters in the job model are unknown, as mentioned in Section~\ref{sec:sat:JRRS-discussion}. A possible approach is to develop an approximate version of the \metapolicy\ framework, where the optimal single-server policy and the simulator for the virtual jobs are both learned from data. It is desirable to design such an approximate framework whose  performance degrades gracefully as the approximation error increases.

\begin{acks}
    Y.\ Hong and W.\ Wang are supported in part by NSF grants CNS-200773 and ECCS-2145713. 
    Q.\ Xie is supported in part by NSF grant CNS-1955997. 
\end{acks}

\bibliographystyle{ACM-Reference-Format}
\bibliography{refs-yige-v220908}

\appendix
\section{Proof of Theorem~\ref{theo:sat:lower-bound} (Lower Bound)}\label{sec:sat:lower-bound}
\begin{proof} It is sufficient to show that given an infinite-server policy $\policy$ for $\isp((\arr_i r)_{i\in\jobspace}, \budget)$ in \eqref{eq:iss-opt-problem}, we have $\nact(\policy) \geq \sysbar{\nact}^*$. To this end, we will construct a single-server policy $\sysbar{\policy}$ such that the resulting system configuration $\sysbar{\veK}(\infty)$ in steady state satisfies:
    \begin{align}
        &\E\left[h\big(\sysbar{\veK}(\infty)\big)\middle| \sysbar{\veK}(\infty) \neq \vzero\right] = \costp(\policy) \leq \budget, \label{eq:lb-goal-budget} \\
        &\reqrate_i = \frac{\arr_i r}{\nact(\policy)}, \quad \forall i\in\jobspace. \label{eq:lb-goal-reqrate}
    \end{align} 
    Let $\pi$ be the distribution of $\sysbar{\veK}(\infty)$, then $(\nact(\policy), \sysbar{\policy}, \pi)$ is a feasible solution to the problem $\ssp((\arr_i r)_{i\in\jobspace}, \budget)$ in \eqref{eq:sat:ssss-opt-problem}. As a result, we have $\nact(\policy) \geq \sysbar{\nact}^*$. Note that although $\sysbar{\policy}$ is actually non-Markovian, i.e., it makes decisions based on not only the current configuration but also the history, as we will show in \Cref{sec:solve-single-lp}, $\sysbar{\nact}^*$ is still a lower bound to the objective value that $\sysbar{\policy}$ can achieve in $\ssp((\arr_i r)_{i\in\jobspace}, \budget)$.

    The construction of the single-server policy $\sysbar{\policy}$ involves simulating an infinite-server system under $\policy$ from the empty configuration. At time $0$, the policy $\sysbar{\policy}$ randomly chooses the $\ell$-th server in the infinite-server system with probability $p^\ell$, for $\ell = 1, 2, \cdots$. It then requests jobs for the single-server system according to a policy $\sysbar{\policy}^\ell$. The key to our policy $\sysbar{\policy}^\ell$ is to make the single-server system emulate the job assignment at the $\ell$-th server of the simulated infinite-server system, but without incurring idleness. We first construct the policy $\sysbar{\policy}^\ell$, and then specify the probabilities $p^\ell$. 
    
    Let us start by introducing some useful notation. Let $\sysbar{\veK}^\ell(t)$ be the single-server system configuration under $\sysbar{\policy}^\ell$ at time $t$ and $\veK^\ell(t)$ be the configuration of the $\ell$-th simulated server in the infinite-server system under $\policy$. We define a stochastic process $\{s^\ell(t),t\geq0\}$ as follows: $s^{\ell}(t)=\max_{\tau} \big\{\tau: \int_0^{\tau} \indibrac{\veK^\ell(x)\neq\vzero}dx=t \big\}$. The ``$\max$'' is well-defined {because the integral is continuous in $\tau$.} 
    Intuitively, $s^\ell(t)$ gives the maximum time when the accumulative busy time of the $\ell$-th server is $t$. 
    Note that $\{s^\ell(t),t\geq0\}$ is only discontinuous when $\veK^{\ell}(\tau)$ reaches $0$, thus it is right-differentiable with derivative equal to $1$ at any point. 
  
    We construct $\sysbar{\policy}^\ell$ and the  simulation of the infinite-server system under $\sigma$ in a way such that: 
    \begin{equation}\label{eq:lb-coupling}
        \sysbar{\veK}^\ell(t) = \veK^\ell(s^\ell(t)) \quad \forall t.
    \end{equation}
    That is, we want that the single-server system has the same dynamic of the simulated $\ell$-th server except skipping the idle period. To this end, we couple the two systems as follows: 
    \begin{enumerate}
        \item When the $\ell$-th simulated server $\veK^\ell(s^\ell(t))$ receives a type $i$ job, we let the single-server system $\sysbar{\veK}^\ell(t)$ request a type $i$ job at time $t$. For each such job, its phase transition process in the $\ell$-th simulated server is the same as that in the single-server system. That is, when we observe any internal transition or departure event in $\sysbar{\veK}^\ell(t)$, we produce a same event on the $\ell$-th simulated server $\veK^\ell(s^\ell(t))$.
        \item The simulations of the rest of the infinite-server system under policy $\policy$ are driven by independently generated random seeds. 
    \end{enumerate}
    It is not hard to see that the simulated infinite server-system has the same stochastic behavior as an uncoupled system under $\policy.$
    Moreover, as we couple all the events that happen in $\sysbar{\veK}^\ell(t)$ and $\veK^\ell(s^\ell(t))$, together with the facts that $\sysbar{\veK}^\ell(t)$ and $\veK^\ell(s^\ell(t))$ are piecewise constant and $\sysbar{\veK}^\ell(0-) = \veK^\ell(s^\ell(0-)) = \vzero$, we get \eqref{eq:lb-coupling}.

    Next we claim that \eqref{eq:lb-coupling} implies the following relationship between the steady-state cost of the single-server system under $\sysbar{\policy}^\ell$ and the steady-state cost of the $\ell$-th simulated server in the infinite-server system under $\policy$:
    \begin{equation}\label{eq:lb-budget-policy-ell}
        \E\left[h\left(\sysbar{\veK}^\ell(\infty)\right)\right] = \frac{ \E\left[h(\veK^\ell(\infty))\right]} {\Prob\left(\veK^\ell(\infty) \neq \vzero \right)}. 
    \end{equation}
    This is because 
    for all $\vek \neq \vzero$, we have
    \begin{align}
        \frac{\Prob\left(\veK^\ell(\infty) = \vek \right)}{\Prob\left(\veK^\ell(\infty) \neq \vzero \right)}  &\stackrel{(a)}{=}   \lim_{S\to\infty}\frac{\int_0^S \indibrac{\veK^\ell(s)=\vek}ds}{\int_0^S \indibrac{\veK^\ell(s)\neq\vzero}ds} \stackrel{(b)}{=} \lim_{T\to\infty}\frac{\int_0^T \indibrac{\veK^\ell(s^\ell(t))=\vek}dt}{\int_0^T \indibrac{\veK^\ell(s^\ell(t))\neq\vzero}dt} \nonumber\\
        &\stackrel{(c)}{=} \lim_{T\to\infty}\frac{1}{T} \int_0^T \indibrac{\veK^\ell(s^\ell(t))=\vek}dt \stackrel{(d)}{=} \lim_{T\to\infty}\frac{1}{T} \int_0^T \indibrac{\sysbar{\veK}^\ell(t)=\vek}dt\nonumber\\
        &\stackrel{(e)}{=} \Prob\left(\sysbar{\veK}^\ell(\infty) = \vek \right), \nonumber
    \end{align}
    where $(a)$ and $(e)$ hold because long-run averages converge to steady-state expectations; $(b)$ is due to the fact that $
    \int_0^T \indibrac{\veK^\ell(s^\ell(t))=\vek}dt = \int_0^{s^\ell(T)} \indibrac{\veK^\ell(s)=\vek}ds,
    $
    for any $\vek\neq \vzero$; $(c)$ is due to the fact that
   $\indibrac{\veK^\ell(s^\ell(t))\neq\vzero} = 1;$
    and $(d)$ follows from \eqref{eq:lb-coupling}. 
    
    Let $\reqrate_i^\ell$ be the long-run request rates of type $i$ jobs in the single-server system under $\sysbar{\policy}^\ell$, and  $\arr_i^\ell$ be the throughput of type $i$ jobs of $\ell$-th simulated server under $\policy$.   By the construction of $\sysbar{\policy}^\ell$, the single-server system requests jobs based on the arrival events of the $\ell$-th simulated server, we have
 $       \reqrate_i^\ell = \frac{\arr_i^\ell}{\Prob\left(\veK^\ell(\infty) \neq \vzero \right)}, \forall i\in\jobspace.$
    
    With the constructed policies $\{\sysbar{\policy}^\ell, \ell=1,2,\dots\}$, we are ready to define the policy $\sysbar{\policy}$. We let $\sysbar{\policy}$ choose an index $\ell$ with probability $p^\ell$ at time $0$, and then follow $\sysbar{\policy}^\ell$. We set the probability $p^\ell$ as 
    \begin{align} \label{eq:lb_prob_ell}
        p^\ell = \frac{\Prob\left(\veK^\ell(\infty) \neq \vzero \right)}{\sum_{\ell'=1}^\infty \Prob\left(\veK^{\ell'}(\infty) \neq \vzero \right)} = \frac{\Prob\left(\veK^\ell(\infty) \neq \vzero \right)}{\sum_{\vek \neq \vzero} \E[X_{\vek}(\infty)]}, \quad \forall \ell=1,2,\dots
   \end{align}
    where the second inequality uses the fact that $\sum_{\ell'=1}^\infty \Prob\left(\veK^{\ell'}(\infty) \neq \vzero \right) = \sum_{\vek \neq \vzero} \E[X_{\vek}(\infty)]$. 
Then under $\sysbar{\policy}$, we have
    \begin{equation*}
    \begin{aligned}
        \E\left[h\big(\sysbar{\veK}(\infty)\big)\right] &= \sum_{\ell=1}^{\infty} p^\ell \E\left[h\left(\sysbar{\veK}^\ell(\infty)\right)\right] \stackrel{(a)}{=} \sum_{\ell=1}^{\infty} \frac{\Prob\left(\veK^\ell(\infty) \neq \vzero \right)}{\sum_{\vek \neq \vzero} \E[X_{\vek}(\infty)]} \cdot \frac{ \E\left[h(\veK^\ell(\infty))\right]} {\Prob\left(\veK^\ell(\infty) \neq \vzero \right)} \\
        &= \frac{\sum_{\ell=1}^\infty \E\left[h(\veK^\ell(\infty))\right]}{\sum_{\vek \neq \vzero} \E[X_{\vek}(\infty)]} = \frac{ \sum_{\vek \neq \vzero} h(\vek) \E[X_{\vek}(\infty)]} {\sum_{\vek \neq \vzero} \E[X_{\vek}(\infty)]} = \costp(\policy),
    \end{aligned}
    \end{equation*}
    where $(a)$ follows from \eqref{eq:lb-budget-policy-ell} and \eqref{eq:lb_prob_ell}.
    Observe that under $\sysbar{\policy}$, $\sysbar{\veK}(\infty) \neq \vzero$ almost surely, we thus have
    \begin{equation*}
        \E\left[h\big(\sysbar{\veK}(\infty)\big)\middle| \sysbar{\veK}(\infty) \neq \vzero\right] = \E\left[h\big(\sysbar{\veK}(\infty)\big)\right] = \costp(\policy),
    \end{equation*}
    which proves \eqref{eq:lb-goal-budget}. Moreover, for each $i\in\jobspace$ the request rate $\reqrate_i$ is given by 
    \begin{align*}
        \reqrate_i &= \sum_{\ell=1}^\infty p^\ell \cdot \reqrate_i^\ell \nonumber 
        \stackrel{}{=}  \sum_{\ell=1}^\infty \frac{\Prob\left(\veK^\ell(\infty) \neq \vzero \right)}{\sum_{\vek \neq \vzero} \E[X_{\vek}(\infty)]} \cdot \frac{\arr_i^\ell}{\Prob\left(\veK^\ell(\infty) \neq \vzero \right)} \nonumber= \frac{\sum_{\ell=1}^\infty \arr_i^\ell}{\sum_{\vek \neq \vzero} \E[X_{\vek}(\infty)]} \stackrel{}{=} \frac{\arr_i r}{\nact(\policy)}.
    \end{align*}
    This proves \eqref{eq:lb-goal-reqrate}. By the argument presented at the beginning of the proof, we get $\nact(\policy) \geq \sysbar{\nact}^*$.
\end{proof}

\section{The Rest of the Proofs Needed for Theorem~\ref{theo:sat:conversion-general} (Conversion Theorem)}\label{appx:conversion}

\subsection{Proof of Lemma~\ref{lem:sat:pos-recur}} \label{appx:proof-sat-pos-recur}
\begin{proof}
We will show that under the \metapolicy\ policy, the Markov chain for the system state (represented as $((\veK^\ell(t))_{\ell=1, 2, \dots}, \vevj^\allL(t), \vetoK^\allL(t))$) has a unique stationary distribution by first arguing that it is $\vek^0$-irreducible (here being $\vek^0$-irreducible means the Markov chain has a state that can be reached by all other states through transitions, ``$\vek^0$'' in ``$\vek^0$-irreducible'' does not refer to any specific states), and then use Foster-Lyapunov theorem to show the positive recurrence \citep[see, e.g.,][]{Mey_07}. Combining $\vek^0$-irreducibility with positive recurrence, we can conclude that the Markov chain under study has a unique stationary distribution.

First, we show that the Markov chain $((\veK^\ell(t))_{\ell=1, 2, \dots}, \vevj^\allL(t), \vetoK^\allL(t))$ is $\vek^0$-irreducible. Specifically, observe that the Markov chain starting from any state $((\vek^\ell)_{\ell=1, 2, \dots}, \vevj^\allL, \vetok^\allL)$ can reach the state $(\vek^\allL, (\vzero)_{\ell=L+1, \dots}, \vzero^\allL, \vzero^\allL)$, after experiencing a sequence of departures and arrivals that clears up all the tokens, virtual jobs and jobs on backup servers. Further, letting $\widetilde{\vek}$ be the configuration reachable by all other configuration in the single-server system under the policy $\sysbar{\policy}$, we argue that starting from any states of the form  $(\vek^\allL, (\vzero)_{\ell=L+1, \dots}, \vzero^\allL, \vzero^\allL)$, the Markov chain can reach the state $(\widetilde{\vek}^\allL, (\vzero)_{\ell=L+1, \dots}, \vzero^\allL, \vzero^\allL)$. Because for any $\ell\leq L$, there is a transition path from $\vek^\ell$ to $\widetilde{\vek}$, consider the sequence of events where each $\veK^\ell(t)$ transitions independently following the path, and the jobs arrive right after $\veK^\ell(t)$ making a request, so that the tokens are checked out before $\veK^\ell(t)$ has a further transition. In this way, each $\veK^\ell(t)$ with $\ell\leq L$ can eventually reach $\widetilde{\vek}$ from $\vek^\ell$. This proves the $\vek^0$-irreducibility of $((\veK^\ell(t))_{\ell=1, 2, \dots}, \vevj^\allL(t), \vetoK^\allL(t))$.

Next, we show that $((\veK^\ell(t))_{\ell=1, 2, \dots}, \vevj^\allL(t), \vetoK^\allL(t))$ satisfies the Foster-Lyapunov criterion, i.e.,
\begin{equation}\label{eq:foster-lyapunov}
    \syshat{\gen} g \leq -1 + b \indibrac{S},
\end{equation}
where $g$ is a non-negative function of the states, $S$ is a finite set, $b$ is a finite number, and $\syshat{\gen}$ is the infinitesimal generator of the continuous-time Markov chain. Let $t_i$ be the expected remaining time in the system when a job is in phase $i$ for each $i\in\jobspace$. According to the job model, we have the recurrence relation
\begin{equation}\label{eq:pos-recur-proof:choose-coefficient}
    \Big(\mu_{i\perp} + \sum_{i'\in\jobspace\colon i'\neq i}  \mu_{ii'}\Big) t_i  = \sum_{i'\in\jobspace\colon i'\neq i}  \mu_{ii'}t_{i'}  \quad \forall i\in\jobspace. 
\end{equation}
We construct a Lyapunov function $g$ as follows:
\begin{equation}
    g((\vek^\ell)_{\ell=1, 2, \dots}, \vevj^\allL, \vetok^\allL) = \sum_{i\in\jobspace} \sum_{\ell=1}^\infty t_i k_i^\ell +  \sum_{i\in\jobspace} \sum_{\ell=1}^\infty t_i \vj_i^\ell. 
\end{equation}
Using the relation \eqref{eq:pos-recur-proof:choose-coefficient}, it can be verified that the drift of $g$ satisfies
\begin{align*}
    &\mspace{23mu}\syshat{\gen} g((\vek^\ell)_{\ell=1, 2, \dots}, \vevj^\allL, \vetok^\allL)\\
    &\leq \sum_{i\in\jobspace}  \Big(\arr_i t_i r - \sum_{\ell=1}^\infty  k_i^\ell\Big)  + \sum_{i\in\jobspace}  \Big(\sumL \sumrate{\vek', \vea}{\vek^\ell}\rate{\vek', \vea}{\vek^\ell} a_i  t_i - \sum_{\ell=1}^\infty \vj_i^\ell\Big) \\
    &\leq \sum_{i\in\jobspace} \Big(\arr_i t_i r + L\cdot \max_{\vek\in\Kinner} \sumrate{\vek', \vea}{\vek}\rate{\vek', \vea}{\vek} a_i  t_i\Big) -  \Big(\sum_{\ell=1}^\infty \sum_{i\in\jobspace} k_i^\ell + \sum_{\ell=1}^\infty \sum_{i\in\jobspace} \vj_i^\ell \Big),
\end{align*}
where the first inequality uses the fact that virtual jobs are generated at a rate no faster than the total rate of job requests. Then the Foster-Lyapunov criterion in \eqref{eq:foster-lyapunov} is satisfied with $b$ and $S$ given by
\[
b = \sum_{i\in\jobspace} \Big(\arr_i t_i r + L\cdot \max_{\vek\in\Kinner} \sumrate{\vek', \vea}{\vek}\rate{\vek', \vea}{\vek} a_i  t_i\Big),
\]
\[
    S = \left\{((\vek^\ell)_{\ell=1, 2, \dots}, \vevj^\allL, \vetok^\allL) \colon g((\vek^\ell)_{\ell=1, 2, \dots}, \vevj^\allL, \vetok^\allL) \leq b+1 \right\}.
\]
By the Foster-Lyapunov theorem, $((\veK^\ell(t))_{\ell=1, 2, \dots}, \vevj^\allL(t), \vetoK^\allL(t))$ is positive recurrent.
\end{proof}

\subsection{Proofs of Lemma~\ref{lem:sat:virtual-jobs} and Lemma~\ref{lem:sat:overflow-jobs}} \label{appx:proof-sat-virtual-overflow-jobs}

In this subsection, we prove Lemma~\ref{lem:sat:virtual-jobs} and Lemma~\ref{lem:sat:overflow-jobs} together. We begin by introducing some notations. We use $\vestT^\ell \triangleq (\syshat{\veK}^\ell, \vetoK^\ell)$ to represent the state of the $\ell$-th server, and use $\vestT^\allL$ to represent the joint state of the first $L$ servers. We also use the lowercase $\vestt^\ell$, $\vestt^\allL$ to represent the realizations of the corresponding random variables. We denote the total number of type $i$ virtual jobs as $\tvJ_i \triangleq \sumL \vj_i^\ell$ for $i\in\jobspace$, and its realization as $\tvj_i$. We denote the total number of type $i$ jobs on backup servers as $\toF_i$ for $\in\jobspace$, and its realizations as $\tof_i$. We also denote the total number of type $i$ tokens throughout the system as $\ttK_i \triangleq \sumL \toK_i^\ell$, and its realization as $\ttk_i$. Our goal can be rewritten as proving $\E[\tvJ_i ]=\Obrac{\sqrt{r}}$ and $\E[\toF_i ] = \Obrac{\sqrt{r}}$ for each $i\in\jobspace$.

We first give an overview of the proof. Observe that in our model, the expected time that a job stay in the system is fixed. As a result, bounding the number of virtual jobs or jobs on backup servers in the system is equivalent to bounding the rate that they are generated, according to Little's Law. By our construction of the policy, the rate of generating those jobs is closely related to the dynamics of the total number of type $i$ tokens $\ttK_i(t)$. 

To describe the dynamics of $\ttK_i(t)$, we first introduce two functions $\dtvjg_i$ and $\dtof_i$:
\begin{align*}
    \dtvjg_i(a_i, \ttk_i) &\triangleq (\ttk_i + a_i - \toklim)^+, \\
    \dtof_i(\ttk_i) &\triangleq (1-\ttk_i)^+.
\end{align*}
The function $\dtvjg_i$ represents the increment in the number of type $i$ virtual jobs due to the event that the total number of type $i$ tokens on the first $L$ servers exceeds the token limit $\toklim$. 
The function $\dtof_i$ corresponds to the increment in the total number of type $i$ jobs on backup servers due to the event that a type $i$ job arrives to the system without seeing a type $i$ token. For a function $g\colon (\Kouter \times \Kouter)^L \to \R$ that only depends on the number of type $i$ tokens $z_i$, its drift can be written as
\begin{align}\label{eq:sat:gen-ttk}
    \syshat{\gen} g( \vestt^\allL) &= \sumL \sumrate{\vek', \vea}{\vek^\ell} \rate{\vek', \vea}{\vek^\ell} \left(g(\ttk_i + a_i - \dtvjg_i(a_i, \ttk_i)) - g(\ttk_i)\right)\indibrac{\vetok^\ell=\vzero} \nonumber \\
    &\mspace{15mu} + \arr_i r \left(g(\ttk_i - 1 + \dtof_i(\ttk_i)) - g(\ttk_i)\right).
\end{align}
We abuse the notation of $g$ here. For ease of exposition, {we will simply write $\dtvjg_i$ and $\dtof_i$ to represent $\dtvjg_i(a_i, \ttk_i)$ and $\dtof_i(\ttk_i)$.}

By construction, the total number of type $i$ tokens $\{\ttK_i(t\})$ is a stochastic process constrained within $[0, \toklim]$. 
Note that $\ttK_i(t)$ increases when some servers request new tokens, and decreases when a real or virtual arrival checks out the token or when some servers have the excessive tokens removed. When $\ttK_i(t)$ is away from the boundaries, the average rate that it increases is approximately equal to $\arr_i r$, and the average rate that $\ttK_i(t)$ decreases is given by
\[
\E\left[\sumL \sumrate{\vek', \vea}{\syshat{\veK}^\ell} \rate{\vek', \vea}{\syshat{\veK}^\ell} a_i \indibrac{\vetoK^\ell=\vzero} \right] \approx \E\left[\sumL \sumrate{\vek', \vea}{\sysbar{\veK}^\ell} \rate{\vek', \vea}{\sysbar{\veK}^\ell} a_i \right] = L\cdot \reqrate_i \approx \arr_i r,
\]
where we have used the approximations that $\syshat{\veK}^\ell \overset{d}{\approx} \sysbar{\veK}^\ell$,   $\indibrac{\vetoK^\ell=\vzero} \approx 1$ and $L = \lceil\sysbar{\nact}\rceil \approx \sysbar{\nact}$.

As $\{\ttK_i(t)\}$ randomly moves up and down with approximately the same rate and reflects on the boundaries of $0$ and $\toklim$, it behaves as a reflected simple symmetric random walk. Intuitively speaking, the steady-state distribution of $\ttK_i$ is approximately a uniform distribution over $[0, \toklim]$. Recall that $\dtvjg_i$ and $\dtof_i$ can only be non-zero when $\ttK_i(t)$ is near the boundaries. Since the length of the interval $\toklim = \Thebrac{\sqrt{r}}$, 
we can expect that $\dtvjg_i$ and $\dtof_i$ diminish as $r\to\infty$. 

In the proof, we first establish the relationship between $\E[\tvJ_i]$, $\E[\toF_i]$ and $\dtvjg_i$, $\dtof_i$ using Little's Law. Then we derive bounds on $\dtvjg_i$ and $\dtof_i$ by analyzing the drift of several test functions of $\ttK_i$. This step is implicitly based on the intuition that $\ttK_i$ is approximately uniformly distributed over $[0, \toklim]$, with the tokens being generated and eliminated at similar rates. Finally, we invoke Lemma~\ref{lem:sat:w1-distance} to show that the tokens are indeed generated and eliminated at similar speeds, which leads to bounds on $\dtvjg_i$ and $\dtof_i$.

Finally, we make some additional remarks on the notations. First, $\dtvjg_i$ and $\dtof_i$ depend on the total number of type $i$ tokens $\ttk_i$ and the number of newly requested jobs $a_i$, although we omit the dependency expression for ease of exposition. Second, we abuse the notation $\dtvjg_i$ and $\dtof_i$ to denote the corresponding random variables. We also write $\sumka$ as a shorthand for $\sumrate{\vek', \vea}{\vek^\ell}$ when the context is clear.

\subsubsection{\textbf{Proofs of Lemma~\ref{lem:sat:virtual-jobs} and Lemma~\ref{lem:sat:overflow-jobs}.}}
We are now ready to prove Lemma~\ref{lem:sat:virtual-jobs} and Lemma~\ref{lem:sat:overflow-jobs}. 

\begin{proof}
\textbf{Step 1: Bounding Virtual Jobs and Jobs on Backup Servers using Little's Law.}
We first apply Little's Law to $\tvJ_i$ and $\toF_i$. For each type $i$, we let the expected time that a type $i$ job stays in the system be $t_i$. Let $\servmax = \max_{i\in\jobspace} t_i$. Because there are only finitely many types of jobs, and each job spends finite expected time in the system, $\servmax$ is a finite constant. By Little's law, 
\begin{align}
    \E[\tvJ_i] &\leq \servmax \E\left[\sumL \sumka \rate{\vek', \vea}{\syshat{\veK}^\ell} \dtvjg_i \indibrac{\vetoK^\ell=\vzero} \right], \label{eq:sat:tvj-from-dtvj}  \\
    \E[\toF_i] &\leq \servmax \E\left[\arr_i r \cdot \dtof_i \right]. \label{eq:sat:tof-from-dtof}
\end{align}

\textbf{Step 2: Drift Analysis.}
The above two equations \eqref{eq:sat:tvj-from-dtvj} and \eqref{eq:sat:tof-from-dtof} suggest that we can derive upper bounds on $\E[\tvJ_i]$ and $\E[\toF_i]$ by analyzing the following two terms:
\begin{itemize}
    \item $\E\left[\sumL \sumka \rate{\vek', \vea}{\syshat{\veK}^\ell} \dtvjg_i\indibrac{\vetoK^\ell=\vzero}\right]$, interpreted as the average rate that $\ttK_i(t)$ reflects on the boundary at $\toklim$;
    \item $\E\left[\arr_i r \cdot  \dtof_i\right]$, interpreted as the average rate that $\ttK_i(t)$ reflects on the boundary at $0$.
\end{itemize}
We establish the relationships of $\dtvjg_i$, $\dtof_i$ and $\ttK_i$ by analyzing the drift of two test functions $g$.

Letting $g(\ttk_i) = \ttk_i$ and taking steady-state expectation over its drift, by \eqref{eq:sat:gen-ttk} and the fact that the drift is zero in steady state, we get 
\begin{equation}\label{eq:sat:ttk-first-order}
    \E\left[\sumL \sumka \rate{\vek', \vea}{\syshat{\veK}^\ell} (a_i - \dtvjg_i)\indibrac{\vetoK^\ell=\vzero} + \arr_i r (-1 + \dtof_i) \right] =  0.
\end{equation}

Similarly, letting $g(\ttk_i) = \ttk_i^2$ and taking steady-state expectation over its drift, one can verify that 
\begin{align}
    &\mspace{23mu} \E\left[\sumL \sumka \rate{\vek', \vea}{\syshat{\veK}^\ell} \dtvjg_i \indibrac{\vetoK^\ell=\vzero}\right]  \label{eq:sat:ttk-second-order} \\
    &= \frac{1}{\toklim} \cdot \E\left[\left(\sumL \sumka \rate{\vek', \vea}{\syshat{\veK}^\ell} a_i  \indibrac{\vetoK^\ell=\vzero} - \arr_i r \right) \cdot \ttK_i\right] \label{eq:sat:ttk-second-order-term1} \\
    &\mspace{2mu}+ \frac{1}{2\toklim}  \cdot \E\left[\sumL \sumka \rate{\vek', \vea}{\syshat{\veK}^\ell} \left(a_i^2 - (\dtvjg_i)^2\right)\indibrac{\vetoK^\ell=\vzero} + \arr_i r \cdot (1-(\dtof_i)^2)\right]. \label{eq:sat:ttk-second-order-term2}
\end{align}
Readers may refer to the complete calculation at the end of this subsection.

\textbf{Step 3: Estimating the Terms Obtained from Drift Analysis.}
We will first focus on bounding $ \E\left[\sumL \sumka \rate{\vek', \vea}{\syshat{\veK}^\ell} \dtvjg_i\indibrac{\vetoK^\ell=\vzero}\right]$ analyzing the two terms in \eqref{eq:sat:ttk-second-order-term1} and \eqref{eq:sat:ttk-second-order-term2} separately. Then we invoke \eqref{eq:sat:ttk-first-order} to bound $ \E\left[\arr_i r \cdot  \dtof_i\right]$. 

The term in \eqref{eq:sat:ttk-second-order-term2} is easy to deal with. Observe that the number of jobs requested each time should be no more than the maximal number of jobs that a server can hold, i.e., $a_i \leq \Kmax$, so
\begin{align}
    \eqref{eq:sat:ttk-second-order-term2} &\leq  \frac{1}{2\toklim}  \cdot \E\left[\sumL \sumka \rate{\vek', \vea}{\syshat{\veK}^\ell} \Kmax^2 + \arr_i r \right] \nonumber\\
    &\leq  \frac{1}{2\toklim}\cdot \E\left[\sumL \ratetot{\max}  \Kmax^2 + \arr_i r\right] \nonumber\\
    &=\Obrac{\sqrt{r}},
\end{align}
where in the second inequality we have used the fact that the total rate is uniformly bounded by $\ratetot{\max}$, and the last step uses the facts that $L = \Obrac{r}$ and $\toklim = \Thebrac{\sqrt{r}}$. 

To bound the term in \eqref{eq:sat:ttk-second-order-term1}, first observe that $Z_i \leq \toklim$, which implies that 
\begin{equation*}
    \eqref{eq:sat:ttk-second-order-term1} \leq \E\left[\abs{\sumL \sumka \rate{\vek', \vea}{\syshat{\veK}^\ell} a_i \indibrac{\vetoK^\ell=\vzero}  - \arr_i r }\right].
\end{equation*}
The term on the RHS of the above equation is the expected absolute difference between the rates of generating and eliminating type $i$ tokens, which can be shown to be small relative to $r$. Specifically, we claim that
\begin{equation}\label{eq:sat:rate-abs-diff}
    \E\left[\abs{\sumL \sumka \rate{\vek', \vea}{\syshat{\veK}^\ell} a_i \indibrac{\vetoK^\ell=\vzero}  - \arr_i r }\right] = \Obrac{\sqrt{r}}.
\end{equation}

To show \eqref{eq:sat:rate-abs-diff}, first notice that we can remove the indicator $\indibrac{\vetoK^\ell=\vzero}$ without introducing much error: 
\begin{align*}
    &\mspace{23mu} \E\left[\abs{\sumL \sumka \rate{\vek', \vea}{\syshat{\veK}^\ell} a_i \indibrac{\vetoK^\ell=\vzero}  - \arr_i r }\right]\\
    &\leq \E\left[\abs{\sumL \sumka \rate{\vek', \vea}{\syshat{\veK}^\ell} a_i - \arr_i r }\right] + \E\left[\abs{\sumL \sumka \rate{\vek', \vea}{\syshat{\veK}^\ell} a_i \indibrac{\vetoK^\ell\neq\vzero}}\right]\\
    &\leq \E\left[\abs{\sumL \sumka \rate{\vek', \vea}{\syshat{\veK}^\ell} a_i - \arr_i r }\right] + \E\left[\abs{\sumL \ratetot{\max} \Kmax \indibrac{\vetoK^\ell\neq\vzero}}\right]\\
    &\leq \E\left[\abs{\sumL \sumka \rate{\vek', \vea}{\syshat{\veK}^\ell} a_i - \arr_i r }\right] + \ratetot{\max}\Kmax \abs{\jobspace} \toklim,
\end{align*}
where the first inequality is due to triangular inequality, the second inequality is due to the definition of $\ratetot{\max}$, and the last inequality is because $\sumL  \indibrac{\vetoK^\ell\neq\vzero} \leq \abs{\jobspace} \toklim$. It remains to bound the term $\E\left[\abs{\sumL \sumka \rate{\vek', \vea}{\syshat{\veK}^\ell} a_i - \arr_i r }\right]$, which can be seen as showing that the rate of generating type $i$ tokens concentrates around the type $i$ jobs' arrival rate $\arr_i r$. It is natural to think of using some Law of Large Numbers. Unfortunately, $\sumL \sumka \rate{\vek', \vea}{\syshat{\veK}^\ell} a_i$ is not a sum of i.i.d. random variables due to dependencies among $\syshat{\veK}^\ell$ for different $\ell$'s. As a result, we want to invoke the Wasserstein distance bound in Lemma~\ref{lem:sat:w1-distance} to replace $\syshat{\veK}^\ell$ in the above expression with $\sysbar{\veK}^\ell$. We define the function $f(\vek^\allL)$ as 
    \begin{equation}
        f(\vek^\allL) = \frac{1}{2\ratetot{\max}\Kmax} \abs{\sumL \sumka \rate{\vek', \vea}{\vek^\ell} a_i  - \arr_i r }. 
    \end{equation}
We claim that $f\in \text{Lip}(1)$. For any two $\vek^{\allL, 1}$, $\vek^{\allL, 2}$, 
\begin{align*}
    &\mspace{23mu} 2\ratetot{\max}\Kmax \cdot \left( f(\vek^{\allL, 1}) - f(\vek^{\allL, 2})\right)\\
    &= \abs{\sumL \sumka \rate{\vek', \vea}{\vek^{\ell, 1}} a_i  - \arr_i r } - \abs{\sumL \sumka \rate{\vek', \vea}{\vek^{\ell, 2}} a_i  - \arr_i r } \\
    &\leq \abs{\sumL \sumka \rate{\vek', \vea}{ \vek^{\ell, 1}} a_i  - \sumL \sumka \rate{\vek', \vea}{\vek^{\ell, 2}} a_i}\\
    &= \abs{\sumL \sumka \left(\rate{\vek', \vea}{ \vek^{\ell, 1}}  -  \rate{\vek', \vea}{\vek^{\ell, 2}} \right)a_i \indibrac{(\vek^{\ell, 1}) \neq (\vek^{\ell, 2}) }} \\
    &\leq \sumL \sumka \abs{\rate{\vek', \vea}{\vek^{\ell, 1}}  -  \rate{\vek', \vea}{\vek^{\ell, 2}}}\cdot \Kmax \cdot  \norm{\vek^{\ell, 1} - \vek^{\ell, 2}}\\
    &\leq \sumL 2\ratetot{\max} \Kmax \cdot \norm{\vek^{\ell, 1} - \vek^{\ell, 2}}\\
    &= 2\ratetot{\max}\Kmax \cdot \norm{ \vek^{\allL, 1} -  \vek^{\allL, 2}},
\end{align*}
where the first inequality is due to triangular inequality; the second inequality uses the fact that $a_i \leq \Kmax$ and $\indibrac{\vek^{1,\ell} \neq \vek^{2,\ell} } \leq \norm{\vek^{1,\ell} - \vek^{2,\ell}}$; the third inequality uses triangular inequality, the fact that the total rate at a configuration $\vek$ is bounded by $\ratetot{\max}$ and the property of the $L^1$ norm $\norm{\cdot}$. Therefore, $f\in \text{Lip}(1)$. The Lipschitz continuity of $f$ allows us to invoke Lemma~\ref{lem:sat:w1-distance} and get 
\[
    \E\left[\abs{\sumL \sumka \rate{\vek', \vea}{\syshat{\veK}^\ell} a_i  - \arr_i r }\right] - \E\left[\abs{\sumL \sumka \rate{\vek', \vea}{\sysbar{\veK}^\ell} a_i  - \arr_i r }\right] \leq 2\ratetot{\max}\Kmax \cdot \Obrac{\sqrt{r}}.
\]
Therefore,
\begin{equation}
    \E\left[\abs{\sumL \sumka \rate{\vek', \vea}{\syshat{\veK}^\ell} a_i  - \arr_i r }\right] \leq  \E\left[\abs{\sumL \sumka \rate{\vek', \vea}{\sysbar{\veK}^\ell} a_i  - \arr_i r }\right] + \Obrac{\sqrt{r}}.
\end{equation}
Observe that under a Markovian policy, the request rate of type $i$ jobs can be written as $\reqrate_i = \E[\sumka \rate{\vek', \vea}{\sysbar{\veK}^\ell} a_i]$, so we have
\begin{equation}
    \E\left[\sumL \sumka \rate{\vek', \vea}{\sysbar{\veK}^\ell} a_i  - \arr_i r \right] = \reqrate_i \cdot \lceil \sysbar{\nact}\rceil - \arr_i r = \Obrac{1}.
\end{equation}
Moreover, because $\sumka \rate{\vek', \vea}{\sysbar{\veK}^\ell} a_i$ are i.i.d. for $\ell=1,\dots, L$, we have
\begin{align*}
  \E\left[\abs{\sumL \sumka \rate{\vek', \vea}{\sysbar{\veK}^\ell} a_i  - \arr_i r }\right] &\leq   \sqrt{\E\left[\left(\sumL \sumka \rate{\vek', \vea}{\sysbar{\veK}^\ell} a_i  - \arr_i r \right)^2\right]} + \Obrac{1} \\
  &= \sqrt{\sumL \text{Var}\left(\sumka \rate{\vek', \vea}{\sysbar{\veK}^\ell} a_i\right)} + \Obrac{1}\\
  &= \Obrac{\sqrt{r}}.
\end{align*}
Therefore, by combining the arguments above, we get
\begin{align}
    \E\left[\abs{\sumL \sumka \rate{\vek', \vea}{\syshat{\veK}^\ell} a_i \indibrac{\vetoK^\ell=\vzero}  - \arr_i r }\right]
    &\leq \E\left[\abs{\sumL \sumka \rate{\vek', \vea}{\syshat{\veK}^\ell} a_i  - \arr_i r }\right] + \Obrac{\sqrt{r}} \nonumber \\
    &\leq \E\left[\abs{\sumL \sumka \rate{\vek', \vea}{\sysbar{\veK}^\ell} a_i  - \arr_i r }\right] + \Obrac{\sqrt{r}} \nonumber \\
    &\leq \Obrac{\sqrt{r}} ,  \label{eq:sat:step-2-pf:drift-term-bound-combine} 
\end{align}
which proves \eqref{eq:sat:rate-abs-diff}. This implies that the term in \eqref{eq:sat:ttk-second-order-term1} is also in $\Obrac{\sqrt{r}}$.

Combining the bounds on the terms in \eqref{eq:sat:ttk-second-order-term1} and \eqref{eq:sat:ttk-second-order-term2}, we get
\begin{equation*}
 \E\left[\sumL \sumka \rate{\vek', \vea}{\syshat{\veK}^\ell} \dtvjg_i \indibrac{\vetoK^\ell=\vzero} \right] = \Obrac{\sqrt{r}}.
\end{equation*}

Finally, we bound $\E\left[\arr_i r \cdot \dtof_i\right]$. We rearrange the terms in \eqref{eq:sat:ttk-first-order} and get 
\begin{align*}
\E\left[\arr_i r \cdot \dtof_i\right] &= \E\left[\sumL \sumka \rate{\vek', \vea}{\syshat{\veK}^\ell} (-a_i + \dtvjg_i)\indibrac{\vetoK^\ell=\vzero}  + \arr_i r \right] \\
& = \E\left[-\sumL \sumka \rate{\vek', \vea}{\syshat{\veK}^\ell} a_i \indibrac{\vetoK^\ell=\vzero} + \arr_i r \right] + \Obrac{\sqrt{r}}.
\end{align*}
By \eqref{eq:sat:rate-abs-diff}, we have $\E\left[-\sumL \sumka \rate{\vek', \vea}{\syshat{\veK}^\ell} a_i \indibrac{\vetoK^\ell=\vzero} + \arr_i r \right] = \Obrac{\sqrt{r}}$. Therefore,
\begin{equation*}
    \E\left[\arr_i r \cdot \dtof_i\right] = \Obrac{\sqrt{r}}.
\end{equation*}

We invoke the equations \eqref{eq:sat:tvj-from-dtvj} and \eqref{eq:sat:tof-from-dtof} that we get at the beginning of the proof, and conclude that
\begin{equation*}
    \E[\tvJ_i] \leq \servmax \E\left[\sumL \sumka \rate{\vek', \vea}{\syshat{\veK}^\ell} \dtvjg_i \indibrac{\vetoK^\ell=\vzero} \right] = \Obrac{\sqrt{r}}.
\end{equation*}
\begin{equation*}
    \E[\toF_i] \leq \servmax \E\left[\arr_i r \cdot \dtof_i \right] = \Obrac{\sqrt{r}}.
\end{equation*}
This finishes the proof.
\end{proof}

\subsubsection{\textbf{Deriving the equality in  \eqref{eq:sat:ttk-second-order}}}
We show the calculation detail of deriving the following equality.
\begin{align}
    &\mspace{23mu} \E\left[\sumL \sum_{\vek',\vea} \rate{\vek', \vea}{\syshat{\veK}^\ell} \dtvjg_i \indibrac{\vetoK^\ell=\vzero}\right] \tag{\ref{eq:sat:ttk-second-order}} \\
    &= \frac{1}{\toklim} \cdot \E\left[\left(\sumL \sum_{\vek',\vea} \rate{\vek', \vea}{\syshat{\veK}^\ell} a_i \indibrac{\vetoK^\ell=\vzero} - \arr_i r \right) \cdot \ttK_i\right] \tag{\ref{eq:sat:ttk-second-order-term1}} \\
    &\mspace{2mu}+ \frac{1}{2\toklim} \cdot \E\left[\sumL \sum_{\vek',\vea} \rate{\vek', \vea}{\syshat{\veK}^\ell} \left(a_i^2 - (\dtvjg_i)^2\right)\indibrac{\vetoK^\ell=\vzero} + \arr_i r \cdot (1-(\dtof_i)^2)\right]. \tag{\ref{eq:sat:ttk-second-order-term2}}
\end{align}
The equality is obtained by considering the drift of the function $g(\ttk_i) = \ttk_i^2$, which is zero in steady state. Recall that the drift of $g(\ttk_i)$ is given by
\begin{align}
    \syshat{\gen} g(\ttk_i, \vestt^\allL) &= \sumL \sum_{\vek',\vea} \rate{\vek', \vea}{\vek^\ell} \left(g(\ttk_i + a_i - \dtvjg_i) - g(\ttk_i)\right)\indibrac{\vetoK^\ell=\vzero} \nonumber \\
    &\mspace{15mu} + \arr_i r \left(g(\ttk_i - 1 + \dtof_i) - g(\ttk_i)\right). \tag{\ref{eq:sat:gen-ttk}}
\end{align}
We will first calculate $g(\ttk_i + a_i - \dtvjg_i) - g(\ttk_i)$, then $g(\ttk_i - 1 + \dtof_i) - g(\ttk_i)$, and finally plug them into the \eqref{eq:sat:gen-ttk}. 

The calculation of $g(\ttk_i + a_i - \dtvjg_i) - g(\ttk_i)$ utilizes the following property of $\dtvjg_i$:
\begin{equation}\label{eq:sat:dtvjg-reflection}
    (\ttk_i + a_i - \dtvjg_i) \cdot \dtvjg_i = \toklim \cdot \dtvjg_i.
\end{equation}
This property follows from the definition $\dtvjg_i = (\ttk_i + a_i - \toklim)^+$. Intuitively, this is because $\dtvjg_i$ is the ``force'' that pushes $\ttk_i$ back when it hits the boundary at $\toklim$. Using the property, we have 
\begin{align*}
    &\mspace{23mu} g(\ttk_i + a_i - \dtvjg_i) - g(\ttk_i)\\
    &= (\ttk_i + a_i - \dtvjg_i)^2 - \ttk_i^2\\
    &= (\ttk_i + a_i - \dtvjg_i)^2 - (\ttk_i + a_i - \dtvjg_i - a_i + \dtvjg_i)^2\\
    &= (\ttk_i + a_i - \dtvjg_i)^2 - \left((\ttk_i + a_i - \dtvjg_i)^2 + 2(-a_i + \dtvjg_i)\cdot  (\ttk_i + a_i - \dtvjg_i) + (-a_i + \dtvjg_i)^2\right)\\
    &= 2(a_i - \dtvjg_i)\cdot  (\ttk_i + a_i - \dtvjg_i) - (-a_i + \dtvjg_i)^2\\
    &= 2 a_i \cdot  (\ttk_i + a_i - \dtvjg_i) - (-a_i + \dtvjg_i)^2 -2 \dtvjg_i\cdot \toklim\\
    &= 2 a_i \cdot \ttk_i  + a_i^2 - (\dtvjg_i)^2 -2 \dtvjg_i\cdot \toklim.
\end{align*}
The second last equality is due to \eqref{eq:sat:dtvjg-reflection}, and the rest are all algebraic manipulations.

We carry out a similar calculation for $g(\ttk_i - 1 + \dtof_i) - g(\ttk_i)$:
\begin{align*}
    &\mspace{23mu} g(\ttk_i - 1 + \dtof_i) - g(\ttk_i) \\
    &= 2(-1+\dtof_i)\cdot \ttk_i + (-1+\dtof_i)^2\\
    &= -2\ttk_i + 2\ttk_i \cdot \dtof_i + 1 - 2\dtof_i + (\dtof_i)^2\\
    &= -2\ttk_i + 1 + 2(\ttk_i -1 + \dtof_i)\cdot \dtof_i - (\dtof_i)^2\\
    &= -2\ttk_i + 1 - (\dtof_i)^2.
\end{align*}
where the last equality is due to the property that
\begin{equation}\label{eq:sat:dtof-reflection}
    (\ttk_i -1 + \dtof_i) \cdot \dtof = 0,
\end{equation}
and the rest are all algebraic manipulations.

Putting together, 
\begin{align}
    \syshat{\gen} g(\ttk_i, \vestt^\allL) &= \sumL \sum_{\vek',\vea} \rate{\vek', \vea}{\vek^\ell} \left(g(\ttk_i + a_i - \dtvjg_i) - g(\ttk_i)\right)\indibrac{\vetok^\ell=\vzero} \nonumber \\
    &\mspace{15mu} + \arr_i r \left(g(\ttk_i - 1 + \dtof_i) - g(\ttk_i)\right)\nonumber \\
    &= \sumL \sum_{\vek',\vea} \rate{\vek', \vea}{\vek^\ell} \left(2 a_i \cdot \ttk_i  + a_i^2 - (\dtvjg_i)^2  -2 \dtvjg_i\cdot \toklim \right) \indibrac{\vetok^\ell=\vzero} \nonumber \\
    &\mspace{15mu} + \arr_i r \cdot\left( -2\ttk_i + 1 - (\dtof_i)^2\right) \nonumber.
\end{align}
After recombining the terms, we get
\begin{align}
    &\mspace{23mu} \toklim \cdot \E\left[\sumL \sum_{\vek',\vea} \rate{\vek', \vea}{\syshat{\veK}^\ell} \dtvjg_i \indibrac{\vetoK^\ell=\vzero}\right] \nonumber \\
    &= \E\left[\left(\sumL \sum_{\vek',\vea}\rate{\vek', \vea}{\syshat{\veK}^\ell} a_i\indibrac{\vetoK^\ell=\vzero}  - \arr_i r \right) \cdot \ttK_i\right] \nonumber \\
    &\mspace{23mu}+ \frac{1}{2} \E\left[\sumL \sum_{\vek',\vea} \rate{\vek', \vea}{\syshat{\veK}^\ell} \left(a_i^2 - (\dtvjg_i)^2\right)\indibrac{\vetoK^\ell=\vzero} + \arr_i r \cdot (1-(\dtof_i)^2)\right] \nonumber 
\end{align}

This finishes the calculation.

\subsection{Proof of Conversion Theorem without Assuming Irreducibility} \label{sec:proof-conv-general}

In this subsection, we prove Theorem \ref{theo:sat:conversion-general} without assuming $\vek^0$-irreducibility of the subroutine $\sysbar{\policy}$.
Specifically, suppose that we have a Markovian single-server policy $\sysbar{\policy}$ and an initial distribution $p_j$ over its recurrent classes $S_j$ for $j=1,2\dots J$, 
we will construct an infinite-server policy $\policy$ such that \eqref{eq:sat:active-servers} and \eqref{eq:sat:cost} still hold. 
The basic idea is to decompose a general Markovian single-server policy $\sysbar{\policy}$ into multiple $\vek^0$-irreducible Markovian policies, each induces one recurrent class and preserves stationary distribution and the throughput of $\sysbar{\policy}$ on that recurrent class, as stated in Lemma~\ref{lem:sat:break-reducible} below.

\begin{lemma}[Decomposing The Reducible Policy]\label{lem:sat:break-reducible}
    Let $\sysbar{\policy}$ be a general single-server Markovian policy with recurrent classes $S_j$ for $j=1,2,\dots, J$. 
    Then for each $j$ exists a Markovian policy $\sysbar{\policy}^j$ such that 
    \begin{itemize}
        \item The induced Markov chain is $\vek^0$-irreducible with the unique recurrent class being $S_j$;
        \item The stationary distribution is the same as the stationary distribution under $\sysbar{\policy}$ starting from a configuration in $S_j$.
    \end{itemize}
\end{lemma}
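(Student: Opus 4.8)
The plan is to construct $\sysbar{\policy}^j$ explicitly: have it imitate $\sysbar{\policy}$ whenever the server configuration lies in $S_j$, and on every other configuration steer the server back into $S_j$ by first making no requests at all (so that all jobs drain out) and then, if needed, issuing a single proactive batch request that jumps directly to a fixed reference configuration of $S_j$. Concretely, fix any $\vek^0_j\in S_j$ and define the Markovian single-server policy $\sysbar{\policy}^j$ by cases on the current configuration $\vek$: (i) if $\vek\in S_j$, then $\sysbar{\policy}^j$ issues exactly the reactive and proactive requests that $\sysbar{\policy}$ issues at $\vek$; (ii) if $\vek\notin S_j$ and $\vek\neq\vzero$, then $\sysbar{\policy}^j$ makes no requests, i.e.\ every proactive request rate is $0$ and every reactive request is the empty batch $\vea=\vzero$ (always feasible); (iii) if $\vek=\vzero\notin S_j$, then $\sysbar{\policy}^j$ runs a single proactive timer of rate $1$ whose tick requests the batch $\vea=\vek^0_j$, which is feasible because $\norm{\vek^0_j}\le\Kmax$. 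If instead $\vzero\in S_j$, case (i) already covers $\vzero$. This is a time-homogeneous, configuration-dependent policy, hence Markovian.

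I would then verify the two asserted properties. First, $S_j$ is closed under $\sysbar{\policy}^j$ and the $\sysbar{\policy}^j$-chain restricted to $S_j$ coincides with the $\sysbar{\policy}$-chain restricted to $S_j$: on $S_j$ the two policies issue identical requests by construction, and $S_j$ is a closed communicating class for $\sysbar{\policy}$, so the chain cannot leave $S_j$ once inside. Second, $S_j$ is reachable from every configuration under $\sysbar{\policy}^j$: starting from any $\vek\notin S_j$, case~(ii) makes no requests, so the configuration evolves only by internal phase transitions and departures; since each job eventually departs ($\perp$ is reached with probability one from every phase), the configuration reaches $\vzero$ with probability one, and in particular $\vzero$ is accessible from $\vek$ — unless the chain enters $S_j$ along the way, in which case we are already done. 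From $\vzero$: either $\vzero\in S_j$, or case~(iii) provides the transition $\vzero\to\vek^0_j\in S_j$. Hence $S_j$ is reached from every configuration, every configuration outside $S_j$ is transient, and $S_j$ is the unique recurrent class; moreover every state of $S_j$ is accessible from everywhere (reach $\vek^0_j$, then move within the communicating class $S_j$), so the $\sysbar{\policy}^j$-chain forms a unichain, i.e.\ is $\vek^0$-irreducible. Finally, since $S_j$ is finite and the $\sysbar{\policy}^j$-chain restricted to it is irreducible, that restricted chain has a unique stationary distribution $\pi_{S_j}$ to which it converges from any initial state in $S_j$; by the first point $\pi_{S_j}$ is also the unique stationary distribution of the $\sysbar{\policy}$-chain restricted to $S_j$, which is exactly the stationary distribution of $\sysbar{\policy}$ started from any configuration of $S_j$; and because $S_j$ is the only recurrent class of $\sysbar{\policy}^j$, $\pi_{S_j}$ is the unique stationary distribution of $\sysbar{\policy}^j$. (As a byproduct, since $\sysbar{\policy}^j$ and $\sysbar{\policy}$ request identically on $S_j$ and share the stationary distribution $\pi_{S_j}$, the type-$i$ throughputs are preserved as well, which is what the downstream argument in Appendix~\ref{sec:proof-conv-general} needs.)

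This lemma is essentially bookkeeping, so I do not expect a serious obstacle; the points that need a little care are (a) checking that the ``no request'' behaviour of case~(ii) is a legitimate single-server policy in the non-impulse view (it is: $\vea=\vzero$ is always feasible and proactive rates may be set to $0$), (b) handling all relative positions of $\vzero$ and $S_j$ in the reachability argument, and (c) the observation that from the empty configuration $\vzero$ a proactive request may add any feasible batch, so one proactive transition suffices to land on an arbitrary configuration of $S_j$. The only subtlety worth stating carefully is that $S_j$ remains a single closed communicating class under $\sysbar{\policy}^j$ — rather than splitting or leaking — which is immediate from the fact that $\sysbar{\policy}^j$ was defined to agree with $\sysbar{\policy}$ everywhere on $S_j$.
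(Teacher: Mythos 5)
Your construction is exactly the one the paper uses: imitate $\sysbar{\policy}$ on $S_j$, make no requests outside $S_j$ until the server drains to $\vzero$, and from $\vzero\notin S_j$ jump to a fixed configuration of $S_j$ via a rate-$1$ timer, after which closedness of $S_j$, reachability, and preservation of the stationary distribution follow as you describe. The proposal is correct and matches the paper's proof, with your extra remarks (feasibility of the batch, the case $\vzero\in S_j$, drainage via guaranteed departures) being harmless elaborations of steps the paper leaves implicit.
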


\begin{proof}
For each $j=1,2,\dots, J$, we define the policy $\sysbar{\policy}^j$ as follows: when the system has configuration $\vek \in S_j$, the policy $\sysbar{\policy}^j$ makes the same decisions as $\sysbar{\policy}$; when $\vek\notin S_j$ and $\vek=\vzero$, the policy starts a timer whose duration follows an exponential distribution with rate $1$ and immediately adds $\vek^0$ many jobs of each type when the timer ticks for some arbitrary $\vek^0\in S_j$; when $\vek\notin S_j$ and $\vek \neq \vzero$, the policy does not request any jobs.

We show that under the new policy $\sysbar{\policy}^j$, $S_j$ is also a recurrent class of the induced Markov chain. This is because if the system starts from a configuration $\vek\in S_j$, then it will stay in $S_j$ since it makes the same decisions and has the same transitions as under the policy $\sysbar{\policy}$. Because $S_j$ is a recurrent class under $\sysbar{\policy}$, it is still a recurrent class under $\sysbar{\policy}^j$.

To show that the Markov chain induced by $\sysbar{\policy}^j$ is $\vek^0$-irreducible, observe that starting from any $\vek \notin S_j$, the system state will return to $S_j$. Specifically,
\begin{itemize}
    \item If the system starts from a configuration $\vek$ such that $\vek \notin S_j$ and $\vek\neq\vzero$, then no new jobs will be requested until either $\vek\in S_j$ or $\vek=\vzero$. In the latter case, by the construction of the policy, the system jumps to a configuration in $S_j$ after the next transition.
    \item If the system starts from a configuration $\vek$ such that $\vek \notin S_j$ and $\vek=\vzero$, the system jumps to a configuration in $S_j$ after the next transition.
\end{itemize}

The claim that the stationary distribution under $\sysbar{\policy}^j$ is the same as the stationary distribution under $\sysbar{\policy}$ starting from a configuration in $S_j$ is trivial to show, because when the system initializes from any configuration in $S_j$, it stays in $S_j$ and the transitions are exactly the same under the two policies. 
\end{proof}

The JRS policy with a general Markovian single-server policy as its subroutine $\sysbar{\policy}$ is constructed using the $\vek^0$-irreducible policies $\sysbar{\policy}^j$'s obtained from the decomposition of $\sysbar{\policy}$ and the probabilities $p^j$'s. 
\begin{enumerate}[leftmargin=1.5em]
    \item We divide all the servers into $J$ server pools, each with infinitely many servers. Let the $j$-th server pool run the \metapolicy\ policy with subroutine $\sysbar{\policy}^j$ (defined in \Cref{sec:sat:JRRS-irreducible}) for each $j=1,\dots J$. 
    \item Whenever we see an arrival of type $i$, we route the job to the $j$-th infinite-server system with probability $\frac{p^j \reqrate_i^j }{\sum_{j} p^j \reqrate_{i'}^j}$ for each $j=1,\dots J$. 
\end{enumerate}

To analyze the policy $\policy$, let $\pi^j$ and $\reqrate_i^j$'s be the stationary distribution and throughput of the policy $\sysbar{\policy}^j$ for $j=1,\dots, J$. By Lemma~\ref{lem:sat:break-reducible}, we have the following relationships:
\begin{align}
    \pi(\vek) &= \textstyle\sum_{j=1}^J p^j\pi^j(\vek), \quad \forall \vek\in\Kinner, \label{eq:reducible-break:sta} \\
    \reqrate_i &= \textstyle\sum_{j=1}^J p^j\reqrate_i^j, \quad \forall i\in\jobspace. \label{eq:reducible-break:throughput}
\end{align}
Based on the above relationships, we can prove the general version of the Conversion Theorem that does not require $\vek^0$-irreducibility.

\begin{proof}[\textbf{Proof of Theorem~\ref{theo:sat:conversion-general}.}]
    For each $j=1,\dots, J$, Lemma~\ref{lem:sat:break-reducible} implies that the policy $\sysbar{\policy}^j$ and stationary distribution $\pi^j$ form a feasible solution to the single-server problem $\ssp\big(\big(p^j \reqrate_i^j \sysbar{\nact}\big)_{i\in\jobspace}, \budget\big)$, and the corresponding objective value is $p^j \sysbar{\nact}$. Consider the infinite-server system with arrival rates $\big(p^j \reqrate_i^j \sysbar{\nact}\big)_{i\in\jobspace}$ and budget $\budget$. As we have proved Theorem \ref{theo:sat:conversion-general} for the \metapolicy\ policy with $\vek^0$-irreducible subroutine $\sysbar{\policy}^j$, it follows that
    \begin{align}
        \left\lvert \textstyle\sum_{\vek\neq\vzero} \E[X_{\vek}^j] - \ceil{p^j \sysbar{\nact}} \cdot (1- \pi^j(\vzero)) \right\rvert &= \Obrac{\sqrt{r}}, \label{eq:sat:active-servers:j-th} \\
        \left\lvert \textstyle\sum_{\vek\neq\vzero} h(\vek) \E[X_{\vek}^j] - \ceil{p^j\sysbar{N}} \cdot \textstyle\sum_{\vek\neq\vzero} h(\vek) \pi^j(\vek)\right\rvert  &= \Obrac{\sqrt{r}}\label{eq:sat:cost:j-th},
    \end{align}
    where $X_{\vek}^j$ is the random variable representing the steady-state number of servers in configuration $\vek$ in the infinite-server system under the \metapolicy\ policy with subroutine $\sysbar{\policy}^j.$
    
    By the construction of $\policy$, the arrival rate to the $j$-th server pool is equal to 
    $ \frac{p^j \reqrate_i^j }{\sum_{j} p^j \reqrate_{i'}^j} \cdot \arr_i r = \frac{p^j \reqrate_i^j}{\reqrate_i} \cdot \arr_i r = p^j \reqrate_i^j \sysbar{\nact}$,  
    where the first equality is due to \eqref{eq:reducible-break:throughput}, and the second equality is due to the condition that $\arr_i r = \reqrate_i^j \cdot L$. Therefore, \eqref{eq:sat:active-servers:j-th} and \eqref{eq:sat:cost:j-th} hold, and we have
    \begin{align*}
       \abs{ \sum_{\vek\neq\vzero} \E[X_{\vek}] - \sysbar{\nact} \cdot (1- \pi(\vzero))}
       &= \abs{ \sum_{j=1}^J  \sum_{\vek\neq\vzero} \E[X_{\vek}^j] - \sum_{j=1}^J p^j\sysbar{\nact} \cdot (1-\pi^j(\vzero)) }\\
        &\leq \sum_{j=1}^J \abs{ \sum_{\vek\neq\vzero} \E[X_{\vek}^j] - \ceil{p^j\sysbar{\nact}} \cdot (1-\pi^j(\vzero)) } + O(1)\\
        &= O(\sqrt{r}).
    \end{align*}
    Here we use \eqref{eq:sat:active-servers:j-th} and the relationship between $\pi(\vek)$ and $\pi^j(\vek)$. Similarly, 
    \begin{align*}
       \abs{ \sum_{\vek\neq\vzero} h(\vek) \E[X_{\vek}] - \sysbar{\nact}\cdot \sum_{j=1}^J \sum_{\vek\neq\vzero} h(\vek) \pi(\vek)} 
       &= \abs{ \sum_{j=1}^J  \sum_{\vek\neq\vzero} \E[X_{\vek}^j] - \sum_{j=1}^J  p^j \sysbar{\nact}\cdot \sum_{\vek\neq\vzero} h(\vek) \pi^j(\vek) }\\
        &= \sum_{j=1}^J  \abs{  \sum_{\vek\neq\vzero} \E[X_{\vek}^j] - 
        \ceil{p^j \sysbar{\nact}}\cdot \sum_{\vek\neq\vzero} h(\vek) \pi^j(\vek) }+ O(1) \\
        &= O(\sqrt{r}).
    \end{align*}
    After re-indexing the servers, we get \eqref{eq:sat:active-servers} and  \eqref{eq:sat:cost}. The bounds on $\nact(\policy)$ and $\costp(\policy)$, \eqref{eq:alpha-optimal} and \eqref{eq:beta-optimal}, follow from \eqref{eq:sat:active-servers} and  \eqref{eq:sat:cost}. They can be verified in the same way as that in the proof for the irreducible case, so we omit the argument here. 
\end{proof}

\section{Solving the Single-Server Problem}\label{sec:solve-single-lp}
In this section, we show the equivalence of the single-server problem in \eqref{eq:sat:ssss-opt-problem} and a linear program (LP) as stated in \Cref{theo:single-opt}. The equivalence needs to be proved in two directions. In \Cref{sec:sat:lp-relaxation}, we first derive the linear program \eqref{eq:sat:ssss-lp} as a relaxation of the single-server problem \eqref{eq:sat:ssss-opt-problem} so that the optimal value of the LP is a lower bound to the optimal value of \eqref{eq:sat:ssss-opt-problem}. Then in \Cref{sec:sat:policy-class-lp-based}, we will construct a Markovian single-server policy that achieves the optimal value of the LP, which implies the optimality of the policy.

\subsection{Lower Bound via LP Relaxation}\label{sec:sat:lp-relaxation}

In this subsection we derive an LP relaxation of the optimization problem \eqref{eq:sat:ssss-opt-problem}, restated below: 
\begin{equation}\tag{\ref{eq:sat:ssss-opt-problem} revisited}
    \begin{aligned}
        & \underset{\sysbar{\nact}, \mspace{10mu}  \sysbar{\policy}, \mspace{10mu} \pi }{\text{minimize}}
        & & \sysbar{\nact} \\
        & \text{subject to}
        &  &\E\left[h\big(\sysbar{\veK}(\infty)\big)\middle| \sysbar{\veK}(\infty) \neq \vzero\right] \leq \budget,\\
        & &  & \sysbar{\nact} \cdot \reqrate_i = \arr_i r, \quad \forall i\in\jobspace.
    \end{aligned}
\end{equation}
Here we allow $\sysbar{\policy}$ to be non-Markovian, i.e., it can make decisions based on the history, but we still require its performance metrics used in the objective and the constraints of \eqref{eq:sat:ssss-opt-problem} to have well-defined steady-state distributions. 

Observe that both $\sysbar{\veK}(\infty)$ and $\reqrate_i$ depend on the stationary distribution $\pi$, but the constraints in terms of $\pi$ are implicit. To derive an LP relaxation, we give an explicit characterization of the constraints that must be satisfied by the stationary distribution $\pi$ induced by any feasible policy $\sysbar{\policy}$.

To do this, we derive a version of the stationary equation in terms of a quantity called \textit{transition frequency}. The transition frequency of type $i$ jobs is a function $\trfr_i \colon \Kinner \to \R$ describing the steady-state frequency of requesting a type $i$ job when the system has configuration $\ve{k}$. 
To rigorously define transition frequency, we first introduce a concept called \textit{nominal transition}.

\begin{definition}[Nominal Transition]
    Consider a single-server system under any policy. When the configuration $\sysbar{\veK}(t)$ transitions from $\vek$ to $\vek'+ \vea$ for some $\vek, \vek'\in\Kinner$ with $\vea = (a_i)_{i\in\jobspace}$ new jobs added into service, we decompose the transition by adding intermediate configurations as illustrated below, where $\vek$ first goes to $\vek'$ if $\vek'\neq\vek$, then add jobs of each type one by one. 
    \[
         \vek \to \vek' \to (\vek' + \vei_{i_1}) \to \dots \to (\vek' +a_{i_1} \vei_{i_1}) \to \dots \to (\vek'+a_{i_1} \vei_{i_1} + \dots + a_{i_{\abs{\jobspace}}} \vei_{i_{\abs{\jobspace}}}),
    \]
    where $(i_1, i_2, \dots, i_{\abs{\jobspace}})$ is a fixed ordering of the set of phases $\jobspace$. 
    We call each short transition in the diagram a \textit{nominal transition}.
\end{definition}

For $\vek^1, \vek^2\in\Kinner$, we denote $F(\vek^1, \vek^2, t)$ as the cumulative number of nominal transitions from $\vek^1$ to $\vek^2$ during the time interval $[0, t]$, which is a random variable with a distribution depending on the single-server policy and initial distribution of configurations.

Note that for any $i\in\jobspace$ and $\vek\in \Kinner$ s.t. $k_i \geq 1$, $F(\vek, \vek-\vei_i, t)$ counts the number of times that a type $i$ job departs when being in configuration $\vek$ . As a result,
\[
    F(\vek, \vek-\vei_i, t) = \mathcal{N}\left(\int_0^t k_i\mu_{i\perp} \indibrac{K(s)=\vek} ds\right),
\]
where $\mathcal{N}(t)$ denotes a unit rate Poisson process. If we take expectation, divide both sides by $t$, and let $t\to\infty$, we have
\begin{equation}\label{eq:departure-frequency}
     \lim_{t\to\infty} \frac{1}{t} \E[F(\vek, \vek-\vei_i, t)] = k_i \ser_{i\perp} \cdot \lim_{t\to\infty} \frac{1}{t} \int_0^t \Prob(\veK(s) = \vek) ds =  k_i \ser_{i\perp} \pi(\vek).
\end{equation}
Similarly, $F(\vek, \vek-\vei_i + \vei_{i'}, t)$ counts the number of times a job in phase $i$ transitions to phase $i'$ when being in configuration $ \vek$ for any $i,i'\in\jobspace$, $\vek\in\Kinner$ s.t. $i'\neq i$ and $k_i \geq 1$, so
\begin{equation}\label{eq:internal-trans-frequency}
    \lim_{t\to\infty} \frac{1}{t} \E[F(\vek, \vek-\vei_i+\vei_{i'}, t)] = k_i \ser_{ii'} \pi(\vek).
\end{equation}

We define transition frequency as follows.
 
\begin{definition}[Transition Frequency]\label{def:trans-freq}
    Transition frequency of type $i$ jobs at state $\vek$ is the long-run average number of nominal transitions from configuration $\vek$ to $\vek+\vei_i$ per unit time,
    \begin{equation}
        \trfr_i(\vek) \triangleq \lim_{t\to\infty} \frac{1}{t} \E[F(\vek, \vek+\vei_i, t)].
    \end{equation}
\end{definition}

The transition frequencies allow us to derive the following version of the stationary equation.

\begin{lemma}[Stationary Equation]\label{lem:sta-eq}
    Under any policy, the stationary distribution $\pi$ and the transition frequency $\trfr_i$ satisfy the following equation:
    \begin{equation}\label{eq:sta-eq}
    \begin{aligned}
        &\mspace{23mu} \sum_i \trfr_i(\vek - \vei_i)\indibrac{k_i\geq 1} + \sum_i (k_i+1)\ser_{i\perp} \pi(\vek + \vei_i) \\
        &\mspace{23mu}+ \sum_{i,i'\colon i\neq i'}(k_i+1) \ser_{ii'} \pi(\vek + \vei_i - \vei_{i'}) \indibrac{k_{i'} \geq 1} \\ 
        &= \sum_{i} \trfr_i(\vek) + \Big(\sum_i k_i\ser_{i\perp}+\sum_{i,i'\colon i\neq i'}k_i \ser_{ii'}\Big) \pi(\vek)
    \end{aligned}
    \end{equation}
    for any state $\vek \in \Kinner$, and $\sum_i, \sum_{i,i'}$ are shorthand for $\sum_{i\in\jobspace}, \sum_{i,i'\in\jobspace}$.
\end{lemma}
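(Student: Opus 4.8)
The plan is to read \eqref{eq:sta-eq} as a flow-balance (cut) identity for a fixed configuration $\vek\in\Kinner$: the long-run rate at which nominal transitions enter $\vek$ must equal the long-run rate at which they leave $\vek$. Fix $\vek$ and set $A_{\vek}(t)\triangleq\sum_{\vek'\in\Kinner}F(\vek',\vek,t)$ and $D_{\vek}(t)\triangleq\sum_{\vek'\in\Kinner}F(\vek,\vek',t)$, which are finite sums since $\Kinner$ is finite. Since every nominal transition changes the configuration by $+\vei_i$ (a request), $-\vei_i$ (a departure), or $-\vei_i+\vei_{i'}$ (an internal transition), $A_{\vek}(t)$ decomposes into the families $F(\vek-\vei_i,\vek,t)$, $F(\vek+\vei_i,\vek,t)$, $F(\vek+\vei_i-\vei_{i'},\vek,t)$, and $D_{\vek}(t)$ into $F(\vek,\vek+\vei_i,t)$, $F(\vek,\vek-\vei_i,t)$, $F(\vek,\vek-\vei_i+\vei_{i'},t)$. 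Matching each family to its long-run frequency — the departure families via \eqref{eq:departure-frequency}, the internal-transition families via \eqref{eq:internal-trans-frequency}, and the request families via \Cref{def:trans-freq} — one checks that $\tfrac1t\,\E[A_{\vek}(t)]$ converges to the left-hand side of \eqref{eq:sta-eq} and $\tfrac1t\,\E[D_{\vek}(t)]$ to the right-hand side, with the indicators $\indibrac{k_i\ge1}$, $\indibrac{k_{i'}\ge1}$ and the convention $\pi(\cdot)=0$ off $\Kinner$ absorbing the boundary configurations where a neighbor leaves $\Kinner$ or a coordinate vanishes.

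The two limits coincide by a telescoping argument. Listing the configurations visited by the decomposed process on $[0,t]$ as $\veK(0)=c_0\to c_1\to\cdots\to c_{N(t)}=\veK(t)$, one arrow per nominal transition (counting the zero-duration intermediate configurations produced by multi-job requests), with $N(t)<\infty$ almost surely since the configuration process does not explode, we have $A_{\vek}(t)=\sum_{m=1}^{N(t)}\indibrac{c_m=\vek}$ and $D_{\vek}(t)=\sum_{m=0}^{N(t)-1}\indibrac{c_m=\vek}$, so
\[
A_{\vek}(t)-D_{\vek}(t)=\indibrac{\veK(t)=\vek}-\indibrac{\veK(0)=\vek}\in\{-1,0,1\}.
\]
Taking expectations, dividing by $t$, and letting $t\to\infty$ gives $\lim_{t\to\infty}\tfrac1t\,\E[A_{\vek}(t)]=\lim_{t\to\infty}\tfrac1t\,\E[D_{\vek}(t)]$ — both one-sided limits exist, the departure and internal-transition parts by \eqref{eq:departure-frequency}--\eqref{eq:internal-trans-frequency} and the request parts by \Cref{def:trans-freq} — which, combined with the term identification above, is exactly \eqref{eq:sta-eq}.

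The main obstacle is not any single estimate but the careful bookkeeping around the nominal-transition decomposition. The telescoping lives on the sequence of nominal transitions, where zero-duration intermediate configurations are counted, whereas \eqref{eq:departure-frequency}--\eqref{eq:internal-trans-frequency} are phrased via time averages of $\indibrac{\veK(s)=\vek}$, in which those same visits contribute nothing; one has to check that this mismatch is harmless for equating the two rates (it is, because the transient steps of a multi-job request are exactly paired into $A_{\vek}$ and $D_{\vek}$ and hence cancel in the difference). One must also verify, case by case, that each boundary configuration — a neighbor $\vek\pm\vei_i$ or $\vek+\vei_i-\vei_{i'}$ falling outside $\Kinner$, or some $k_i=0$ — is correctly handled by the appropriate indicator or the $\pi\equiv0$ convention, so that no term is spuriously created or dropped. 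Finally, one should record non-explosion of the configuration process — which makes $N(t)$ finite and the telescoping valid — as following from finiteness of $\Kinner$ under the standing assumption that the policy induces a well-defined configuration process with a stationary distribution.
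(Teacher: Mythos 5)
Your proposal is correct and follows essentially the same route as the paper: the paper's proof rests on exactly your telescoping identity (its equation \eqref{eq:nominal-transitions-master}, equating $\indibrac{\sysbar{\veK}(t)=\vek}-\indibrac{\sysbar{\veK}(0)=\vek}$ with the difference between nominal transitions into and out of $\vek$), after which one divides by $t$, takes expectations, and identifies the limits via Definition~\ref{def:trans-freq} and \eqref{eq:departure-frequency}--\eqref{eq:internal-trans-frequency}. Your extra bookkeeping on zero-duration intermediate configurations, boundary indicators, and non-explosion simply makes explicit details the paper leaves implicit.
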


\begin{proof}
    For each configuration $\vek\in\Kinner$, if we look at the difference between the number of nominal transitions into configuration $\vek$ and that out of configuration $\vek$ by time $t$, we have the following equation, 
    \begin{equation}\label{eq:nominal-transitions-master}
        \begin{aligned}
        &\mspace{23mu} \indibrac{\sysbar{\veK}(t)=\vek} - \indibrac{\sysbar{\veK}(0)=\vek}\\
        &= \sum_i F(\vek-\vei_i, \vek, t)\indibrac{k_i \geq 1} +  \sum_i F(\vek+\vei_i, \vek, t) + \sum_{i,i'\colon i\neq i'} F(\vek+\vei_i-\vei_{i'}, \vek, t)\indibrac{k_{i'} \geq 1}\\
        &\mspace{23mu}-  \sum_i F(\vek, \vek+\vei_i, t) -  \sum_i F(\vek, \vek-\vei_i, t)\indibrac{k_i \geq 1} - \sum_{i,i'\colon i\neq i'} F(\vek, \vek-\vei_i+\vei_{i'}, t)\indibrac{k_i \geq 1}
        \end{aligned}
    \end{equation}
    By Definition~\ref{def:trans-freq} and \eqref{eq:departure-frequency}-\eqref{eq:internal-trans-frequency}, if we divide both sides of \eqref{eq:nominal-transitions-master} by $t$ and let $t \to \infty$, we get the stationary equation in \eqref{eq:sta-eq}.
\end{proof}

Since the stationary equation in \eqref{eq:sta-eq} is linear in $u_i(\vek)$ and $\pi(\vek)$, we can write it in matrix form:
\begin{equation}\label{eq:sta-eq-matrix-no-nonneg}
    A \ve{\pi} + \textstyle\sum_{i\in\jobspace} B_i \ve{u}_i = 0,
\end{equation}
where $\ve{\pi}$ and  $\ve{\trfr}_i$ are column vectors representing $\pi(\cdot)$, $\trfr_i(\cdot)$, and $A, B_i$ are matrices that make \eqref{eq:sta-eq-matrix-no-nonneg} equivalent to \eqref{eq:sta-eq}. Therefore, the following three conditions are necessary for any tuple $(\ve{\pi}, (\ve{\trfr}_i)_{i\in \jobspace})$ to be a possible pair of stationary distribution and transition frequencies for a Markovian policy.
\begin{equation}\label{eq:sta-eq-matrix}
    \begin{aligned}
      A \ve{\pi} + \textstyle\sum_{i\in\jobspace} B_i \ve{u}_i &= 0\\
    \textstyle\sum_{\vek} \pi(\vek) &= 1\\
   \ve{\pi}, \ve{\trfr}_i &\geq 0, \quad \forall i\in\jobspace
    \end{aligned}
\end{equation}
Based on the characterization of stationary $\ve{\pi}$ and $\ve{\trfr}_i$'s in \eqref{eq:sta-eq-matrix}, we can now formulate a linear program $\sslp((\arr_i)_{i\in\jobspace}, \budget)$. The linear program has decision variables $\Phi\in\R$, $\ve{\pi} \in \R^{\Kinner}$, and $\ve{\trfr}_i \in \R^{\Kinner}$ for $i\in\jobspace$, where $\Phi$ is a factor that scales the throughput of each type of jobs in the direction of $(\arr_i)_{i\in\jobspace}$. 
\begin{equation}\label{eq:sat:ssss-lp}
    \begin{aligned}
        & \underset{\substack{\Phi, \mspace{10mu}\ve{\pi},\mspace{8mu} (\ve{\trfr}_i)_{i\in\jobspace}} }{\text{maximize}}         & & \Phi \\
        & \text{subject to}
        &  & \ve{h}^T \ve{\pi} \leq \budget (1-\pi(\vzero))\\
        & &  &  \ve{1}_{o}^T \ve{\trfr}_i = \Phi \cdot \arr_i \quad \forall i\\
        & &  & A \ve{\pi} + \sum\nolimits_{i\in\jobspace} B_i \ve{u}_i = \vzero\\
        & &  &  \ve{1}^T \ve{\pi} = 1\\
        & &  &  \ve{\pi} \geq 0, \ve{\trfr}_i \geq 0 \quad \forall i\in\jobspace
    \end{aligned}
\end{equation}  
where $\ve{h}$ is the vector form of the cost rate function $h$; $\ve{1}_{o}^T$ is a $\abs{\Kinner}$-dimensional vector with one in all entries except those with $\sum_{i\in\jobspace} k_i = \Kmax$. In addition to the last three constraints on stationarity, the first constraint of $\sslp((\arr_i)_{i\in\jobspace}, \budget)$ is the resource contention constraint; the second constraint is because the transition frequency from $\vek$ to $\vek+\vei_i$ is equal to the throughput of type $i$ jobs. 

By the construction of $\sslp((\arr_i)_{i\in\jobspace}, \budget)$, any feasible solution of $\ssp((\arr_i r)_{i\in\jobspace}, \budget)$ can be converted to a feasible solution of $\sslp((\arr_i)_{i\in\jobspace}, \budget)$. Let $\sysbar{\nact}^*$ be the optimal value of \eqref{eq:sat:ssss-opt-problem} and $\Phi^*$ be the optimal value of \eqref{eq:sat:ssss-lp}. Then we have
\begin{equation}\label{eq:lp-relax-lb}
    \sysbar{\nact}^* \geq \frac{r}{\Phi^*}.
\end{equation}

\subsection{Policy Construction}\label{sec:sat:policy-class-lp-based}
In this subsection, we describe a procedure that allows us to construct a policy that achieves the lower bound given by the LP relaxation in \eqref{eq:sat:ssss-lp}.  Specifically, given a feasible solution $(\ve{\pi}, (\ve{\trfr}_i)_{i\in\jobspace})$ to \eqref{eq:sat:ssss-lp}, we define an \textit{LP-based policy} that requests jobs as follows:
   \begin{itemize}
    \item Case 1: When the system enters a configuration $\vek$ with $\pi(\vek) \neq 0$, for each $i\in\jobspace$, the policy starts a timer whose duration follows an exponential distribution with rate $\frac{\trfr_{i}(\vek)}{\pi(\vek)}$. The policy requests a type $i$ job when the $i$-th timer ticks. When the configuration changes, all timers are canceled. 
    \item Case 2: When the system enters a configuration $\vek$ with $\pi(\vek) = 0$ and $\sum_{i'} \trfr_{i'}(\vek) \neq 0$, the policy immediately requests a type $i$ job with probability $\frac{\trfr_{i}(\vek)}{\sum_{i'} \trfr_{i'}(\vek)}$. 
    \item Case 3: When the system enters a configuration $\vek$ with $\pi(\vek) = 0$ and $\sum_{i'} \trfr_{i'}(\vek) = 0$, the policy does not request any jobs. 
\end{itemize}

We denote the LP-based policy based on the solution $(\ve{\pi}, (\ve{\trfr}_i)_{i\in\jobspace})$ as $\sysbar{\policy}(\ve{\pi}, (\ve{\trfr}_i)_{i\in\jobspace})$. 

\begin{remark}\label{remark:two-views-of-single-server-policy}
Note that the definition of the LP-based policy here is stated from a view different from the view in \Cref{sec:sat:JRRS-irreducible}: 
here each request only adds one job to the server, and one request can happen immediately after another; while in \Cref{sec:sat:JRRS-irreducible} each request can add multiple jobs to the server, and there is only one request happening at a time. 
We refer to the view here as the \emph{impulsive view}, because multiple requests happening at the same time can be thought of as having an infinite request rate. In contrast, we call the view in \Cref{sec:sat:JRRS-irreducible} the \emph{non-impulsive view}. 

The LP-based policy can be alternatively described using the non-impulsive view if we see multiple requests happening at the same time as one request that adds multiple jobs to the server. 
More specifically, each reactive request of the LP-based policy is initiated by an internal transition or departure and consists of one or multiple requests of Case~2; 
each proactive request of the LP-based policy consists of one request in Case~1 and possibly several requests in Case~2. 
\end{remark}

The following lemma characterizes the steady-state behavior of a single-server system under the LP-based policy.

\begin{lemma}[Properties of LP-based Policies]\label{lem:sat:policy-recovers-lpsol}
    Consider a single-server system under the LP-based policy $\sysbar{\policy}(\ve{\pi}, (\ve{\trfr}_i)_{i\in\jobspace})$, where $(\ve{\pi}, (\ve{\trfr}_i)_{i\in\jobspace})$ is a feasible solution to \eqref{eq:sat:ssss-lp}. We have that $\ve{\pi}$ is a stationary distribution under policy $\sysbar{\policy}$, and $(\ve{\trfr}_i)_{i\in\jobspace}$ are the transition frequencies corresponding to $\sysbar{\policy}$.
\end{lemma}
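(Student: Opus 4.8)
The plan is to verify directly that $\ve{\pi}$ satisfies the global balance equations of the continuous-time Markov chain induced by $\sysbar{\policy}(\ve{\pi},(\ve{\trfr}_i)_{i\in\jobspace})$, and then to read off the transition frequencies from the policy's definition. The starting observation is that the stationary equation \eqref{eq:sta-eq} --- which $(\ve{\pi},(\ve{\trfr}_i)_{i\in\jobspace})$ satisfies because it is feasible to \eqref{eq:sat:ssss-lp} --- is exactly flow balance on the transition graph over $\Kinner$, where $\trfr_i(\vek)$ plays the role of the steady-state flow from $\vek$ to $\vek+\vei_i$, while $k_i\ser_{i\perp}\pi(\vek)$ and $k_i\ser_{ii'}\pi(\vek)$ are the outflows from $\vek$ due to departures and internal transitions. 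So the real content is to show that, started from $\ve{\pi}$, the policy realizes precisely these flows.

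The subtlety is that $\sysbar{\policy}$ has configurations where it spends zero time: a configuration $\vek$ with $\pi(\vek)=0$ but $\sum_{i'}\trfr_{i'}(\vek)\neq 0$ is left instantaneously via a Case-2 request. First I would dispose of the Case-3 configurations ($\pi(\vek)=0$ and $\sum_{i'}\trfr_{i'}(\vek)=0$): evaluating \eqref{eq:sta-eq} at such a $\vek$ forces its entire inflow (expressed through $\ve{\pi}$ and the $\ve{\trfr}_i$'s of its neighbors) to vanish, and from this one checks that no configuration with $\pi>0$ --- and, inductively, no Case-2 configuration --- can reach a Case-3 configuration in one transition of $\sysbar{\policy}$; hence, started from $\ve{\pi}$, the chain never visits a Case-3 state. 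The same evaluation of \eqref{eq:sta-eq} at a zero-$\pi$ configuration $\vek$ that is a one-step successor of a positive-$\pi$ configuration shows $\sum_{i'}\trfr_{i'}(\vek)>0$ (it is a Case-2 state) and that its inflow rate equals $\sum_{i'}\trfr_{i'}(\vek)$. Since each Case-2 request strictly increases the total number of jobs (bounded by $\Kmax$), cascades of Case-2 requests terminate, so one can collapse the Case-2 states and obtain a genuine CTMC on $\mathcal{S}_+\triangleq\{\vek:\pi(\vek)>0\}$.

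Next I would verify stationarity on this collapsed chain. Assign to each $\vek\in\mathcal{S}_+$ the candidate time-fraction $\pi(\vek)$, to each Case-2 state $\vek$ the candidate visit rate $\nu(\vek)\triangleq\sum_{i'}\trfr_{i'}(\vek)$, and to each request edge $\vek\to\vek+\vei_i$ the candidate flow $\trfr_i(\vek)$ (and $k_i\ser_{i\perp}\pi(\vek)$, $k_i\ser_{ii'}\pi(\vek)$ to the departure and internal edges). The policy's definition makes these self-consistent: at $\vek\in\mathcal{S}_+$ the type-$i$ request timer has rate $\trfr_i(\vek)/\pi(\vek)$, so the request flow out of $\vek$ is $\pi(\vek)\cdot\trfr_i(\vek)/\pi(\vek)=\trfr_i(\vek)$; at a Case-2 state the request is routed to type $i$ with probability $\trfr_i(\vek)/\sum_{i'}\trfr_{i'}(\vek)$, so the request flow out is $\nu(\vek)\cdot\trfr_i(\vek)/\sum_{i'}\trfr_{i'}(\vek)=\trfr_i(\vek)$, and a Case-2 state has no departure or internal outflow. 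With these identifications the flow-balance equations at every configuration are exactly \eqref{eq:sta-eq}, which holds by feasibility; hence $\ve{\pi}$ is a stationary distribution of the configuration process under $\sysbar{\policy}$. Running the chain from $\ve{\pi}$ and repeating the same rate computations gives $\lim_{t\to\infty}\frac{1}{t}\E[F(\vek,\vek+\vei_i,t)]=\trfr_i(\vek)$ for every $\vek$ (including the degenerate cases where $\pi(\vek)=0$ and $\vek$ is never reached, where both sides are $0$), which is precisely the definition of the transition frequencies.

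The main obstacle will be making the instantaneous-state bookkeeping rigorous: precisely defining the collapsed CTMC (terminating cascades, finitely many visits per unit time to Case-2 states), and justifying that flow balance on the collapsed chain is equivalent to \eqref{eq:sta-eq} on all of $\Kinner$ after the time change. A secondary point is that $\mathcal{S}_+$ need not be a single communicating class, so the argument only yields that $\ve{\pi}$ is \emph{a} stationary distribution --- which is all the lemma asserts, and all that is needed downstream, since the reducible case is handled separately via Lemma~\ref{lem:sat:break-reducible}.
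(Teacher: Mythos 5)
Your proposal is sound and proves the right statement, but it is organized differently from the paper's argument, so a comparison is worth making. The paper does not construct a collapsed chain on $\{\vek:\pi(\vek)>0\}$ and never argues that Case-3 configurations are unreachable. Instead, it initializes the system with $\Prob\big(\sysbar{\veK}(0)=\vek\big)=\pi(\vek)$ and works directly with time derivatives at $t=0$ of the expected nominal-transition counts $\E[F(\vek,\vek+\vei_i,t)]$, proving by induction on $\sum_{i\in\jobspace}k_i$ that $\frac{d}{dt}\E[F(\vek,\vek+\vei_i,t)]\big|_{t=0}=\trfr_i(\vek)$ and $\frac{d}{dt}\Prob\big(\sysbar{\veK}(t)=\vek\big)\big|_{t=0}=0$ (equations \eqref{eq:opt-recover:rate-invariant}--\eqref{eq:opt-recover:distr-invariant}, via the master equation \eqref{eq:opt-recover:transition-rule}); stationarity for all $t$ and the long-run averages then follow. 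That induction on the job count is exactly the paper's substitute for your ``cascades terminate, so collapse the Case-2 states'' bookkeeping: the inflow rate into $\vek$ from requests is only known once the request rates out of the smaller configurations $\vek-\vei_i$ (whether positive-$\pi$ or instantaneous) have been pinned down, and ordering by $\sum_i k_i$ handles this without ever defining a collapsed CTMC or a visit rate $\nu(\vek)$. For your Case-3 states the paper uses a slicker local argument in place of unreachability: since $\Prob\big(\sysbar{\veK}(0)=\vek\big)=0$, the derivative of the state probability is nonnegative while the surviving right-hand side $-\sum_i\frac{d}{dt}\E[F(\vek,\vek+\vei_i,t)]\big|_{t=0}$ is nonpositive, so both vanish; your reachability analysis reaches the same conclusion but needs the extra structural claims you sketch (including, implicitly, that LP feasibility forbids requests out of configurations with $\sum_i k_i=\Kmax$, which one gets by summing \eqref{eq:sta-eq} over $\Kinner$). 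The trade-off: your route gives a cleaner probabilistic picture (a genuine CTMC on the support of $\pi$ whose global balance equations are literally \eqref{eq:sta-eq}), at the cost of the instantaneous-state formalization you flag as the main obstacle; the paper's route avoids that formalization entirely by never leaving the level of expected counting processes, at the cost of a less transparent three-case induction. Both correctly conclude only that $\ve{\pi}$ is \emph{a} stationary distribution, with reducibility deferred to Lemma~\ref{lem:sat:break-reducible}.
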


The proof of Lemma~\ref{lem:sat:policy-recovers-lpsol} is based on \eqref{eq:nominal-transitions-master}, following the same argument as the proof of Lemma~\ref{lem:sta-eq}, as well as an induction argument.

\begin{proof}
    Let $(\ve{\pi}, (\ve{\trfr}_i)_{i\in\jobspace})$ be a feasible solution to the LP in \eqref{eq:sat:ssss-lp}. To show that $\ve{\pi}$ and $(\ve{\trfr}_i)_{i\in\jobspace}$ are also the actual stationary distribution and transition frequencies, it suffices to show that if the initial distribution follows $\ve{\pi}$, i.e.
    \begin{equation*}
        \Prob\big(\sysbar{\veK}(0) = \vek\big) = \pi(\vek) 
    \end{equation*}
    then under the policy $\sysbar{\policy}(\ve{\pi}, (\ve{\trfr}_i)_{i\in\jobspace})$, we have
    \begin{align}
        &\lim_{T\to\infty} \frac{1}{T} \int_0^T \Prob\big(\sysbar{\veK}(t) = \vek\big)dt = \pi(\vek), \\
        &\lim_{T\to\infty} \frac{1}{T} \E[F(\vek, \vek+\vei_i, T)] = \trfr_{i}(\vek),
    \end{align}
    where $F$ is the cumulative number of nominal transitions under the policy $\sysbar{\policy}(\ve{\pi}, (\ve{\trfr}_i)_{i\in\jobspace})$ and the initial distribution $\ve{\pi}$. 
    
    Our proof is based on the following equation:
    \begin{equation}\label{eq:opt-recover:transition-rule}
        \begin{aligned}
        &\mspace{23mu}\frac{d}{dt} \Prob(\sysbar{\veK}(t) = \vek) \Big|_{t=0}\\
        &= \sum_i  \frac{d}{dt}\E\left[F(\vek-\vei_i, \vek, t)\right] \Big|_{t=0} -  \sum_i \frac{d}{dt} \E\left[F(\vek, \vek+\vei_i, t)\right]\Big|_{t=0} \\        &\mspace{23mu} +  \sum_i (k_i+1)\ser_{i\perp} \pi(\vek + \vei_i)
         + \sum_{i,i'}(k_i+1) \ser_{ii'} \pi(\vek + \vei_i - \vei_{i'}) \indibrac{k_{i'} \geq 1}\\
        &\mspace{23mu} - \sum_i k_i\ser_{i\perp} \pi(\vek) - \sum_{i,i'}k_i \ser_{ii'} \pi(\vek),
        \end{aligned}
    \end{equation}
   The equation is a straightforward consequence of \eqref{eq:nominal-transitions-master}, following the same argument as the proof of Lemma~\ref{lem:sta-eq}.

    We prove the following two equations by induction on $\sum_{i \in \jobspace} k_i$. 
    \begin{align}
        \lim_{t\to 0} \frac{1}{t} \E[F(\vek, \vek+\vei_i, t)] &= \trfr_i(\vek), \label{eq:opt-recover:rate-invariant} \\
        \frac{d}{dt} \Prob(\sysbar{\veK}(t) = \vek) |_{t=0} &= 0. \label{eq:opt-recover:distr-invariant}
    \end{align}
    
    We first consider the base case when $\sum_{i \in \jobspace} k_i=0$. In this case, $\vek = \vzero$ and we have
    \begin{equation}\label{eq:opt-recover:base-case-condition}
    \frac{d}{dt}\E\left[F(\vek-\vei_i, \vek, t)\right] \Big|_{t=0} = \trfr_i(\vek-\vei_i) \indibrac{k_i\geq 1}=0,
    \end{equation}
    for all $i$. This reduces \eqref{eq:opt-recover:transition-rule} to 
    \begin{equation}\label{eq:opt-recover:transition-rule:with-induction}
        \begin{aligned}
        &\mspace{23mu}\frac{d}{dt} \Prob(\sysbar{\veK}(t) = \vek) \Big|_{t=0}\\ 
        &= \sum_i \trfr(\vek-\vei_i)\indibrac{k_i\geq 1} -  \sum_i \frac{d}{dt} \E\left[F(\vek, \vek+\vei_i, t)\right]\Big|_{t=0} \\        &\mspace{23mu} +  \sum_i (k_i+1)\ser_{i\perp} \pi(\vek + \vei_i)
         + \sum_{i,i'}(k_i+1) \ser_{i,i'} \pi(\vek + \vei_i - \vei_{i'}) \indibrac{k_{i'} \geq 1}\\
        &\mspace{23mu} - \sum_i k_i\ser_{i\perp} \pi(\vek) - \sum_{i,i'}k_i \ser_{i,i'} \pi(\vek),
        \end{aligned}
    \end{equation}
    Now we discuss based on whether $\pi(\vek) = 0$.
    If $\pi(\vek) \neq 0$, by the definition of our policy, for all $i$,
    \begin{equation}
        \frac{d}{dt} \E\left[F(\vek, \vek+\vei_i, t)\right]\Big|_{t=0} = \frac{\trfr_i(\vek)}{\pi(\vek)}\cdot \Prob(\sysbar{\veK}(0) = \vek) = \trfr_i(\vek),
    \end{equation}
    which is \eqref{eq:opt-recover:rate-invariant}. Combining the above equation and the stationary equation \eqref{eq:sta-eq} satisfied by $(\ve{\pi}, (\ve{\trfr}_i)_{i\in\jobspace})$, we conclude that the RHS of \eqref{eq:opt-recover:transition-rule:with-induction} is zero, i.e.
    \[
        \frac{d}{dt} \Prob(\sysbar{\veK}(t) = \vek) |_{t=0} = 0,
    \]
    which is \eqref{eq:opt-recover:distr-invariant}. For the case when $\pi(\vek) = 0$ and $\sum_i \trfr_i(\vek) \neq 0$, because the system immediately leave the configuration $\vek$ after reaching it through a nominal transition,
    \[
        \frac{d}{dt} \Prob(\sysbar{\veK}(t) = \vek) |_{t=0} = 0,
    \]
    i.e., the LHS of  \eqref{eq:opt-recover:transition-rule:with-induction} is $0$. Again we compare \eqref{eq:opt-recover:transition-rule:with-induction} against the stationary equation \eqref{eq:sta-eq} and get
    \[
        \sum_{i} \frac{d}{dt} \E\left[F(\vek, \vek+\vei_{i}, t)\right]\Big|_{t=0} = \sum_{i} \trfr_i(\vek).
    \]
    By the definition of our policy, we have
    \begin{equation}
    \begin{aligned}
        \frac{d}{dt} \E\left[F(\vek, \vek+\vei_i, t)\right]\Big|_{t=0} &= \frac{\trfr_i(\vek)}{\sum_{i'} \trfr_{i'}(\vek)}\cdot \sum_{i'} \frac{d}{dt} \E\left[F(\vek, \vek+\vei_{i'}, t)\right]\Big|_{t=0} = \trfr_i(\vek).
    \end{aligned}
    \end{equation}
    which is \eqref{eq:opt-recover:rate-invariant}. For the case when $\pi(\vek) = 0$ and $\sum_i \trfr_i(\vek) = 0$, \eqref{eq:sta-eq} implies that $\trfr_i(\vek-\vei_i) = 0$, $\pi(\vek+\vei_i) = 0$, $\pi(\vek+\vei_i - \vei_{i'}) = 0$ for any $i$. Then \eqref{eq:opt-recover:transition-rule:with-induction} is further reduced to 
    \[
        \frac{d}{dt} \Prob(\sysbar{\veK}(t) = \vek) \Big|_{t=0} =  -  \sum_i \frac{d}{dt} \E\left[F(\vek, \vek+\vei_i, t)\right]\Big|_{t=0}.
    \]
    Because $\Prob(\sysbar{\veK}(t) = \vek) \geq 0$ and $\Prob(\sysbar{\veK}(0) = \vek) = 0$, the LHS of the above expression is non-negative. However, the RHS of the above expression is non-positive.  Therefore, both sides are equal to zero, thus we have $\frac{d}{dt} \Prob(\sysbar{\veK}(t) = \vek) \Big|_{t=0} = 0$ and $\frac{d}{dt} \E\left[F(\vek, \vek+\vei_i, t)\right]\Big|_{t=0} = \trfr_i(\vek)$.
    
    Having proved the base case, we do the induction step. Suppose we have proved \eqref{eq:opt-recover:rate-invariant} and  \eqref{eq:opt-recover:distr-invariant} for all $\vek$ such that $\sum_{i \in \jobspace} k_i \leq m-1$ for some integer $m \geq 1$. We consider $\vek$ with $\sum_{i \in \jobspace} k_i = m$. By the induction hypothesis, 
    \begin{equation}
        \frac{d}{dt}\E\left[F(\vek-\vei_i, \vek, t)\right] \Big|_{t=0} = \trfr_i(\vek-\vei_i)\indibrac{k_i\geq 1}.
    \end{equation}
    Then we repeat the arguments after \eqref{eq:opt-recover:base-case-condition} of the base case verbatim. By induction, we have proved  \eqref{eq:opt-recover:rate-invariant} and \eqref{eq:opt-recover:distr-invariant}.
    
    Therefore, given the policy and initial distribution, the distribution of $\sysbar{\veK}(t)$ is stationary, i.e., we always have $\Prob(\sysbar{\veK}(t) = \vek) = \pi(\vek)$ for all $\vek \in \Kinner$. As a result, an analogue of \eqref{eq:opt-recover:rate-invariant} holds for all $t\geq 0$: $\E[F(\vek, \vek+\vei_i, t)]$ is differentiable with respect to $t$ and
    \begin{align*}
        \frac{d}{dt} \E[F(\vek, \vek+\vei_i, t)] &= \trfr_i(\vek),
    \end{align*}
    for all $\vek\in \Kinner$ and all $i\in\jobspace$. Therefore,
    \begin{align*}
        \lim_{T\to\infty} \frac{1}{T} \int_0^T \Prob(\sysbar{\veK}(t) = \vek)dt &= \lim_{T\to\infty} \frac{1}{T} \cdot \pi(\vek) \cdot T = \pi(\vek),\\
        \lim_{T\to\infty} \frac{1}{T} \E[F(\vek, \vek+\vei_i, T)] &= \lim_{T\to\infty} \frac{1}{T} \int_0^T \frac{d}{dt} \E[F(\vek, \vek+\vei_i, t)]dt \\
        &= \lim_{T\to\infty} \frac{1}{T} \cdot \trfr_i(\vek) \cdot T = \trfr_i(\vek).
    \end{align*}
    This completes the proof.
\end{proof}

\subsection{Proof of Theorem~\ref{theo:single-opt}} \label{appx:proof-single-opt}
\begin{customthm}{\ref{theo:single-opt}}[Optimality of Single-OPT]
    Given an optimal solution $(\Phi^*, \ve{\pi}^*, (\ve{\trfr}_i)_{i\in\jobspace})$ to the linear program $\sslp((\arr_i)_{i\in\jobspace}, \budget)$, we can solve the single-server problem $\ssp((\arr_i r)_{i\in\jobspace}, \budget)$ in \eqref{eq:sat:ssss-opt-problem} to achieve an optimal value $r/\Phi^*$, using the optimal policy $\sysbar{\policy}(\ve{\pi}^*, (\ve{\trfr}_i)_{i\in\jobspace}^*)$ and the optimal stationary distribution $\ve{\pi}^*$. Moreover, the optimal policy $\sysbar{\policy}(\ve{\pi}^*, (\ve{\trfr}_i)_{i\in\jobspace}^*)$ is a Markovian policy.
\end{customthm}

\begin{proof}
    By Lemma~\ref{lem:sat:policy-recovers-lpsol}, under the policy $\sysbar{\policy}(\ve{\pi}^*, (\ve{\trfr}^*_i)_{i\in\jobspace})$, $\ve{\pi}^*$ is a stationary distribution, and $(\ve{\trfr}^*_i)_{i\in\jobspace})$ are the corresponding transition frequencies. Recall the single-server $\ssp((\arr_i r  )_{i\in\jobspace}, \budget)$ problem
    \begin{equation}\tag{\ref{eq:sat:ssss-opt-problem} revisited}
    \begin{aligned}
        & \underset{\sysbar{\nact}, \mspace{10mu}  \sysbar{\policy}, \mspace{10mu} \pi }{\text{minimize}}
        & & \sysbar{\nact} \\
        & \text{subject to}
        &  &\E\left[h\big(\sysbar{\veK}(\infty)\big)\middle| \sysbar{\veK}(\infty) \neq \vzero\right] \leq \budget,\\
        & &  & \sysbar{\nact} \cdot \reqrate_i = \arr_i r, \quad \forall i\in\jobspace.
    \end{aligned}
    \end{equation}
    Observe that under the policy $\sysbar{\policy}(\ve{\pi}^*, (\ve{\trfr}^*_i)_{i\in\jobspace})$, the cost rate of resource contention is $\ve{h}^T \ve{\pi} \leq \budget (1-\pi(\vzero))$, the request rate of type $i$ jobs is $\reqrate_i = \ve{1}_{o}^T \ve{\trfr}^*_i = \Phi^* \cdot \arr_i$, so 
    \begin{align*}
        &\E\left[h\big(\sysbar{\veK}(\infty)\big)\middle| \sysbar{\veK}(\infty) \neq \vzero\right] = \frac{\ve{h}^T \ve{\pi}}{1-\pi(\vzero)} \leq \budget,\\
        &\reqrate_i = \Phi^* \cdot \arr_i, \quad \forall i\in\jobspace.
    \end{align*}
    where we have used the fact that $h(\vzero) = 0$ in the first equality. Therefore, $(\Phi^* / r, \sysbar{\policy}(\ve{\pi}^*, (\ve{\trfr}^*_i)_{i\in\jobspace}), \ve{\pi})$ is a feasible solution to the single-server problem  $\ssp((\arr_i r)_{i\in\jobspace}, \budget)$, achieving the objective value of $r/\Phi^*$, which is the optimal value because $r/\Phi^* \leq \sysbar{\nact}^*$. 
\end{proof}

\section{Performance guarantee of \metapolicyfull\ with an estimated model}\label{sec:proof-prop-estimation}

\subsection{Assumptions and result}
In this section, we consider the performance of \metapolicy\ when it is based on an estimated model. We will state the performance guarantee in terms of the estimation error, and give a proof sketch by pointing out which part of \Cref{theo:sat:conversion-general}'s proof needs to be changed accordingly. 

Consider the setting where the maximal jobs on a server $\Kmax$, the set of job phases $\jobspace$, the cost rate function $h(\cdot)$, and the budget $\budget$ are all known. 
However, we only have estimations of the jobs' arrival rates, internal transition rates, and departure rates.

Specifically, for any $i,i'\in\jobspace$, let $\mutrue_{ii'}$ be the \emph{true} rate of internal transition from phase $i$ to phase $i'$; let  $\mutrue_{i\perp}$ be the \emph{true} departure rate of phase $i$; let $\lamtrue_i r$ be the \emph{true} arrival rate of type $i$ jobs. We let $\muest_{ii'}$'s, $\muest_{i\perp}$'s, and $\lamest_i r$'s be the \emph{estimated} internal transition rates, departure rates, and arrival rates, respectively.
We assume that there exists a small positive constant $\delta$ that is independent of the scaling factor $r$ such that the following assumptions hold. 

\begin{assumption}[$\delta$-accurate estimation]\label{assump:delta-accurate}
	For any $i,i'\in\jobspace$, 
	\begin{align}
		\abs{\muest_{ii'} - \mutrue_{ii'}} &\leq \delta, \\
		\abs{\muest_{i\perp} - \mutrue_{i\perp}} &\leq \delta, \\
		\big\lvert\lamest_i - \lamtrue_i\big\rvert &\leq \delta. 
	\end{align}
\end{assumption}

\begin{assumption}[Scaling of the single-server objective value]\label{assump:Nbar-scaling}
	Consider the single-server problem in \eqref{eq:sat:ssss-opt-problem}. Let $(\sysbar{\nact},\sysbar{\sigma},\pi)$ be a solution feasible to the single-server problem with the estimated parameters that are $\delta$-accurate, where $\delta \in [0,\delta_{\max}),$ for some constant $\delta_{\max}>0$. We assume that there exist constants $0 < m_1 < m_2$ independent of $\delta$ and $r$ such that
    \[
        m_1 r \leq \sysbar{\nact}\leq m_2 r.
    \]
\end{assumption}

\begin{assumption}[$\delta$-insensitivity of the optimal value]\label{assump:delta-insensitive}
	Consider the single-server problem in \eqref{eq:sat:ssss-opt-problem}. 
    Let the optimal value of the single-server problem with the estimated parameters be $\sysbar{\nact}^*$, where the estimated parameters are $\delta$-accurate; let the optimal value of the single-server problem with true parameters be $\sysbar{\nact}^*_{\text{true}}$. We assume that  
	\[
		\sysbar{\nact}^* \leq \sysbar{\nact}^*_{\text{true}} + \delta r.
	\]
\end{assumption}

We also assume that \metapolicy\ can accurately simulate the virtual jobs. 

\begin{assumption}[Accurate simulation]\label{assump:accurate-simulation}
	The virtual jobs simulated in \metapolicy\ follow the true transition dynamics. 
\end{assumption}

Given the above assumptions, we have the following proposition that states the optimality gap of JRS policy with estimated model parameters, which has a similar form as \Cref{theo:sat:conversion-general} for JRS under true model parameters. 

\begin{proposition}[Optimality gap with model estimation]\label{theo:misspec}
Consider a stochastic bin-packing problem in service systems with time-varying job resource requirements. 
Let the infinite-server policy $\policy$ be \metapolicy\ with subroutine $\sysbar{\policy}$. Suppose $\policy$ is specified based on an estimated model satisfying Assumption~\ref{assump:delta-accurate}, \ref{assump:Nbar-scaling}, \ref{assump:delta-insensitive}, and \ref{assump:accurate-simulation}, for $\delta$ s.t. $\delta \in [0, \delta_{\max} )$, where $\delta_{\max}$ is some positive constant independent of $r$. Let $\sysbar{\nact}$ be the objective value achieved by $\sysbar{\policy}$ in the single-server problem $\ssp((\arr_i r)_{i\in\jobspace}, \budget)$  with estimated parameters. Under $\policy$, for any initial state, we have
\begin{align}
    \abs{\sum_{\vek \neq \vzero} \E\left[X_{\vek}\right] - \ceil{\sysbar{\nact}} \cdot\Prob\left(\sysbar{\veK}\neq \vzero\right)}  &= O\big(\sqrt{r}\big)+ \delta \cdot O\big(r\big), \label{eq:misspec:active-servers}\\
    \abs{\sum_{\vek\neq\vzero} h(\vek) \E\left[X_{\vek}\right] - \ceil{\sysbar{\nact}} \cdot \E\left[h(\sysbar{\veK})\right]} &= O\big(\sqrt{r}\big)+ \delta \cdot O\big(r\big),\label{eq:misspec:cost}
\end{align}
where $\sysbar{\veK}$ denotes the steady-state configurations of the single-server system under $\sysbar{\policy}$ with the {estimated parameters}. 
If we let $\sysbar{\policy}$ be the optimal policy of the single-server problem with estimated parameters, for any initial state, we have
\begin{align}
    \nact(\policy) &\leq \left(1 + B\delta + \Obrac{r^{-0.5}}\right) \cdot \sysbar{\nact}^*_{\text{true}}, \label{eq:misspec:alpha-optimal}\\
    \costp(\policy) &\leq \left(1 + B\delta + \Obrac{r^{-0.5}}\right)\cdot \budget \label{eq:misspec:beta-optimal},
\end{align}
where $B$ is a positive constant independent of $r$. In other words, $\policy$ is $\left(1+ B \delta + O\left(r^{-0.5}\right), 1+ B \delta+O\left(r^{-0.5}\right)\right)$-optimal. 
\end{proposition}

\begin{remark}
    Note even if the single-server system with estimated parameters $\vek^0$-irreducible under $\sysbar{\policy}$, it is hard to guarantee that the original system is $\vek^0$-irreducible due to the estimation errors. 
    Consequently, the steady-state performance metrics $N(\policy)$ and $C(\policy)$ could depend on the system's initial state. 
    Fortunately, our proof for \Cref{theo:sat:conversion-general} does not rely on the uniqueness of the stationary distribution, so we can adapt the proof to show the inequalities in \Cref{theo:misspec} for any initial state. 
\end{remark}

\begin{remark}
Note that \Cref{assump:accurate-simulation} ensures that the real jobs and virtual jobs on a server are indistinguishable, so that $(\syshat{\veK}^\allL(t), \vetoK^\allL(t))$ is still a Markov chain. Recall that $\syshat{\veK}^{\ell}$ and and $\vetoK^{\ell}$ denote the observed configuration and tokens for each normal server $\ell$, respectively  (see \Cref{sec:conv-thm:preliminary}).
We suspect that \Cref{assump:accurate-simulation} can be removed since our proof does not rely too much on $(\syshat{\veK}^\allL(t), \vetoK^\allL(t))$ being a Markov chain. However, removing the assumption requires a more careful and notationally heavy analysis. 
We argue that in practice, this assumption is not restrictive, as one can record the traces of the jobs that arrived in the past and resample virtual jobs from those traces.  
\end{remark}

\subsection{Lemmas}
In this section, we give a proof sketch for \Cref{theo:misspec} when the single-server system under $\sysbar{\policy}$ \emph{with estimated parameters} is $\vek^0$-irreducible. The argument of extending to the general case is essentially the same as \Cref{sec:proof-conv-general}, so we omit it here.

On a high level,the proof of \Cref{theo:misspec} is similar to that of \Cref{theo:asymp-opt}. In particular, the key steps of the proof are to show that $d(\veK^\allL, \sysbar{\veK}^\allL) = O(\sqrt{r})$ and $\sum_{\ell=L+1}^\infty \sum_{i\in\jobspace} K_i^\ell=O(\sqrt{r})$ as $r\to\infty$, where $L = \lceil\sysbar{\nact}\rceil$, and $\sysbar{\veK}^\allL$ are i.i.d. copies of steady-state configurations of the single-server system under $\sysbar{\policy}$ with the \emph{estimated parameters}, and $\veK$ is the steady-state configurations of the infinite server system under ${\policy}$. 

\Cref{theo:misspec} is based on the three lemmas stated below. The proof of \Cref{theo:misspec} using the three lemmas is essentially the same as the argument in \Cref{sec:conv-thm:proof-thm}. 

\begin{lemma}\label{lem:misspec:w1-distance} Under the conditions of \Cref{theo:misspec} and the single-server system with estimated parameters under $\sysbar{\policy}$ being $\vek^0$-irreducible, for any initial state, we have     
\begin{equation*}
        d\left(\syshat{\veK}^\allL, \sysbar{\veK}^\allL \right) = O\big(\sqrt{r}\big)+ \delta \cdot O\big(r\big)
    \end{equation*}
\end{lemma}

\begin{lemma}\label{lem:misspec:virtual-jobs} Under the conditions of \Cref{theo:misspec} and the single-server system with estimated parameters under $\sysbar{\policy}$ being $\vek^0$-irreducible, for any initial state and $i\in\jobspace$, the steady-state expected number of virtual jobs of type $i$ s.t.
    \begin{equation*}
        \E\left[\textstyle \sumL \vj_i^\ell\right] = O\big(\sqrt{r}\big)+ \delta \cdot O\big(r\big).
    \end{equation*}
\end{lemma}

\begin{lemma}\label{lem:misspec:overflow-jobs} Under the conditions of \Cref{theo:misspec} and the single-server system with estimated parameters under $\sysbar{\policy}$ being $\vek^0$-irreducible, for any initial state and $i\in\jobspace$, the steady-state expected number of type $i$ jobs on backup servers s.t.
    \begin{equation*}
     \E\left[\textstyle \sum_{\ell=L+1}^\infty K_i^\ell\right]= O\big(\sqrt{r}\big)+ \delta \cdot O\big(r\big).
    \end{equation*}
\end{lemma}

These three lemmas are analogous to \Cref{lem:sat:w1-distance}, \Cref{lem:sat:virtual-jobs}, and \Cref{lem:sat:overflow-jobs}, respectively.
In the rest of the section, we sketch the proofs for the three lemmas above, highlighting the difference from the proofs of their analogues.

\subsection{Proof sketch for \Cref{lem:misspec:w1-distance}}
Recall from \Cref{appx:proof-sat-w1-distance} that the proof of \Cref{lem:sat:w1-distance} is based on Stein's method, which compares the generator of the i.i.d. copies of the single-server system with the generator of the infinite-server system. 
To write down the generators with the estimated model, recall that in the single-server system, each transition can be represented by the diagram
\[
\vek \overset{}{\rightarrow} \vek' \overset{}{\rightarrow} \vek' + \vea, 
\]
where the arrow $\vek\to\vek'$ denotes an internal transition or a departure if $\vek\neq \vek'$; the arrow $\vek'\to\vek'+\vea$ denotes a job request that is made right after reaching $\vek'$. For any $\vek$, let $\Kka(\vek)$ be the set of $(\vek',\vea)\in\Kouter^2$ such that $\vek'+\vea \in \Kouter$. We define two sets of transition rates as below: for any $\vek \in \Kouter$ and $(\vek',\vea)\in\Kka(\vek)$, 
\begin{itemize}
    \item Under the estimated parameters and the policy $\sysbar{\policy}$, we let $\rate{\vek', \vea}{\vek}$ be the rate of the transition $\vek\to\vek'\to\vek'+\vea$, and let $\ratetot{\vek} \triangleq \sum_{\vek',\vea} \rate{\vek',\vea}{\vek}$ be the total transition rate at configuration $\vek$. 
    \item Under the \emph{true parameters and the policy $\sysbar{\policy}$}, we let $\ratetjm{\vek',\vea}{\vek}$ be the rate of the transition $\vek\to\vek'\to\vek'+\vea$, and let $\ratetottjm{\vek}\triangleq \sum_{\vek',\vea} \ratetjm{\vek',\vea}{\vek}$ be the total transition rate at configuration $\vek$. 
\end{itemize}

Let $\sysbar{\gen}$ be the generator of $L = \lceil\sysbar{\nact}\rceil$ i.i.d. copies of single-server systems under the estimated parameters. For any $g\colon \Kouter^L \to \R$, we have
\begin{equation}
    \sysbar{\gen}g(\vek^\allL) = \sumL \sum_{\vek', \vea} \rate{\vek', \vea}{\vek^\ell} \left(g(\cdot, \vek'+\vea, \cdot) - g(\cdot, \vek^\ell, \cdot) \right)\label{eq:misspec:gen-bar},
\end{equation}
where $\sumka$ is a shorthand for $\sumkae$. 
Let $\syshat{\gen}$ be the generator of $(\syshat{\veK}^\allL(t), \vetoK^\allL(t))$ for the infinite-server system. For any function $g\colon \Kinner^L \to \R$ and $\psi(\vek^\allL, \vetok^\allL) = g(\vek^\allL+\vetok^\allL)$, we have
\begin{align}
    \syshat{\gen} \psi(\vek^\allL, \vetok^\allL)
    &= \sumL \sumka \ratetjm{\vek', \vea}{\vek^\ell} \left(g(\cdot, \vek'+\vea, \cdot) - g(\cdot, \vek^\ell, \cdot) \right) \indibrac{\vetok^\ell = \vzero} \nonumber\\
    &\mspace{23mu}+\sumL \sumka \ratetjm{\vek', \vea}{\vek^\ell} \left(g(\cdot, \vek'+\vetok^\ell, \cdot) - g(\cdot, \vek^\ell+\vetok^\ell, \cdot) \right)\indibrac{\vetok^\ell\neq \vzero}
    \label{eq:misspec:gen-hat}.
\end{align}

To prove \Cref{lem:misspec:w1-distance}, we need to show that for any $f\in \text{Lip}(1)$, $\vek^\allL \in \Kouter^L$, and $\vetok^\allL \in \Kouter^L$,
\begin{equation}\label{eq:misspec:w1-distance-proof:goal}
    \sysbar{\gen}g_f(\vek^\allL+\vetok^\allL) - \syshat{\gen} \psi_f(\vek^\allL, \vetok^\allL) = O\big(\sqrt{r}\big) + \delta \cdot O\big(r\big),
\end{equation}
where $g_f$ is the solution to
\begin{equation}\label{eq:misspec:lem1-poisson-eq}
    \E\left[f\big(\sysbar{\veK}^\allL\big)\right] - f\big(\vek^\allL\big) = \sysbar{\gen} g_f\big(\vek^\allL\big),
\end{equation}
and $\psi_f(\vek^\allL, \vetok^\allL) = g_f(\vek^\allL+\vetok^\allL)$. 
Same as the proof of \Cref{lem:sat:w1-distance}, we prove \eqref{eq:misspec:w1-distance-proof:goal} in two steps: the generator comparison step, and the Stein factor bound step. 

In the generator comparison step, we observe that the formula of $\sysbar{\gen}g(\vek^\allL)$ and $\syshat{\gen} \psi(\vek^\allL, \vetok^\allL)$ in \eqref{eq:misspec:gen-bar} and \eqref{eq:misspec:gen-hat} look almost the same as \eqref{eq:sat:gen-bar} and \eqref{eq:sat:gen-hat}, except that the rates in \eqref{eq:misspec:gen-hat} are $\ratetjm{\vek', \vea}{\vek^\ell}$ instead of $\rate{\vek', \vea}{\vek^\ell}$. As a result, after carrying out similar calculations as in the poof of \Cref{lem:sat:w1-distance}, we get an extra error term involving $\rate{\vek', \vea}{\vek^\ell}-\ratetjm{\vek', \vea}{\vek^\ell}$, which can be bounded using the lemma below. This error term results in the $\delta \cdot O(r)$ in \eqref{eq:misspec:w1-distance-proof:goal}.

\begin{lemma}\label{lem:misspec:gamma-rates-approx}
    Under \Cref{assump:delta-accurate}, 
    for any $\vek \in \Kouter$ and $(\vek',\vea) \in \Kka(\vek)$, we have 
    \begin{equation}
	\abs{\ratetjm{\vek',\vea}{\vek} - \rate{\vek',\vea}{\vek}} \leq \Kmax \delta.
    \end{equation}
\end{lemma}

\begin{proof}
	When $\vek = \vek'$, $\rate{\vek', \vea}{\vek}$  and $\ratetjm{\vek', \vea}{\vek}$ are both the rate of adding $\vea$ jobs via a proactive request under the policy $\sysbar{\policy}$, so they are identical. 
	
	When $\vek \neq \vek'$ and $\vea = \vzero$, $\rate{\vek', \vea}{\vek}$ is equal to the rate of going from $\vek$ to $\vek'$ via an internal transition or departure under the estimated job model, while $\ratetjm{\vek', \vea}{\vek}$ is under the true job model. Because there are at most $\Kmax$ jobs, and by our assumption the estimation error of each job's transition rates are bounded by $\delta$, thus $\rate{\vek', \vea}{\vek}$ and $\ratetjm{\vek', \vea}{\vek}$ differ by at most $\Kmax \delta$.
	
	When $\vek \neq \vek'$ and $\vea \neq \vzero$, $\rate{\vek', \vea}{\vek}$ is equal to the rate of going from $\vek$ to $\vek'$ via an internal transition or departure, multiplied by the probability of adding $\vea$ jobs via a reactive request, under the estimated job model and the policy $\sysbar{\policy}$; $\ratetjm{\vek', \vea}{\vek}$ is under the true job model instead of the estimated job model, but uses the same policy. Because the rate of going from $\vek$ to $\vek'$ differs by at most $\Kmax \delta$ under two different job models, and the probability of adding $\vea$ jobs after going from $\vek$ to $\vek'$ is the same under the same policy, thus $\rate{\vek', \vea}{\vek}$ and $\ratetjm{\vek', \vea}{\vek}$ differ by at most $\Kmax \delta$.
\end{proof}

In the Stein factor bound step, we need to show that 
\begin{equation}\label{eq:misspec:grad-moment-bound}
   \sup_{\vek, \vek'\in\Kouter} \abs{g_f(\cdot, \vek', \cdot) - g_f(\cdot, \vek, \cdot )} = \Obrac{1}.
\end{equation}
This involves analyzing the i.i.d. copies of the single-server system, and the fact that the single-server system with estimated parameters is $\vek^0$-irreducible under $\sysbar{\policy}$. This part of the proof is identical to the corresponding part of the proof of \Cref{lem:sat:w1-distance}.

\subsection{Proof sketch for \Cref{lem:misspec:virtual-jobs} and \Cref{lem:misspec:overflow-jobs}}

The proof of \Cref{lem:misspec:virtual-jobs} and \Cref{lem:misspec:overflow-jobs} has a similar structure as the proof of \Cref{lem:sat:virtual-jobs} and \Cref{lem:sat:overflow-jobs}. In the first step, we use Little's law to bound expectations of the number of type $i$ virtual jobs,  $\tvJ_i$, and the number of type $i$ jobs on backup servers, $\toF_i$. We have two equations almost the same as \eqref{eq:sat:tvj-from-dtvj} and \eqref{eq:sat:tof-from-dtof} except that the rates $\rate{\vek', \vea}{\syshat{\veK}^\ell}$ and $\lamest_i r$ are replaced by $\ratetjm{\vek', \vea}{\syshat{\veK}^\ell}$ and $\lamtrue_i r$:
\begin{align}
    \E[\tvJ_i] &\leq \servmax \E\left[\sumL \sumka \ratetjm{\vek', \vea}{\syshat{\veK}^\ell} \dtvjg_i \indibrac{\vetoK^\ell=\vzero} \right], \label{eq:misspec:tvj-from-dtvj}  \\
    \E[\toF_i] &\leq \servmax \E\left[\lamtrue_i r \cdot \dtof_i \right], \label{eq:misspec:tof-from-dtof}
\end{align}
where $\servmax$ is the maximal expected service time of any type of virtual job or real job. Because $\servmax = O(1)$, it suffices to bound $\E\left[\sumL \sumka \ratetjm{\vek', \vea}{\syshat{\veK}^\ell} \dtvjg_i \indibrac{\vetoK^\ell=\vzero} \right]$ and  $\E\left[\lamtrue_i r \cdot \dtof_i \right]$. 

In the second step, we utilize the fact that the two Lyapunov functions $g(\ttk_i) = \ttk_i$ and $g(\ttk_i) = \ttk_i^2$ have zero drift in steady-state, where $\ttk_i$ is the total number of type $i$ tokens. We get two equalities similar to \eqref{eq:sat:ttk-first-order} and \eqref{eq:sat:ttk-second-order-term2}: 
\begin{equation}\label{eq:misspec:ttk-first-order}
    \E\left[\sumL \sumka \ratetjm{\vek', \vea}{\syshat{\veK}^\ell} (a_i - \dtvjg_i)\indibrac{\vetoK^\ell=\vzero} + \lamtrue_i r (-1 + \dtof_i) \right] =  0.
\end{equation}
\begin{align}
    &\mspace{23mu} \E\left[\sumL \sumka \ratetjm{\vek', \vea}{\syshat{\veK}^\ell} \dtvjg_i \indibrac{\vetoK^\ell=\vzero}\right]  \label{eq:misspec:ttk-second-order} \\
    &= \frac{1}{\toklim} \cdot \E\left[\left(\sumL \sumka \ratetjm{\vek', \vea}{\syshat{\veK}^\ell} a_i  \indibrac{\vetoK^\ell=\vzero} - \lamtrue_i r \right) \cdot \ttK_i\right] \label{eq:misspec:ttk-second-order-term1} \\
    &\mspace{2mu}+ \frac{1}{2\toklim}  \cdot \E\left[\sumL \sumka \ratetjm{\vek', \vea}{\syshat{\veK}^\ell} \left(a_i^2 - (\dtvjg_i)^2\right)\indibrac{\vetoK^\ell=\vzero} + \lamtrue_i r \cdot (1-(\dtof_i)^2)\right]. \label{eq:misspec:ttk-second-order-term2}
\end{align}

In the third step, we use the above two equalities to bound $\E\left[\sumL \sumka \ratetjm{\vek', \vea}{\syshat{\veK}^\ell} \dtvjg_i \indibrac{\vetoK^\ell=\vzero} \right]$ and $\E\left[\lamtrue_i r \cdot \dtof_i \right]$. We first use the equality in \eqref{eq:misspec:ttk-second-order} to \eqref{eq:misspec:ttk-second-order-term2} to bound $\E\left[\sumL \sumka \ratetjm{\vek', \vea}{\syshat{\veK}^\ell} \dtvjg_i \indibrac{\vetoK^\ell=\vzero} \right]$. Following the same argument as the proof of \Cref{lem:sat:virtual-jobs} and \Cref{lem:sat:overflow-jobs} until \eqref{eq:sat:step-2-pf:drift-term-bound-combine}, we can show that
\[
    \eqref{eq:misspec:ttk-second-order-term2} \leq O\big(\sqrt{r}\big),
\]
\begin{align}
    \eqref{eq:misspec:ttk-second-order-term1} 
    &\leq 
    \E\left[\abs{\sumL \sumka \ratetjm{\vek', \vea}{\syshat{\veK}^\ell} a_i \indibrac{\vetoK^\ell=\vzero}  - \lamtrue_i r }\right] \nonumber \\
    &\leq \E\left[\abs{\sumL \sumka \ratetjm{\vek', \vea}{\syshat{\veK}^\ell} a_i  - \lamtrue_i r }\right] + O\big(\sqrt{r}\big)\nonumber \\
    &\leq \E\left[\abs{\sumL \sumka \ratetjm{\vek', \vea}{\sysbar{\veK}^\ell} a_i  - \lamtrue_i r }\right] + O\big(\sqrt{r}\big)+ \delta \cdot O\big(r\big), \label{eq:misspec:step-2-pf:drift-term-bound-apply-lemma8}
\end{align}
where to get \eqref{eq:misspec:step-2-pf:drift-term-bound-apply-lemma8}, we apply \Cref{lem:misspec:w1-distance} to replace $\syshat{\veK}^\ell$ with $\sysbar{\veK}^\ell$, which causes an $O(\sqrt{r}) + \delta \cdot O(r)$ error. 
Next, we show that
\begin{equation}\label{eq:misspec:step-2-pf:drift-term-intermediate-2}
    \E\left[\abs{\sumL \sumka \ratetjm{\vek', \vea}{\sysbar{\veK}^\ell} a_i  - \lamtrue_i r }\right] = O\big(\sqrt{r}\big)+ \delta \cdot O\big(r\big).
\end{equation}
By \Cref{assump:delta-accurate} and \Cref{lem:misspec:gamma-rates-approx}, we have
\begin{align}
    \label{eq:misspec:step-2-pf:arr-rate-approx}
    \abs{\lamtrue_i r - \lamest_i r} &\leq \delta r,
    \\
    \label{eq:misspec:step-2-pf:req-rate-approx}
    \Big\lvert \sumL \sumka \ratetjm{\vek', \vea}{\sysbar{\veK}^\ell} a_i 
    -  \sumL \sumka \rate{\vek', \vea}{\sysbar{\veK}^\ell} a_i \Big\rvert &\leq \delta \cdot O\big(r\big).
\end{align}
These two bounds allow us to replace $\ratetjm{\vek', \vea}{\sysbar{\veK}^\ell}$ and $\lamtrue_i r$ on the LHS of \eqref{eq:misspec:step-2-pf:drift-term-intermediate-2} with $\rate{\vek', \vea}{\sysbar{\veK}^\ell}$ and $\lamest_i r$ at the cost of introducing $\delta \cdot O(r)$ error. 
Moreover, because $\{\sysbar{\veK}^\ell\}_{\ell=1,\dots,L}$ are i.i.d., $\sumL \sumka \rate{\vek', \vea}{\sysbar{\veK}^\ell} a_i$ concentrates around its mean with $O(\sqrt{r})$ error, where the mean can be shown to be
\begin{equation*}
    \E\left[\sumL \sumka \rate{\vek', \vea}{\sysbar{\veK}^\ell} a_i\right] = \reqrate_i L = \lamest_i r + O(1).
\end{equation*}
Note that the first equality in \eqref{eq:misspec:step-2-pf:mean-of-sth-is-reqrate} holds because for each $\ell$, $\E[\sumka \rate{\vek', \vea}{\sysbar{\veK}^\ell} a_i]$ is equal to $\reqrate_i$, i.e., the long-run average rate of requesting type $i$ jobs on a single-server system with estimated parameters under $\sysbar{\policy}$; the second equality in \eqref{eq:misspec:step-2-pf:mean-of-sth-is-reqrate}  is because $L = \lceil \sysbar{\nact} \rceil$, and $\sysbar{\nact} \reqrate_i = \lamest_i r$. As a result,
\begin{equation}\label{eq:misspec:step-2-pf:mean-of-sth-is-reqrate}
     \E\left[\abs{\sumL \sumka \rate{\vek', \vea}{\sysbar{\veK}^\ell} a_i - \lamest_i r} \right] \leq O\big(\sqrt{r}\big).
\end{equation}
Combining \eqref{eq:misspec:step-2-pf:arr-rate-approx} to \eqref{eq:misspec:step-2-pf:mean-of-sth-is-reqrate}, we get \eqref{eq:misspec:step-2-pf:drift-term-intermediate-2}. Therefore, 
\begin{align*}
    \E[\tvJ_i] &\leq \servmax \E\left[\sumL \sumka \ratetjm{\vek', \vea}{\syshat{\veK}^\ell} \dtvjg_i \indibrac{\vetoK^\ell=\vzero} \right] \\
    &\leq O(1)\cdot \big(\eqref{eq:misspec:ttk-second-order-term1} 
    + \eqref{eq:misspec:ttk-second-order-term2}\big)\\
    &\leq O\big(\sqrt{r}\big)+ \delta \cdot O\big(r\big),
\end{align*}
which completes the proof of \Cref{lem:misspec:virtual-jobs}.

Finally, it is straightforward to show that $\E\left[\lamtrue_i r \cdot \dtof_i \right] = O(\sqrt{r}) + \delta \cdot O(r)$ using \eqref{eq:misspec:ttk-first-order} and the bound on $\E\left[\sumL \sumka \ratetjm{\vek', \vea}{\syshat{\veK}^\ell} \dtvjg_i \indibrac{\vetoK^\ell=\vzero} \right]$. Therefore, $\E[\toF_i] \leq \servmax \E\left[\lamtrue_i r \cdot \dtof_i \right]  = O(\sqrt{r}) + \delta \cdot O(r)$. This proves \Cref{lem:misspec:overflow-jobs}.

\end{document}
\endinput